\definecolor{blau}{rgb}{0.05,0.2,0.7}
\definecolor{auchblau}{rgb}{0.03,0.3,0.7}
\renewcommand{\Re}{\operatorname{Re}}
\renewcommand{\Im}{\operatorname{Im}}
\DeclareMathOperator{\supp}{supp}
\newcommand{\der}{\mathrm{d}}
\newcommand{\rmi}{\mathrm{i}}
\newcommand{\calO}{\mathcal{O}}
\newcommand{\calE}{\mathcal{E}}
\newcommand{\calB}{\mathcal{B}}
\newcommand{\calK}{\mathcal{K}}
\newcommand{\tr}{\mathrm{Tr}}
\newcommand{\Tr}{\mathrm{Tr}}
\newcommand{\rel}{\mathrm{rel}}	
\newcommand{\dist}{\mathrm{dist}}
\newcommand{\diag}{\mathtt{\Delta}}
\newcommand{\reg}{\mathrm{reg}}
\newcommand{\R}{\mathbb{R}}
\newcommand{\C}{\mathbb{C}}
\newcommand{\N}{\mathbb{N}}
\newcommand{\free}{\textrm{free}}
\newcommand{\loc}{\mathrm{loc}}
\newcommand{\calS}{\mathcal{S}}
\newcommand{\calD}{\mathcal{D}}
\newcommand{\calQ}{\mathcal{Q}}
\newcommand{\dd}{\mathrm{d}}
\newcommand{\ren}{\mathrm{ren}}
\newtheorem{theorem}{Theorem}[section]
\newtheorem{definition}[theorem]{Definition}
\newtheorem{lemma}[theorem]{Lemma}
\newtheorem{proposition}[theorem]{Proposition}
\newtheorem{rem}[theorem]{Remark}
\title[Casimir Forces]{A mathematical analysis of Casimir interactions I: The scalar field}
\author[Y.-L. Fang]{Yan-Long Fang}
\address{School of Mathematics,  University of Leeds,  Leeds , Yorkshire, LS2 9JT,
UK} \email{Y.L.Fang@leeds.ac.uk}
\thanks{Supported by Leverhulme grant RPG-2017-329}
\author[A. Strohmaier]{Alexander Strohmaier}
\address{School of Mathematics,  University of Leeds,  Leeds , Yorkshire, LS2 9JT,
UK} \email{a.strohmaier@leeds.ac.uk}
\begin{document}

\begin{abstract} Starting from the construction of the free quantum scalar field of mass $m\geq 0$ we give mathematically precise and rigorous versions of three different approaches to computing the Casimir forces between compact obstacles. We then prove that they are equivalent.
\end{abstract}

\maketitle


\section{Introduction}
\label{Introduction}
Casimir interactions are forces between objects such as perfect conductors. They can be either understood as quantum fluctuations of the vacuum or as the total effect of van-der-Waals forces.
Hendrik Casimir predicted and computed this effect in the special case of two planar conductors in 1948 using an infinite mode summation (\cite{Casimir}). This force was measured experimentally by Sparnaay 
about 10 years later \cite{sparnaay1958measurements}. Since then also more precise measurements have been performed
with good agreement to the theoretical prediction of Casimir(\cite{Bressi02,chan2001quantum,Ederth00,lamoreaux1997demonstration}). Other geometric situations, such as for example the Casimir force
between a sphere and a plane, were also considered in precision experiments \cite{mohideen1998precision, roy1999improved}. 

The classical way to compute Casimir forces, and indeed the way it was done by Casimir himself, is by performing a zeta function regularisation of the vacuum energy. This has been carried out for a number of particular geometric situations (see \cite{bordag2001,bordag2009advances,elizalde1989expressions,elizalde1990heat,kirsten2002} and references therein). Since this method requires knowledge of the spectrum of the Laplace operator in order to perform the analytic
continuation it has long been a very difficult problem to compute the Casimir force in a generic geometric situation even from a non-rigorous  point of view.
Already, it has been realised by quantum field theorists (see e.g. \cite{BrownMaclay,dzyaloshinskii1961general,DC1979,kay1979casimir,scharf1992casimir}) that the Casimir force can also be understood by considering the renormalised stress energy tensor
of the electromagnetic field. This tensor is defined by comparing the induced vacuum states of the quantum field with boundary conditions and the free theory. 
Once the renormalised stress energy tensor is mathematically defined, the computation of the Casimir energy density becomes a problem of spectral geometry (see e.g. \cite{fulling2007vacuum}) and numerical analysis.
The renormalised stress energy tensor and its relation to the Casimir effect can be understood at the level of rigour of axiomatic algebraic quantum field theory. Currently this is still the subject of ongoing research in mathematical physics (see e.g. \cite{dappiaggi2014casimir})
in particular when it comes to the effect in curved backgrounds and fields that are not scalar, or in situations when the objects move.

Recently progress was made in the non-rigorous numerical computation of Casimir forces between objects (see for example \cite{johnson2011numerical,reid2009efficient,rodriguez2011casimir}). This approach uses a formalism that relates the Casimir energy to a determinant computed from boundary layer operators. Such determinant formulae result in finite quantities that do not require further regularisation and have been obtained and justified in the physics literature \cite{EGJK2017,EGJK2007,EGJK2006,EGJK2008,emig2008casimir, kenneth06, kenneth08, milton08,EGJK2009,renne71}.
This has not only resulted in more efficient numerical algorithms but also in various asymptotic formulae for the Casimir forces for large and small separations. 
Many of the justifications and derivations of these formulae are based on physics considerations of macroscopic properties of matter or of van der Waals forces. As such they often involve ill defined path integrals. From a mathematical point of view considerations that link the determinant formulae to the spectral approach initially taken by Casimir have been largely formal computations or involve ad-hoc cut-offs and regularisation procedures. We note that in the Appendix of \cite{kenneth08} it is proved correctly that the Fredholm determinant in the final formula of the Casimir energy is well defined. Only recently a mathematical justification of these formulae was given in \cite{RTF}, relating it to the trace of a linear combination of powers of the Laplace operator. The determinant formulae are directly related to the multi-reflection expansion of Balian and Duplantier \cite{balian78} that also yields a finite Casimir energy. We mention \cite{kirstenlee18} in the mathematical literature where the Casimir energy of a piston configuration is expressed in terms of the zeta regularised Fredholm determinant of the Dirichlet to Neumann operator.

Summarising, we list several ways to compute the Casimir force acting on a compact object that have been proposed and carried out:
\begin{itemize}
 \item[(1)] Using a total energy obtained in some way by regularising the spectrally defined zeta function. This can be done either directly or by first considering a compact problem in a box and then taking the adiabatic limit. 
  \item[(2)] By integrating the renormalised stress energy tensor around any surface enclosing the object.
 \item[(3)] Using  formulae for the energy in terms of a determinant of boundary layer operator.
 \end{itemize}
The list is non-exhaustive and other methods exist, such as for example the worldline approach (see e.g. \cite{Gies07} and references). Here we will restrict ourselves to the listed methods.

The aim of the present paper is to establish, in the case of finitely many compact objects, the precise mathematical meaning of each of the listed methods for the case of the scalar field and then prove that they give the same answer for the force (not necessarily for the energy). The main tool to achieve this will be the {\sl relative stress energy tensor}. This tensor mimics the definition of the relative trace of \cite{RTF} and seems to not have been defined or studied previously in the literature. Note that the renormalised stress tensor becomes unbounded and non-integrable \cite{Candelas1982,DC1979} near the boundaries of objects and this makes it unsuitable to compute the total energy component from the energy density $T_{00}$. In contrast, the relative stress energy tensor is smooth up to the obstacles and is much more regular when considering boundary variations. This relative stress energy tensor does not satisfy Dirichlet or Neumann boundary conditions and therefore integration by parts involves boundary contributions.
The relative energy density can be defined entirely in terms of functional calculus of the Laplace operator. This relative energy density has been introduced in \cite{RTF}. It was shown to be  integrable and its integral can be interpreted as the trace of a certain operator. The main result of \cite{RTF} states that this trace can be expressed as the determinant of an operator constructed from boundary layer operators, thus providing a rigorous justification of the method (3) linking it with method (1). To show the equivalence of methods (2) and (3) we must provide a formula for the variation of the relative energy when one of the objects is moved. To compute this variation we prove and use a special case of the Hadamard variation formula (\cite{GarabedianSchiffer,Hadamard,Peetre}) adapted to the non-compact setting. We then show that as a consequence of this formula that the variation of the total energy equals the surface integral of the spatial components of the relative stress energy tensor (see Theorem \ref{Equivalent thm}). This surface integral is also equal to the surface integral over the 
renormalised tensor (see Theorem \ref{Equivalent thm 1}). We will now give a more precise formulation of our main result.
\\
We consider $d$-dimensional Euclidean space with $d \geq 2$.
Let $\calO$ be a bounded open subset of $\R^d$ with smooth boundary such that the complement $\mathcal{E} = \R^d \setminus \overline{\calO}$ is connected.
The domain $\calO$ will be assumed to consist of $N$ many connected components $\calO_1,\ldots,\calO_N$. 
The space $X = \R^d \setminus \partial \calO$ therefore consists of the $N+1$-many connected components
$\calO_1,\ldots,\calO_N,\mathcal{E}$. 
We think of $\calO$ as obstacles placed in $\R^d$, and $\mathcal{E}$ then corresponds to the exterior region of these obstacles.
The set $X \subset \R^d$ consists of the interior and the exterior of the obstacles, separated by $\partial \calO$. See Figure \ref{threeobstacles} for an example with three obstacles.

\begin{figure}[h]
	\centering
	\includegraphics[clip, width=5cm]{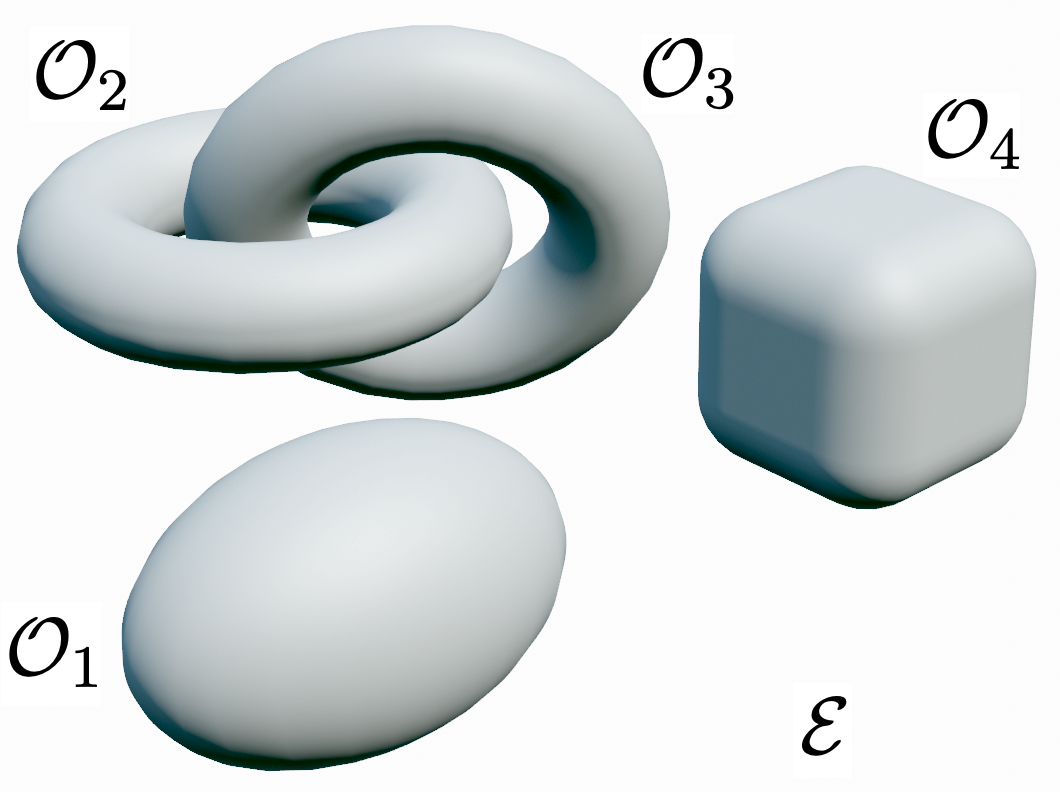}
	\caption{Configuration of four obstacles $\calO_1,\calO_2,\calO_3,\calO_4$ with complement $\calE$ in a Euclidean space}
	\label{threeobstacles}
\end{figure}

For the free scalar field of mass $m \geq 0$ let $T_\ren$ be the renormalised stress energy tensor as defined in Section \ref{The renormalised stress-energy tensor}. In QFT terms this stress energy tensor is equivalent to the usual stress energy tensor obtained from canonical quantisation when normal ordering is used with respect to the free vacuum state.
This is a smooth symmetric two-tensor away from $\partial \mathcal{O}$, but it is singular at $\partial \mathcal{O}$ and the integral of $T_{00}$ over $\mathcal{E}$ does not converge. 
Let $T_\rel$ be the relative stress energy tensor as given in Definition \ref{relstress}.
The relative total energy $E_\rel$ is defined
as the integral of $(T_\rel)_{00}(x)$ over $\R^d$ which can be shown to exist and to be equal the trace of a certain combination of operators. In case $m>0$ the regularised energy $E_\reg$ is defined in Section \ref{zetaregen}, Definition \ref{regenergy} via zeta function regularisation.
We would like to compute the force on an object $\mathcal{O}_j$ due to the presence of the other objects.
Approach (2) is to directly compute
$$
 F_i = \int_\Sigma (T_\ren)_{ik} n^k \der\sigma
$$
where $T_\ren$ is the renormalised stress energy tensor given in Definition \ref{def of Tren} and $\Sigma$ is any smooth surface enclosing $\mathcal{O}_j$ (i.e. homologous to $\partial\mathcal{O}_j$ in $\overline{\calE}$), $n^k$ is the exterior unit normal vector field, and $\der\sigma$ is the surface measure on $\Sigma$. If energy conservation holds, i.e. no energy is radiated off to infinity, one expects this force to be the directional derivative of the total energy when $\mathcal{O}_j$ is moved rigidly. Thus, let $Z$ be a constant vector field on $\R^d$. Let $Y$ be a vector field that equals $Z$ near $\mathcal{O}_j$ and vanishes near $\mathcal{O}_l$ when $l \not= j$. The vector field generates a flow $\Phi_\epsilon$ that, for $\epsilon$ near zero, moves the object $\mathcal{O}_j$ rigidly and we end up with a configuration that depends on the parameter $\epsilon$. The total energies $E_\rel$ and $E_\reg$ then also depend on $\epsilon$ in that way and become functions of $\epsilon$ in a small interval around zero.  The change of these energies with respect to the flow can be interpreted as the change of energy needed to move object $\calO_j$ rigidly relative to the other objects in the direction of $Z$.
Our main result may be stated as follows.

\begin{theorem}
	\label{main thm}
 The relative energy $E_\rel(\epsilon)$ is differentiable in $\epsilon$ at $\epsilon=0$ and its derivative equals $F_i Z^i$, where
 \begin{equation}
 \label{main thm eqn1}
  F_i = \int_\Sigma (T_{\rel})_{ik} n^k \der \sigma= \int_\Sigma (T_{\ren})_{ik} n^k \der \sigma.
 \end{equation}
 Moreover,
 \begin{equation}
 \label{main thm eqn2}
  E_\rel = \frac{1}{2\pi} \int_{m}^\infty \frac{\omega}{\sqrt{\omega^2-m^2}} \log \det Q_\rel(\omega) \der \omega,
 \end{equation}
 where $Q_\rel(\omega) = Q_{\rmi \omega} (\tilde Q_{\rmi \omega})^{-1}$ is defined in detail in Section \ref{The relative trace-formula and the Casimir energy} and constructed out of boundary layer operators for the Laplacian. If $m>0$ then $E_\rel(\epsilon) - E_{\reg}(\epsilon)$ is constant near $\epsilon=0$.
\end{theorem}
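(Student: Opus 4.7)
The plan has three parts, one for each assertion in the statement, with the bulk of the work concentrated on the differentiability claim.

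I would begin by writing $E_\rel(\epsilon)$ as the operator trace supplied by Definition \ref{relstress} and Section \ref{The relative trace-formula and the Casimir energy}, which represents the relative energy as a functional-calculus trace built from the Dirichlet (or Neumann) Laplacian of the configuration $\Phi_\epsilon(\calO_j)\cup \bigcup_{l\neq j}\calO_l$ minus its free-space counterpart. To differentiate this expression in $\epsilon$ I would apply a Hadamard-type variation formula adapted to the non-compact setting, which is the key technical input announced in the introduction. Because $Y$ is supported near $\calO_j$ only, the variation localises to a boundary integral over $\partial\calO_j$, and identifying the integrand that comes out of the Hadamard formula with the tangential trace of the relative stress-energy tensor gives $\partial_\epsilon|_{\epsilon=0} E_\rel = \int_{\partial\calO_j} (T_\rel)_{ik} n^k Z^i\,\der\sigma$. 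The main technical obstacle will be the Hadamard formula itself, and in particular justifying the interchange of $\partial_\epsilon$ with both the trace and the contour integral representing the functional calculus. The boundary regularity of $T_\rel$ emphasised in the introduction — precisely the feature that distinguishes it from $T_\ren$, which is singular at $\partial\calO$ — is what allows this interchange.

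For the first equality in \eqref{main thm eqn1} I would use that $T_\rel$ is smooth on $\overline{\calE}$ and conserved on $\calE$ (divergence-freeness being the usual consequence of the Helmholtz equation via the stress-energy identity); Stokes' theorem applied on the region of $\calE$ bounded by $\partial\calO_j$ and any homologous surface $\Sigma$ then gives equality of the two flux integrals. For the second equality, the difference $T_\ren - T_\rel$ is built from the free Euclidean vacuum on $\R^d$ and is therefore a smooth, translation-invariant, divergence-free symmetric $2$-tensor defined on all of $\R^d$; its flux through the closed hypersurface $\Sigma$ vanishes by the divergence theorem applied to the bounded region enclosed by $\Sigma$, which is simply connected and contains no obstacle singularities of this tensor.

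For the determinant formula \eqref{main thm eqn2} I would invoke the relative trace formula of \cite{RTF}: $E_\rel=\int (T_\rel)_{00}\,\der x$ is the relative trace of $\tfrac12\sqrt{-\Delta+m^2}$, which I would represent through a contour integral in the spectral parameter, deform onto the imaginary axis where $Q_\rel(\omega)=Q_{\rmi\omega}(\widetilde Q_{\rmi\omega})^{-1}$ is defined, and apply the Krein-type identity that expresses the relative spectral shift as $\frac{1}{2\pi\rmi}\partial_\omega\log\det Q_\rel(\omega)$ via boundary layer operators. Collapsing the contour onto the cut $[m,\infty)$ and integrating by parts in $\omega$ produces the weight $\omega/\sqrt{\omega^2-m^2}$; this is essentially the main theorem of \cite{RTF} reformulated in the present notation.

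For the final assertion, I would compare Definition \ref{regenergy} with Definition \ref{relstress}. Both $E_\rel$ and $E_\reg$ can be expressed through the spectrum of $-\Delta+m^2$ with the obstacle, and their difference consists of renormalisation counterterms — heat-kernel coefficients on each individual $\partial\calO_l$ together with a free Euclidean contribution — that depend only on the intrinsic geometry of each $\calO_l$ and on $m$, not on the relative positions of the obstacles. Because the rigid flow $\Phi_\epsilon$ preserves the intrinsic geometry of every component, these counterterms are $\epsilon$-independent, and therefore $E_\rel(\epsilon)-E_\reg(\epsilon)$ is locally constant near $\epsilon=0$.
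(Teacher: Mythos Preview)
Your overall plan matches the paper's structure closely: the Hadamard variation formula is indeed the engine behind differentiability and the identification of $\delta_Y E_\rel$ with a boundary flux of $T_\rel$; the determinant formula is the main result of \cite{RTF} imported as Theorem~\ref{Erel thm}; and the constancy of $E_\rel-E_\reg$ comes from the locality of the heat coefficients exactly as you describe.

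There is, however, a genuine gap in your argument for the second equality in \eqref{main thm eqn1}. You assert that $T_\ren-T_\rel$ is ``built from the free Euclidean vacuum'' and is a smooth, translation-invariant tensor on all of $\R^d$. This is false: by Definition~\ref{relstress}, $T_\ren-T_\rel=\sum_{i=1}^N (T_\ren)_{\calO_i}$, a sum of single-obstacle renormalised tensors. Each $(T_\ren)_{\calO_i}$ is singular (non-integrably so) at $\partial\calO_i$, and the sum is certainly not translation-invariant. In particular, the region enclosed by $\Sigma$ contains $\calO_j$, so $(T_\ren)_{\calO_j}$ is singular there and your divergence-theorem argument on the bounded interior of $\Sigma$ fails.

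The paper's fix (Theorem~\ref{Equivalent thm 1}) treats the summands separately. For $i\neq j$, the tensor $(T_\ren)_{\calO_i}$ \emph{is} smooth on the region enclosed by $\Sigma$ (since $\calO_i$ lies outside), and your divergence argument goes through. For $i=j$ one cannot work inside $\Sigma$; instead one deforms $\Sigma$ \emph{outward} to a large sphere $S_R$ through the exterior of $\calO_j$, where $(T_\ren)_{\calO_j}$ is smooth and divergence-free, and then shows $\int_{S_R}(T_\ren)_{\calO_j}\to 0$ as $R\to\infty$ using the pointwise decay estimate $|(T_\ren)_{\calO_j}(x)|\leq C\,\mathrm{dist}(x,\partial\calO_j)^{-2d+1}$ derived from resolvent bounds. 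This decay-at-infinity step is essential and is missing from your outline.
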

 
We note that this mathematical theorem simply shows that all these proposed computational methods give the same Casimir interactions in the case of separated rigid bodies. The statement does not say anything about the actual origin of the Casimir force or its existence, which needs to be determined from experiments or physics considerations. 
There is however a strong argument for the expression
$$
  F_i = \int_\Sigma (T_{\rel})_{ik}  n^k \der \sigma
 $$
to be a directly relevant physical quantity. Our point of view is that the stress energy tensor does not have an absolute meaning in this context, but rather is used to compare two vacuum states (normal ordering depends on a comparison state). If we would like to know the effect for the rigid objects $\mathcal{O}_2,\ldots \mathcal{O}_N$ on the rigid object $\mathcal{O}_1$ the states to compare are not the ground state with Dirichlet conditions and the free ground state. It is rather the vacuum states obtained from the Laplacian with Dirichlet conditions imposed on $\mathcal{O}_1$ alone and with Dirichlet conditions on all the objects. The comparison of these two states yields a stress energy tensor that is completely regular near $\mathcal{O}_1$, and the computation of the force based on this tensor leads directly to the above formula without regularisation.
 
The paper is organised as follows. In Section \ref{Sec:2} we review the rigorous construction of the free scalar field of mass $m\geq 0$ in the presence of boundaries and show how this leads to a natural definition of the renormalised stress energy tensor, which is given in Section \ref{The renormalised stress-energy tensor}. We also review its most important properties and express it in terms of spectral quantities for the Laplace operator. Section \ref{The relative trace-formula and the Casimir energy} introduces the relative setting and gives the definition of the relative stress energy tensor and its basic properties. Some norm estimates on the relative resolvent are given in Section \ref{Estimates on the relative resolvent}, which provides mathematical justifications for later proofs. In Section \ref{Had:Sec} we prove a Hadamard variation formula and compute the variation of the relative energy to establish the first part of the main theorem. In Section \ref{zetaregen} we show that for $m>0$ the renormalised version of the zeta function has a meromorphic continuation and can thus be used to define the regularised energy. This section also contains a proof that variations of the regularised energy and the relative energy coincide. To illustrate the method and relate it to the classical computations we treat the easier example of the one-dimensional Casimir energy explicitly in Appendix \ref{1-D relative setting}. This example also illustrates that a divergence term for the time-component of the renormalised stress energy tensor that is normally neglected needs to be taken into account to obtained the correct result (see Remark \ref{divremark}).
 
In a follow-up paper we will establish a similar theorem for the electromagnetic field. We note here that the stress energy for the electromagnetic field is quite different from the scalar field and there are additional complications such as zero modes (\cite{MR4162947}) that are absent for the scalar field. Moreover, the boundary conditions for the electromagnetic field are slightly more complicated, and cannot be reduced to Dirichlet boundary conditions.
We therefore decided to not attempt a unified treatment which would obscure the result by additional notations.

Our approach is expected to carry over to other boundary conditions such a Neumann, mixed Dirichlet-Neumann, or Robin boundary conditions with the single layer operators replaced by the appropriate layer operators. As in the electromagnetic case additional technical problems need to be overcome in these cases due to the possible appearance of zero modes and singularities of the Dirichlet-to-Neumann map at zero.

\subsection{Notations}
\label{Notations}
Let $M \subset \R^d$ be an open subset.
By the Schwartz kernel theorem continuous linear operators $A: C^\infty_0(M) \to \mathcal{D}'(M)$ are in one to one correspondence to distributions in $\mathcal{D}'(M \times M)$, i.e. for every such $A$ there exists a unique Schwartz kernel in  $\mathcal{D}'(M \times M)$. In this paper the Schwartz kernel of $A$ will be denoted by $\breve A$.

\section{Scalar quantum field theory with Dirichlet boundary conditions} \label{Sec:2}

Let $-\Delta$ be the (positive) Dirichlet Laplacian imposed on the codimension one submanifold $\partial \calO$. By definition this is the unbounded self-adjoint operator defined  on the Hilbert space $L^2(\R^d)$ associated with the Dirichlet quadratic form $q_D(f,f) = \int_X \| \nabla f \|^2 \dd x $ with form-domain being the Sobolev space $H^1_0(X)$.

As a consequence of elliptic regularity (\cite{ShubinPseudos}*{Section 7.2})  we have for the domain of $\mathrm{dom}((-\Delta)^{\frac{s}{2}})$ equipped with its graph norm for any $s \in 2 \N_0$ the continuous inclusions
$$
 H^{s}_{\mathrm{comp}}(X) \subset \mathrm{dom}((-\Delta)^{\frac{s}{2}}) \subset H^{s}_{\loc}(X).
$$
One can use complex interpolation (\cite{PseudoTaylor}*{\S 4 and Theorem 4.2}) to extend this to any $s \geq 0$.
Here $H^{s}_{\mathrm{comp}}(X)$ denotes the space of functions in $H^{s}(X)$ with compact support in $X$. 

The Hilbert space $L^2(\R^d)$ then decomposes into a direct sum $$L^2(\R^d) = L^2(\calO_1) \oplus \ldots \oplus  L^2(\calO_N) \oplus L^2(\mathcal{E})$$ and each subspace is an invariant subspace for $-\Delta$  in the sense that any bounded function of the operator as defined by spectral calculus will leave these subspaces invariant.
The restriction of $-\Delta$ to $L^2(\calO_j)$ is the Dirichlet Laplacian on the interior of $\calO_j$ and therefore has compact resolvent. The restriction of $-\Delta$ to $L^2(\mathcal{E})$ has purely absolutely continuous spectrum $[0,\infty)$.
By comparison we also have the free Laplacian $-\Delta_\free$ on $\R^d$ which corresponds to the case $\calO=\emptyset$. Throughout the paper we fix a mass parameter $m \geq 0$.
\begin{definition}
The relativistic Hamiltonian $H$ is defined to be the self-adjoint operator $H = (-\Delta + m^2)^{\frac{1}{2}}$.
\end{definition}

The space-time $M$ we consider is the Lorentzian spacetime $\R \times X$ with Minkowski metric. The forward and backward fundamental solutions $G_{+/-} : C^\infty_0(M) \to C^\infty(M)$ of the Klein-Gordon operator $\Box + m^2$ with Dirichlet boundary conditions are given by

\begin{gather*}
 (G_+ f)(t,x)=\int \left(\theta(t-t') H^{-1} \sin(H(t-t')) f(t',\cdot)\right)(x) \dd t',\\
 (G_- f)(t,x)=\int \left(\theta(t'-t) H^{-1} \sin(H(t'-t)) f(t',\cdot)\right)(x) \dd t',
\end{gather*}
where $\theta$ is the Heaviside step function. As usual in canonical quantisation one considers the difference $G = G_+ - G_-$ given by
\begin{gather*}
 (G f)(t,x)=\int \left( H^{-1} \sin(H(t-t')) f(t',\cdot)\right)(x) \dd t'.
\end{gather*}
Here $H^{-1} \sin(H(t-t'))$ is defined by spectral calculus. Since the function $| x^{-1} \sin(x t) |$ is bounded by $|t|$ the operator
 $H^{-1} \sin(H(t-t'))$ defines for any $s\geq 0$ a bounded map from $H^s_{\mathrm{comp}}(X)$ to $\mathrm{dom}(-\Delta)^s \subset
 H^s(\calO_1) \oplus \ldots H^s(\calO_N) \oplus H^s(\calE)$. Here the inclusion of the domain in the Sobolev spaces follows for $s \in 2 \N_0$ from elliptic regularity up to the boundary (\cite{McLean2000}*{Theorem 4.18}) and for general $s \geq 0$ by interpolation.
In particular, this means that $H^{-1} \sin(H(t-t'))$ has a distributional integral kernel.
We can define a symplectic structure $\sigma$ on $\mathcal{W}=C^\infty_0(M)/\left( (\Box+m^2)C^\infty_0(M) \right)$
by
$$
 \sigma([f],[g])= ( f, G g ).
$$
This induces a symplectic structure on $G C^\infty_0(M)$ that is well known to coincide with the standard symplectic structure on the space of solutions. Indeed, if we define $u = G f$ and $v = G g$ for $f,g \in C^\infty_0(M)$ then $u$ and $v$ solve the Klein-Gordon equation with Dirichlet boundary conditions and
$$
 ( f, G g ) = \int_X (\partial_t u) v - (\partial_t v) u\dd x.
$$
In this equality the right hand side is independent of $t$.

\subsection{Field algebra and the vacuum state}

The field algebra of the real Klein-Gordon field is then the (unbounded) CCR $*$-algebra of this symplectic space. 
Instead of using the symplectic space $\mathcal{W}$ one can describe this algebra $\mathcal{A}$ directly as the complex unital $*$-algebra
generated by symbols $\phi(f)$ with $f \in C^\infty_0(M,\R)$ satisfying the relation
\begin{align*}
&f \to \phi(f) \quad \textrm{is real linear},\\
&[\phi(f_1),\phi(f_2)] = -\rmi ( f_1, G f_2 ) \mathbf{1}\\
&\phi(f)^*=\phi(f),\\
&\phi((\Box + m^2) f)=0.
\end{align*}
Physical states of this quantum system are states on this $*$-algebra.
The construction and physical interpretations of such states usually relies on
a Fock representation of $\mathcal{A}$. This representation is chosen on physical
grounds as a positive energy representation. 

We briefly explain this now.
The group of time translations $(T_s f)(t,x)=f(t-s,x)$ commutes with $\Box + m^2$
and $G$ and therefore $\alpha_t(\phi(f)):=\phi(T_t f)$ defines a group of $*$-automorphisms
of $\mathcal{A}$. If a state $\omega:\mathcal{A} \to \C $ is invariant then this group lifts to a group of unitary
transformations $U(t)$ on the GNS-Hilbert space which is uniquely determined by
\begin{align*}
 &\pi(\alpha_t(a))=U_{t} \pi(a) U_{-t},\\
 &U_t \Omega = \Omega.
\end{align*}
We say that $\pi$ is a positive energy representation if this group is strongly continuous
and its infinitesimal generator has non-negative spectrum. 

We will focus in this paper on the quasi-free ground state. This means that the state is completely determined by its two point distribution
$$\omega_2 \in \mathcal{D}'(M \times M),\quad \omega_2(f_1 \otimes f_2) = \omega \left( \phi(f_1) \phi(f_2) \right)$$ which is given explicitly as
$$
  \omega_2(f_1 \otimes f_2) =\int_{\R \times \R} ( f_1(t,\cdot), \frac{1}{2}H^{-1}e^{-\rmi H (t-t')}  f_2(t',\cdot) ) \:\der t \; \der t',
$$
i.e. $\omega_2$ is the integral kernel of the operator $H^{-1} e^{-\rmi H(t-t')}$. In case $m=0$ the spectrum of $H$ contains zero and  this expression needs to be interpreted in the sense of quadratic forms. Namely,
it follows from general resolvent expansions
(for example \cite{MR4162947}*{Theorem 1.5, 1.6 and 1.7})
 that $C^\infty_0(X)$ is contained in the domain of the operator $H^{-\frac{1}{2}}$. This follows from the formula
 $$
  \langle H^{-\frac{1}{2}} f, H^{-\frac{1}{2}} f \rangle  = \frac{2}{\pi}  \int_0^\infty \langle (-\Delta + \lambda^2)^{-1} f, f \rangle \der \lambda.
 $$ In particular, the space of test functions $C^\infty_0(X)$
is contained in the form domain of $H^{-1}$ and therefore, by the Schwartz kernel theorem, the operator $H^{-1}$ has a distributional kernel in $\mathcal{D}'(X \times X)$. This is of course also the case for general $m>0$. We will denote the integral kernel of $H^{-1}$
by $\breve H^{-1}$, mildly abusing notation.

 One can check directly that $\omega_2$ defines a positive energy representation. Instead of using $-\Delta$ we could also have used $-\Delta_\free$, the free Laplace operator. This also defines a positive energy state on the free algebra of observables $\mathcal{A}_\free$ which we will denote by $\omega_{\free}$, and similarly we use the notation $H_\free= (-\Delta_\free + m^2)^\frac{1}{2}$. There states can be compared by restricting them to certain subalgebras that are contained in both the algebra of observables and the free algebra of observables. For example if $U$ is contained in a double cone in $M$ then $\mathcal{A}(U)$ which is generated by $\phi(f), \mathrm{supp}(f) \subset U$ can be thought of as a subset of both $\mathcal{A}$ and $\mathcal{A}
 _\free$ and therefore both states can be restricted to this algebra.

\subsection{The renormalised stress-energy tensor}
\label{The renormalised stress-energy tensor}
The classical stress energy tensor of the Klein-Gordon field for a smooth real-valued solution $u$ is given by
$$
 T^{cl}(u) =\der u \otimes \der u - \frac{1}{2} g(\der u, \der u) g - \frac{1}{2} g\, m^2 u^2.
$$
This is a symmetric $2$-tensor on $M$ and one can easily show, using the Klein-Gordon equation, that it is divergence-free.
Here $g$ is the Minkowski metric on $M$ with signature $(-1,1,\ldots,1)$. The Euclidean metric on $\R^d$ will be denoted by $h$. The components $T_{ij}^{cl}$ of the stress energy tensor are the restrictions 
to the diagonal of the functions $Q^{cl}_{ij}(x,y)$ defined on $M \times M$ by
$$
 Q^{cl}_{ij}(x,y)= (\nabla_i u)(x) \otimes (\nabla_j u)(y) - \frac{1}{2} g_{ij} (\nabla_k u)(x) \cdot \nabla^k u(y)  - \frac{1}{2} g_{ij} m^2 u(x) u(y).
$$
The quantum field theory counterpart of $Q$ can be written in the field algebra
as a field-algebra-valued bilinear form in the test functions $f_1,f_2 \in C^\infty(M)$ as 
$$
 Q_{ij}(f_1 \otimes f_2) = \phi( \nabla_i f_1) \phi(\nabla_j f_2) -  \frac{1}{2}  \phi(\nabla_k f_1) \phi(\nabla^k f_2)  g_{ij} -  
 \frac{1}{2} g_{ij} m^2 \phi(f_1) \phi( f_2)   .
$$
The expectation value of $Q_{ij}$ with respect to the state $\omega$ is then given in terms of the two point function $\omega_2$ as
\begin{equation*}
 \omega(Q_{ij})(f_1,f_2) = \omega_2( \nabla_i f_1 \otimes \nabla_j f_2 -  \frac{1}{2}  \nabla_k f_1\otimes  \nabla^k f_2  g_{ij} -  \frac{1}{2} g_{ij} m^2 f_1 \otimes f_2 ).   
\end{equation*}

Let $K(t,x,y)$ be the distributional kernel of $\frac{1}{2}H^{-1}e^{-\rmi Ht}$. Then the distribution $\omega(Q_{ij})$ with respect to the ground state is given by
\begin{equation*}
\left\{
\begin{aligned}
&\omega(Q_{00})(t,t',x,x')=\frac{1}{2}\left(\partial_t \partial_{t'}+\sum_{k=1}^d \partial_{x^k}\partial_{x'^k} +m^2 \right)K(t-t',x,x')\\
&\omega(Q_{0i})(t,t',x,x')=\left(\partial_{t'}\partial_{x^i}\right)K(t-t',x,x')\quad \text{for}\quad 1\le i\le d\\
&\omega(Q_{ij})(t,t',x,x')=\left(\partial_{x'^i}\partial_{x^j}\right)K(t-t',x,x')\quad \text{for}\quad i\ne j \quad \text{and}\quad 1\le i,j \le d\\
&\omega(Q_{ii})(t,t',x,x')=\frac{1}{2}\left(\partial_t \partial_{t'}+\partial_{x^i}\partial_{x'^i}-\sum_{k\ne i}^d \partial_{x^k}\partial_{x'^k} -m^2 \right)K(t-t',x,x')\quad \text{for}\quad 1\le i \le d
\end{aligned}
\right.
\end{equation*}
or in short,
\begin{equation} \label{kernel of omega}
\omega(Q_{ij})(t,t',x,x')=\left[\partial_i'\partial_jK-\frac{1}{2}g_{ij}(\partial_k'\partial^k+m^2)K\right](t-t',x,x').
\end{equation}

The above expressions are formal and make sense only when paired with test functions. We will use such formal notation throughout the paper when there is danger of confusion.
The expectation value of the stress energy tensor would correspond to the restriction of  $\omega(Q_{ij})$ to the diagonal as a distribution. Unfortunately, the distribution $\omega(Q_{ij})$ is singular and cannot be restricted to the diagonal in a straightforward manner. If one is interested in relative quantities only then one can still define the renormalised expectation value of the stress energy tensor between the states. Both states $\omega$ and $\omega_\free$ are positive energy states and therefore satisfy the Hadamard condition (for example \cite{SVW02}*{Theorem 6.3}). By uniqueness of such Hadamard states the difference of the two-point distributions is smooth near the diagonal in $M \times M$. In the present case this can also be seen more directly as follows.
Let $K_\free(t,x,y)$ be the kernel of $\frac{1}{2}H_\free^{-1}e^{-\rmi H_\free t}$. We will consider the difference
$\calK(t,x,y) =K(t,x,y) - K_\free(t,x,y)$.

\begin{theorem}
	\label{smooth K thm}
 The distribution $\calK$ is smooth near the set $\{ (0,x,y) \mid x,y \in X\} \subset \R \times X \times X$. In particular $\breve H^{-1} - \breve H^{-1}_\free$ is smooth in $X \times X$.
\end{theorem}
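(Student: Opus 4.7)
The task splits naturally into two parts: first, establishing the time-zero statement that $\breve H^{-1} - \breve H_\free^{-1}$ is smooth on $X \times X$; and second, propagating that smoothness to a neighbourhood of $\{0\} \times X \times X$ using the Klein--Gordon equation.

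For the time-zero statement my plan is to use the spectral representation
\[
 H^{-1} = \frac{2}{\pi} \int_0^\infty (H^2 + \lambda^2)^{-1} \der\lambda
\]
(and the analogous formula for $H_\free$), which follows from the identity $A^{-1/2} = \frac{1}{\pi}\int_0^\infty \mu^{-1/2}(A+\mu)^{-1}\der\mu$ applied to $A = H^2$, after substituting $\mu = \lambda^2$. Fix $y \in X$ and $\lambda > 0$, and write $G_\free(\lambda;x,y)$ for the free massive Green's function. The kernel difference $R_\lambda(\cdot,y) := [(-\Delta + m^2 + \lambda^2)^{-1} - (-\Delta_\free + m^2 + \lambda^2)^{-1}](\cdot,y)$ is then the unique $v \in H^1_0(X)$ solving
\[
 (-\Delta + m^2 + \lambda^2) v = 0 \text{ in } X, \qquad v|_{\partial\calO} = -G_\free(\lambda;\cdot,y)|_{\partial\calO}.
\]
Since $G_\free(\lambda;\cdot,y)$ is smooth off its pole at $y \in X$, its Dirichlet trace is smooth on $\partial\calO$ and depends smoothly on $y \in X$, so elliptic regularity up to the boundary gives $v \in C^\infty(\overline{\calE}) \cap \bigcap_j C^\infty(\overline{\calO_j})$ with smooth joint dependence on $y$. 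Hence $R_\lambda \in C^\infty(X \times X)$ for each $\lambda$. To pass this through the spectral integral I use the Bessel-function asymptotics $G_\free(\lambda;x,y) \lesssim \lambda^{(d-2)/2} e^{-\lambda|x-y|}$ as $\lambda \to \infty$, valid for $|x - y|$ bounded below; combined with uniform-in-$\lambda$ a priori estimates for the coercive Dirichlet problem (Agmon-type, exploiting the $\lambda^2$ term), this yields
\[
 \|R_\lambda\|_{C^k(K \times K')} \le C_{k,K,K'} (1 + \lambda)^{N_k} \ee^{-c\lambda}
\]
for compacts $K, K' \Subset X$, with $c > 0$. Absolute convergence of the spectral integral in every such $C^k$ norm yields $\breve H^{-1} - \breve H_\free^{-1} \in C^\infty(X \times X)$. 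In the massless case $m = 0$ the integrand must additionally be controlled as $\lambda \to 0^+$, which is handled by the low-energy resolvent expansion cited just after the definition of $\omega_2$.

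With the time-zero statement secured, extension to a neighbourhood of $\{0\} \times X \times X$ proceeds by propagation of singularities. Viewed as a distribution on $\R \times X \times X$, the kernel $\calK$ satisfies $(\partial_t^2 - \Delta_x + m^2)\calK = 0$ in the first pair of variables (and likewise in $y$) because $H^2$ and $H_\free^2$ act as the same differential operator $-\Delta + m^2$ in the interior. The Cauchy data at $t = 0$ read
\[
 \calK|_{t=0} = \tfrac{1}{2}\bigl(\breve H^{-1} - \breve H_\free^{-1}\bigr), \qquad \partial_t \calK|_{t=0} = -\tfrac{\rmi}{2}\delta(x-y) + \tfrac{\rmi}{2}\delta(x-y) = 0,
\]
the delta singularities cancelling exactly. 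By the previous step both are smooth on $X \times X$. Given $(x_0,y_0) \in X \times X$, choose $T > 0$ so that $\overline{B(x_0, T)} \cup \overline{B(y_0,T)} \subset X$; then finite speed of propagation for the Klein--Gordon operator, applied after testing against a bump function in $y$ supported near $y_0$, gives smoothness of $\calK$ on $(-T,T) \times B(x_0, T/2) \times B(y_0, T/2)$, joint smoothness in $y$ being recovered by varying the test function.

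The main obstacle is the $\lambda$-uniform exponential bound in the first step: one must combine the explicit decay of the free Green's function at the boundary with Agmon-type estimates for the Dirichlet correction, and handle the low-energy endpoint in the massless case. The remaining steps are essentially standard PDE arguments.
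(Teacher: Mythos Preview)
Your proposal is correct and follows essentially the same route as the paper: write $H^{-1}-H_\free^{-1}$ as an integral of resolvent differences, invoke exponential $C^k$-decay of the resolvent-difference kernel (the paper cites \cite{RTF}*{Lemma 5.1} for exactly the bound you sketch via Agmon-type estimates), and then propagate smoothness in time via the wave equation with smooth Cauchy data. The only refinement worth noting is that the paper treats $\calK$ as a solution of $\bigl(\partial_t^2 - \tfrac{1}{2}(\Delta_x+\Delta_y)\bigr)\calK=0$ on the product $\R\times X\times X$, which yields joint smoothness in $(t,x,y)$ in one stroke and is cleaner than your final step of testing in $y$ and then ``recovering joint smoothness by varying the test function''.
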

\begin{proof}
 The distribution $\calK(t,x,y)$ is a solution of the wave equation 
 $$
  \partial_t^2 - \frac{1}{2}\left( \Delta_x + \Delta_y  \right) \calK(t,x,y) =0
 $$
 on $\R \times X \times X$ with initial conditions
 \begin{align*}
 &\calK(0,x,y) = \frac{1}{2} \left( \breve H^{-1}(x,y) - \breve H^{-1}_\free(x,y) \right), \\
 &\partial_t \calK(t,x,y) |_{t=0} = 0.
\end{align*}
By \cite{RTF}*{Lemma 5.1} integral kernel $\kappa_\lambda$ of the resolvent difference $$(-\Delta + \lambda^2)^{-1} - (-\Delta_\free + \lambda^2)^{-1}$$ is smooth and satisfies on any compact subset $U$ of $X \times X$ a $C^k$-norm bound of the form
 \begin{gather} \label{expdecay}
  \| \kappa_{\lambda} \|_{C^k(U)} \leq C_{k,U} |\log(|\lambda|)|^\ell e^{-\delta' \lambda} \textrm{ for all } \lambda >0
 \end{gather}
 for some $\ell \geq 0$ and $\delta'>0$. Therefore the integral representation
 $$
  H^{-1} - H^{-1}_\free = \frac{2}{\pi} \int_m^\infty \frac{\lambda}{\sqrt{\lambda^2 - m^2}} \kappa_{\lambda} \der \lambda
 $$
 converges in the $C^\infty(X \times X)$ topology. Thus the distribution $\breve H^{-1} - \breve H^{-1}_\free$ is smooth in $X \times X$.
 Since the initial conditions are smooth the solution $\calK$ is smooth where it is uniquely determined by the initial data. 
 This is the case in a neighborhood of $t=0$ in $\R \times X \times X$.
\end{proof}

Hence the distribution $\omega(Q_{ij})- \omega_\free(Q_{ij})$ is a smooth function in a neighbourhood of the diagonal $\diag \subset M\times M$.
\begin{definition}
	\label{def of Tren}
 The components $(T_\ren)_{ij}$ of the renormalised stress energy tensor $T_\ren$  are defined to be the restriction to the diagonal of the function $\omega(Q_{ij})- \omega_\free(Q_{ij})$.
\end{definition}

\begin{theorem}
	\label{expression of T}
The renormalised stress energy tensor is symmetric and is given by
\begin{equation}
\label{stress energy tensor}
\left\{
\begin{aligned}
&(T_\ren)_{00}= \frac{1}{2}(\breve{H}-\breve H_\free)|_\diag+\frac{1}{8}\Delta[(\breve H^{-1}-\breve H_\free^{-1})|_\diag]\\
&(T_\ren)_{ij}=\frac{1}{2}[\partial_i\partial_j'(\breve H^{-1}-\breve H_\free^{-1})]|_\diag-\frac{1}{8}h_{ij}\Delta[(\breve H^{-1}-\breve H_\free^{-1})|_\diag]\\
&(T_\ren)_{0i}=0
\end{aligned}
\right. 
\end{equation}
and
\begin{equation*}
\left\{
\begin{aligned}
&(T_\ren)_{00}= \frac{1}{2}(\breve H-\breve H_\free)|_\diag+\frac{1}{8}\Delta[(\breve H^{-1}-\breve H_\free^{-1})|_\diag]\\
&(T_\ren)_{ij}=-\frac{1}{2}[\partial_i\partial_j (\breve H^{-1}-\breve H_\free^{-1})]|_\diag+\frac{1}{4}\partial_i\partial_j[(\breve H^{-1}-\breve H_\free^{-1})|_\diag]-\frac{1}{8}h_{ij}\Delta[(\breve H^{-1}-\breve H_\free^{-1})|_\diag]\\
&(T_\ren)_{0i}=0
\end{aligned}
\right. 
\end{equation*}
for $1\le i,j \le d$. Note that here $\breve H^{-1}$ and $\breve H^{-1}_\free$ are the the integral kernels of $H^{-1}$ and $H^{-1}_\free$ respectively. Moreover, the expression $\breve A|_\diag$ means the restriction of the integral kernel, $\breve A$, to the diagonal (See Section \ref{Notations}).
\end{theorem}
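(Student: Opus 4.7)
The plan is to compute each component of $T_\ren$ directly from (\ref{kernel of omega}) by subtracting the free counterparts and restricting to the diagonal, using that $\omega(Q_{ij}) - \omega_\free(Q_{ij})$ is smooth near the diagonal by Theorem \ref{smooth K thm}. Throughout I write $\calK(s,x,y) = K(s,x,y) - K_\free(s,x,y)$ and $F(x,y) = \breve{H}^{-1}(x,y) - \breve{H}^{-1}_\free(x,y)$, so that $\calK(0,x,y) = \tfrac{1}{2} F(x,y)$; by time-translation invariance the difference kernels depend only on $s = t - t'$.

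First I compute the temporal derivatives of $\calK$ at $s=0$. Both $\partial_s K|_{s=0}$ and $\partial_s K_\free|_{s=0}$ equal $-\tfrac{\rmi}{2}\delta(x-y)$ as distributions on $X\times X$, so $\partial_s\calK|_{s=0} = 0$. This immediately forces $(T_\ren)_{0i} = 0$ via the $(0i)$-line of (\ref{kernel of omega}). A second differentiation yields $\partial_s^2 \calK|_{s=0} = -\tfrac{1}{2}(\breve{H} - \breve{H}_\free)$, hence $\partial_t\partial_{t'}\calK|_{s=0,\,x=x'} = \tfrac{1}{2}(\breve{H}-\breve{H}_\free)|_\diag$.

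The key computational step is to convert the mixed spatial term $\sum_k (\partial_k\partial_k' F)|_\diag$ appearing in the $(00)$- and $(ii)$-lines into $\Delta[F|_\diag]$ and $(\breve{H}-\breve{H}_\free)|_\diag$. The chain rule gives
\begin{equation*}
\Delta\bigl[F|_\diag\bigr] = (\Delta_x F)|_\diag + (\Delta_{x'} F)|_\diag + 2\sum_{k=1}^d (\partial_k\partial_k' F)|_\diag,
\end{equation*}
and the operator identity $(-\Delta + m^2)H^{-1} = H$, together with its free analogue, yields the smooth pointwise identity $(-\Delta_x + m^2)F = \breve{H} - \breve{H}_\free$ on $X\times X$; the kernel symmetry $F(x,y) = F(y,x)$ gives the same statement for $\Delta_{x'}$. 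Solving produces
\begin{equation*}
\sum_k (\partial_k\partial_k' F)|_\diag = \tfrac{1}{2}\Delta[F|_\diag] + (\breve{H}-\breve{H}_\free)|_\diag - m^2 F|_\diag.
\end{equation*}
Substituting this and the temporal derivative into the $(00)$-line of (\ref{kernel of omega}) yields the claimed $(T_\ren)_{00}$, with the $m^2 F|_\diag$ contribution from the substitution cancelling against the explicit $m^2 K$ term; substituting into the $(ii)$-line gives $(T_\ren)_{ii}$; and for $i \ne j$ the formula $\tfrac{1}{2}(\partial_i\partial_j' F)|_\diag$ reads off directly from (\ref{kernel of omega}), consistent with $h_{ij}=0$.

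To pass to the second equivalent form I use the chain rule
\begin{equation*}
\partial_i\partial_j[F|_\diag] = (\partial_i\partial_j F)|_\diag + 2(\partial_i\partial_j' F)|_\diag + (\partial_i'\partial_j' F)|_\diag,
\end{equation*}
together with $(\partial_i\partial_j F)|_\diag = (\partial_i'\partial_j' F)|_\diag$, which follows from $F(x,y) = F(y,x)$, to solve for $(\partial_i\partial_j' F)|_\diag$ and substitute. The main obstacle is bookkeeping of primed versus unprimed derivatives, and ensuring that $(-\Delta + m^2)\breve{H}^{-1} = \breve{H}$ is invoked only as a smooth pointwise identity on the difference $F$, rather than on either $\breve{H}^{-1}$ individually, for which it is merely a distributional identity with a singularity on the diagonal.
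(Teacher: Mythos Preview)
Your proof is correct and follows essentially the same approach as the paper's: both start from (\ref{kernel of omega}), exploit the symmetry $\calK(s,x,y)=\calK(s,y,x)$, use the chain rule to relate mixed derivatives $(\partial_i\partial_j'\calK)|_\diag$ to derivatives of the diagonal restriction, and invoke the Klein--Gordon/operator identity to eliminate the Laplacian acting on one variable. Your organisation is slightly more streamlined---you work directly at $s=0$ and use the operator identity $(-\Delta+m^2)H^{-1}=H$ on the smooth difference $F$, whereas the paper keeps the full spacetime kernel $\mathscr{K}$ and uses the wave equation $\partial_k\partial^k\mathscr{K}=m^2\mathscr{K}$; these are equivalent, and your remark that the identity is applied only to the smooth difference (so no diagonal singularity enters) is exactly the right caveat.
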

\begin{rem}
	\label{divremark}
The terms of divergence forms in the renormalised stress energy tensor are commonly neglected in the literature, as one may naively think that they have zero contribution when integrating over the whole space. However, this is not the case. As it is not integrable due to the singular behaviour near the boundary, the divergence theorem does not apply in this case. See Appendix \ref{1-D relative setting} for the simplest case. The problem disappears when we work with the relative stress energy tensor given in Definition \ref{relstress}.
\end{rem}
\begin{proof}[Proof of Theorem \ref{expression of T}]
Let $\calK(t,x,x')$ be the kernel of $\frac{1}{2}(H^{-1}e^{-\rmi Ht}-H_\free^{-1}e^{-\rmi H_\free t})$ and $\mathscr{K}(t,t',x,x')=\calK(t-t',x,x')$. We have that
\begin{equation}
\label{T and A}
(T_\ren)_{ij}=\left.\left[\partial_i'\partial_j-\frac{1}{2}g_{ij}(\partial_k'\partial^k+m^2)\right]\mathscr{K}\right|_\diag,
\end{equation}
where $\partial_0=\partial_t$, $\partial'_0=\partial_{t'}$, $\partial_i=\partial_{x^i}$ and $\partial_i'=\partial_{x'^i}$ for $1\le i \le d$. By theorem \ref{smooth K thm} the distribution $\mathscr{K}(t,t',x,x')$ is smooth in a neighbourhood of the diagonal $\diag \subset M\times M$. Moreover, $\calK(t-t',\cdot,\cdot)$ is the kernel of a symmetric operator on $X$ with respect to the real inner product $(\cdot,\cdot)$ and therefore satisfies 
\begin{equation}
\label{symmetry of A}
\calK(t-t',x,x')=\mathscr{K}(t,t',x,x')=\mathscr{K}(t,t',x',x)=\calK(t-t',x',x).
\end{equation}
This implies
\begin{equation}
\label{A eqn 1}
\left\{
\begin{aligned}
&\partial_0\mathscr{K}(t,t',x,x')=-\partial_0'\mathscr{K}(t,t',x,x')\\
&\partial_i\mathscr{K}(t,t',x,x')=\partial_i'\mathscr{K}(t,t',x',x)\\
&\partial_i\partial_j\mathscr{K}(t,t',x,x')=\partial_i'\partial_j'\mathscr{K}(t,t',x',x)\\
&\partial_i\partial_j'\mathscr{K}(t,t',x,x')=\partial_i'\partial_j\mathscr{K}(t,t',x',x)
\end{aligned}
\quad \text{for} \quad 1\le i,j \le d\right..
\end{equation}
Using product rules, we have
\begin{equation}
\label{product rules}
\partial_i(\mathscr{K}|_\diag)=(\partial_i\mathscr{K})|_\diag + (\partial_i'\mathscr{K})|_\diag ,
\end{equation}
which gives
\begin{equation*}
\partial_i\partial_j(\mathscr{K}|_\diag)=(\partial_i\partial_j \mathscr{K})|_\diag+(\partial_i\partial_j' \mathscr{K})|_\diag +(\partial_j\partial_i'\mathscr{K})|_\diag +(\partial_i'\partial_j'\mathscr{K})|_\diag .
\end{equation*}
That is
\begin{equation}
\label{A eqn 2}
 (\partial_i\partial_j' \mathscr{K})|_\diag=\partial_i\partial_j(\mathscr{K}|_\diag)-(\partial_i\partial_j \mathscr{K})|_\diag - (\partial_j\partial_i'\mathscr{K})|_\diag -(\partial_i'\partial_j'\mathscr{K})|_\diag .
\end{equation}
Applying equation \eqref{A eqn 1} to \eqref{A eqn 2}, we have
\begin{equation}
\label{symmetric of T1}
\left\{
\begin{aligned}
&(\partial_0\partial_0' \mathscr{K})|_\diag=-(\partial_0\partial_0 \mathscr{K})|_\diag = (\breve H-\breve H_\free)|_\diag  \\
&(\partial_0\partial_i' \mathscr{K})|_\diag=-(\partial_0'\partial_i'\mathscr{K})|_\diag=(\partial_0\partial_i\mathscr{K})|_\diag=-(\partial_i\partial_0' \mathscr{K})|_\diag\quad \text{for} \quad 1\le i \le d\\
&(\partial_i\partial_j' \mathscr{K})|_\diag=\frac{1}{2}\partial_i\partial_j(\mathscr{K}|_\diag)-(\partial_i\partial_j \mathscr{K})|_\diag\quad \text{for} \quad 1\le i,j \le d
\end{aligned}
\right. .
\end{equation}
From equations \eqref{A eqn 1} and \eqref{product rules} we have
\begin{equation}
\label{symmetric of T2}
2\partial_i\partial_0\mathscr{K}|_\diag=\partial_i\partial_0\mathscr{K}|_\diag+\partial_i'\partial_0\mathscr{K}|_\diag=\partial_i(\partial_0\mathscr{K}|_\diag)=0\quad \text{for} \quad 1\le i \le d.
\end{equation}
In other words, $(T_\ren)_{0i}=(T_\ren)_{i0}=0$ for $1\le i \le d$. Hence equations \eqref{T and A}, \eqref{symmetric of T1} and \eqref{symmetric of T2} show that $(T_\ren)_{ij}$ is symmetric tensor on $M$. Moreover, 
\begin{equation*}
(T_\ren)_{ij}=\frac{1}{2}\partial_i\partial_j(\mathscr{K}|_\diag)-(\partial_i\partial_j \mathscr{K})|_\diag\quad \text{for} \quad 1\le i\ne j \le d\\
\end{equation*}
and
\begin{multline*}
(T_\ren)_{ii}=\left.\left[\partial_i'\partial_i-\frac{1}{2}g_{ii}(\partial_k'\partial^k+m^2)\right]\mathscr{K}\right|_\diag\\
=\frac{1}{2}\partial_i\partial_i(\mathscr{K}|_\diag)-(\partial_i\partial_i \mathscr{K})|_\diag-\frac{1}{2}g_{ii}\left(\frac{1}{2}\partial_k\partial^k(\mathscr{K}|_\diag)-(\partial_k\partial^k \mathscr{K}-m^2\mathscr{K})|_\diag\right) .
\end{multline*}
Since $\partial_k\partial^k \mathscr{K}-m^2\mathscr{K}=0$, we have
\begin{equation*}
(T_\ren)_{ii}=\frac{1}{2}\partial_i\partial_i(\mathscr{K}|_\diag)-(\partial_i\partial_i \mathscr{K})|_\diag-\frac{1}{4}g_{ii}\partial_k\partial^k(\mathscr{K}|_\diag) .
\end{equation*}
When $i=0$,
\begin{equation*}
(T_\ren)_{00}=\frac{1}{2}(\breve H-\breve H_\free)|_\diag+\frac{1}{8}\Delta[(\breve H^{-1}-\breve H_\free^{-1})|_\diag] .
\end{equation*}
Also, we have
\begin{equation*}
(\partial_i\partial_j \mathscr{K})|_\diag =\frac{1}{2}[\partial_i\partial_j(\breve H^{-1}-\breve H_\free^{-1})]|_\diag \quad \text{for} \quad 1\le i, j \le d ,
\end{equation*}
which yields the expressions for the renormalised stress energy tensor.
\end{proof}

\begin{theorem}
\label{DF of T}
 The renormalised stress energy tensor is divergence-free and independent of $t$.
\end{theorem}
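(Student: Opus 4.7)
Independence of $t$ is immediate from the formulas in Theorem \ref{expression of T}: every component of $T_\ren$ is expressed purely in terms of the $t$-independent objects $(\breve H - \breve H_\free)|_\diag$ and spatial derivatives of $(\breve H^{-1} - \breve H_\free^{-1})|_\diag$. In Lorentzian signature $(-,+,\ldots,+)$ the divergence is
\[
 \nabla^\mu (T_\ren)_{\mu\nu} = -\partial_t (T_\ren)_{0\nu} + \sum_{i=1}^d \partial_i (T_\ren)_{i\nu}.
\]
The component $\nu=0$ vanishes trivially because $(T_\ren)_{i0}=0$ by the theorem and $(T_\ren)_{00}$ is $t$-independent. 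The remaining task is therefore to prove $\sum_i \partial_i (T_\ren)_{ij}=0$ for each $1\le j\le d$.

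To attack this I would set $L := \breve H^{-1} - \breve H^{-1}_\free$, smooth on $X\times X$ by Theorem \ref{smooth K thm}, and $L_H := \breve H - \breve H_\free$. The operator identity $(-\Delta+m^2)H^{-1}=H^2 H^{-1}=H$ applied in either slot, minus its free counterpart, yields the pair of PDEs
\[
 (-\Delta_x + m^2)L = L_H = (-\Delta_y + m^2)L.
\]
Self-adjointness and reality of $H^{-1}$ and $H^{-1}_\free$ give the symmetry $L(x,y)=L(y,x)$, whence by the chain rule $(\partial_j L)|_\diag=\tfrac12\partial_j(L|_\diag)$ and analogously for $L_H$. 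Iterating the product rule $\partial_i(F|_\diag)=(\partial_i+\partial_i')F|_\diag$ and inserting the PDEs produces the auxiliary identity
\[
 \sum_i (\partial_i\partial_i' L)|_\diag = \tfrac12 \Delta(L|_\diag) - m^2 L|_\diag + L_H|_\diag.
\]
Because $\sum_i \partial_i\partial_i' L$ is also symmetric in $(x,y)$, the same chain-rule trick yields $\sum_i (\partial_i'\partial_i\partial_j L)|_\diag = \tfrac12 \partial_j\bigl[\sum_i (\partial_i\partial_i' L)|_\diag\bigr]$.

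Using the second form of $(T_\ren)_{ij}$ from Theorem \ref{expression of T},
\[
 \sum_i \partial_i (T_\ren)_{ij} = -\tfrac12 \sum_i \partial_i\bigl[(\partial_i\partial_j L)|_\diag\bigr] + \tfrac18 \partial_j \Delta(L|_\diag).
\]
Expanding $\partial_i[(\partial_i\partial_j L)|_\diag] = (\partial_i^2\partial_j L + \partial_i'\partial_i\partial_j L)|_\diag$ and summing, the first piece equals $(\Delta_x \partial_j L)|_\diag = \tfrac{m^2}{2}\partial_j(L|_\diag) - \tfrac12 \partial_j(L_H|_\diag)$ by the PDE combined with the symmetry relations, while the second piece equals $\tfrac14 \partial_j\Delta(L|_\diag) - \tfrac{m^2}{2}\partial_j(L|_\diag) + \tfrac12 \partial_j(L_H|_\diag)$ by the auxiliary identity. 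The $m^2$- and $L_H$-terms cancel, leaving $\sum_i \partial_i[(\partial_i\partial_j L)|_\diag] = \tfrac14 \partial_j \Delta(L|_\diag)$, and substitution into the displayed formula gives $-\tfrac18 \partial_j\Delta(L|_\diag) + \tfrac18 \partial_j\Delta(L|_\diag) = 0$. The main technical obstacle is the mixed derivative $\sum_i (\partial_i'\partial_i\partial_j L)|_\diag$, which is controlled by neither PDE in isolation; the symmetry-based reduction in the previous paragraph is precisely what converts it into an expression to which the PDEs apply, driving the cancellation.
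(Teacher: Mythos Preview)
Your argument is correct. The verification of the PDE $(-\Delta_x+m^2)L=L_H=(-\Delta_y+m^2)L$, the symmetry of $L$ and of $\sum_i\partial_i\partial_i'L$, the auxiliary identity, and the final cancellation all check out line by line.

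Your route, however, is organised differently from the paper's. The paper does not work with the time-independent kernels $L=\breve H^{-1}-\breve H_\free^{-1}$ and $L_H=\breve H-\breve H_\free$ at all for this step; instead it returns to the full spacetime kernel $\mathscr{K}(t,t',x,x')$ and the \emph{homogeneous} Klein--Gordon equation $(\partial^k\partial_k-m^2)\mathscr{K}=0$ satisfied in each variable. The divergence is then computed directly from the defining expression $(T_\ren)_{ij}=[\partial_i'\partial_j\mathscr{K}-\tfrac12 g_{ij}(\partial_k'\partial^k+m^2)\mathscr{K}]|_\diag$ using the same product-rule and symmetry tricks you use, and the Klein--Gordon equation kills the residual terms in one stroke. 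Your version trades the homogeneous wave equation for the inhomogeneous spatial equation $(-\Delta_x+m^2)L=L_H$, which forces you to track the extra $L_H$-contributions and verify that they cancel; this is a bit more bookkeeping, but it has the advantage of being purely spatial and of staying close to the explicit formulas of Theorem~\ref{expression of T}, so time-independence really is immediate rather than requiring its own diagonal-product-rule argument. The two proofs are cousins: same symmetry and product-rule machinery, different choice of underlying PDE.
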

\begin{proof}
Let $\calK(t,x,y)$ and $\mathscr{K}(t,t',x,y)$ be the same as the in the previous theorem. Recall that the shorthand expression of \eqref{kernel of omega} is given by
\begin{equation*}
\omega(Q_{ij})(t,t',x,x')=\left[\partial_i'\partial_j\mathscr{K}-\frac{1}{2}g_{ij}(\partial_k'\partial^k+m^2)\mathscr{K}\right](t,t',x,x').
\end{equation*}
Then we have
\begin{equation*}
(T_\ren)_{ij}(t,x)=\omega(Q_{ij})|_\diag=\left.\left[\partial_i'\partial_j\mathscr{K}-\frac{1}{2}g_{ij}(\partial_k'\partial^k+m^2)\mathscr{K}\right]\right|_\diag.
\end{equation*}
Now we use product rules to get 
\begin{equation*}
\partial_{i_0}\left(\partial_{i_1}\cdots \partial_{i_m}\mathscr{K}|_\diag\right)=\left(\partial_{i_0}\partial_{i_1}\cdots \partial_{i_m}\mathscr{K}\right)|_\diag+ \left(\partial_{i_0}' \partial_{i_1}\cdots \partial_{i_m}\mathscr{K}\right)|_\diag .
\end{equation*}
In particular, we have
\begin{equation*}
\partial_0\left(\partial_{i_1}\cdots \partial_{i_m}\mathscr{K}|_\diag\right)=0 .
\end{equation*}
Hence, one has
\begin{equation*}
\partial_0 (T_\ren)_{ij}(t,x)=\partial_0\left\{\left.\left[\partial_i'\partial_j\mathscr{K}-\frac{1}{2}g_{ij}(\partial_k'\partial^k+m^2)\mathscr{K}\right]\right|_\diag\right\}=0 ,
\end{equation*}
which means $(T_\ren)_{ij}$ is independent of time.

We use the same trick to show that
\begin{multline*}
\partial^i (T_\ren)_{ij}(t,x)=\partial^i\left\{\left.\left[\partial_i'\partial_j\mathscr{K}-\frac{1}{2}g_{ij}(\partial_k'\partial^k+m^2)\mathscr{K}\right]\right|_\diag\right\}\\
=\left.\left(\partial^i\partial_i'\partial_j\mathscr{K}+\partial'^i\partial_i'\partial_j\mathscr{K}-\frac{1}{2}\partial_j\partial_k'\partial^k\mathscr{K}-\frac{1}{2}\partial_j'\partial_k'\partial^k\mathscr{K}-\frac{1}{2}m^2\partial_j\mathscr{K}-\frac{1}{2}m^2\partial_j'\mathscr{K}\right)\right|_\diag .
\end{multline*}
Now use the symmetric properties of $\mathscr{K}$ in equations \eqref{symmetry of A} and \eqref{A eqn 1}, we have
\begin{multline*}
\left.\left(\partial^i\partial_i'\partial_j\mathscr{K}+\partial'^i\partial_i'\partial_j\mathscr{K}-\frac{1}{2}\partial_j\partial_k'\partial^k\mathscr{K}-\frac{1}{2}\partial_j'\partial_k'\partial^k\mathscr{K}-\frac{1}{2}m^2\partial_j\mathscr{K}-\frac{1}{2}m^2\partial_j'\mathscr{K}\right)\right|_\diag\\
=\left.\left(\partial'^i\partial_i'\partial_j\mathscr{K}-\frac{1}{2}m^2\partial_j\mathscr{K}-\frac{1}{2}m^2\partial_j'\mathscr{K}\right)\right|_\diag .
\end{multline*}
The function $\mathscr{K}$ solves the Klein-Gordon equation, i.e. $$(\partial'^i\partial_i'-m^2)\mathscr{K}=(\partial^i\partial_i-m^2)\mathscr{K}=0 .$$ Altogether, we have
\begin{equation*}
\partial^i (T_\ren)_{ij}(t,x)=\frac{1}{2}\left.\left(\partial'^i\partial_i'\partial_j\mathscr{K}-m^2\partial_j\mathscr{K}+\partial^i\partial_i\partial_j'\mathscr{K}-m^2\partial_j'\mathscr{K}\right)\right|_\diag=0 ,
\end{equation*}
which shows the divergence-free property of the renormalised stress energy tensor.
\end{proof}
\section{The relative trace-formula and the Casimir energy}
\label{The relative trace-formula and the Casimir energy}
As mentioned in the introduction the renormalised stress-energy tensor $T_\ren(x)$ becomes unbounded and non-integrable when $x$ approaches the boundary of obstacles \cite{BrownMaclay,Candelas1982,DC1979,fulling2007vacuum}. This prevents us from defining a renormalised total energy. One way to circumvent the problem is to introduce the relative framework of \cite{RTF} which we now summarise. The main advantage of this construction is that it completely avoids ill-defined quantities and the need for regularisation. 

Relative quantities are defined with respect to different obstacle configurations where instead of $\calO$ only one obstacle $\calO_j$ is present, i.e. where $\calO$ is replaced by $\calO_j$. If an operator is defined with respect to such a configuration we use the subscript $\calO_j$, and we use the subscript $\calO$ to distinguish it from the original configuration.
For instance the renormalised stress energy tensor $T_\ren$ in Theorem \ref{expression of T} will be denoted by $(T_\ren)_\calO$, which shows its dependence on the presence of obstacles $\calO=\calO_1 \cup \dots \cup \calO_N$. Similarly, $(T_\ren)_{\calO_j}$ denotes the renormalised stress energy tensor when only obstacle $\calO_j$ is present and $\Delta_{\calO_j}$ denotes the Laplace operator with Dirichlet boundary condition on $\partial \calO_j$. The operator $H_{\calO_j}$ is defined in the same way.

Now we introduce a relative operator
\begin{equation}
\label{Hrel eqn}
H_\rel:=H_\calO-H_\free-\sum_{i=1}^N (H_{\calO_i}- H_\free).
\end{equation}
 More generally one defines the relative operator associated with a polynomially bounded function $f:[0,\infty) \to \R$, i.e.
\begin{align*}
D_{f,m}:=&f((-\Delta_\calO+m^2)^{\frac{1}{2}})-f((-\Delta_\free+m^2)^{\frac{1}{2}})\\&-\sum_{i=1}^N\left( f((-\Delta_{\calO_i}+m^2)^{\frac{1}{2}})-f((-\Delta_\free+m^2)^{\frac{1}{2}})\right).
\end{align*}
Since all our operators are densely defined operators on the same Hilbert space $L^2(\R^d)$ this combination makes sense. As a consequence of $f$ being polynomially bounded the space $C^\infty_0(X)$ is in the domain of $D_f$ and therefore $D_f$ has a distribution kernel in $\mathcal{D}'(X \times X)$.

To simplify our analysis later, we absorb the dependence of mass $m$ in the functional $f$. We could write $f_m$ to emphasise the dependence on $m$, but the later analysis will not be affected by $m$. Therefore, we omit the $m$-dependence and have   
\begin{equation*}
D_f:=f((-\Delta_\calO)^{\frac{1}{2}})-f((-\Delta_\free)^{\frac{1}{2}})-\sum_{i=1}^N\left( f((-\Delta_{\calO_i})^{\frac{1}{2}})-f((-\Delta_\free)^{\frac{1}{2}})\right).
\end{equation*}

The main result of \cite{RTF} is that for a large class of functions $f$, including the functions $f(\lambda)=\sqrt{\lambda^2}$ and $f(\lambda)=\sqrt{\lambda^2+m^2}$ which are of interest in our context, the operator $D_f$ 
is trace-class and its trace can be computed by integrating the kernel on the diagonal. We now explain the precise statement of this result and its relation to the determinant of the boundary layer operator.

In the following we will denote by $G_{\calO,\lambda} \in \mathcal{D}'(X \times X)$ the distributional kernel of the resolvent $R_{\calO,\lambda}=(-\Delta_\calO - \lambda^2)^{-1}$. The kernels $G_{\calO_j,\lambda}=\breve R_{\calO_j,\lambda}$ and $G_{\free,\lambda}=\breve R_{\free,\lambda}$ are defined in an analogous way.
By elliptic regularity these Green's distributions are smooth away from the diagonal $x=y$.

Recall that for $\lambda \in \C$ we have the single layer potential operator
$$
\calS_\lambda : C^{\infty}(\partial \calO)\to C^{\infty}(\overline{\calE})\oplus C^{\infty}(\overline{\calO})\subset C^{\infty}(\R^d\backslash \partial \calO)
$$
given by
$$
\calS_\lambda f(x)=\int_{\partial \calO} G_{\free,\lambda}(x,y) f(y) \dd \sigma(y),
$$
where $\dd \sigma$ is the surface measure.
Let $\gamma :H^{s+\frac{1}{2}}(\R^d)\to H^s(\partial \calO)$ be the Sobolev trace operator for $s>0$. 
Properties of the Sobolev trace operator can be found, for instance, in \cite{McLean2000}. One can write the above also as $\calS_\lambda=G_{\free,\lambda}\circ \gamma^*$. Restriction of $\calS_{\lambda}$ to the boundary defines an operator
$$
Q_\lambda f(x)=\int_{\partial \calO} G_{\free,\lambda}(x,y) f(y) \dd \sigma(y).
$$
The operator $Q_\lambda$ is known to have the following properties.

Since the boundary $\partial \calO$ consists of $N$ connected components $\partial \calO_j$, we therefore have an orthogonal decomposition $L^2(\partial \calO)=\oplus_{j=1}^N L^2(\partial \calO_j)$. Let $p_j: L^2(\partial \calO) \to L^2(\partial \calO_j)$ be the corresponding orthogonal projection. Now we have
\begin{equation}
\label{Q decomp}
\tilde{Q}_\lambda=\sum_{j=1}^N p_jQ_\lambda p_j, \quad
T_\lambda=\sum_{j\ne k}^N p_jQ_\lambda p_k \quad \text{and} \quad
Q_\lambda=\tilde{Q}_\lambda+T_\lambda .
\end{equation}
In short, $\tilde{Q}_\lambda$ and $T_\lambda$ are respectively the diagonal and off diagonal part of $Q_\lambda$. Now let $\mathfrak{D}_\nu$ to be a sector in the upper half plane and it is given by 
\begin{equation}
\label{Deps def}
\mathfrak{D}_\nu:=\{z\in \C\; | \; \nu \le \arg(z) \le \pi-\nu \},
\end{equation}
where it suffices to consider $0<\epsilon< \frac{\pi}{2}$ for our applications. 

The operator $Q_\lambda$ is invertible for $\Im(\lambda)>0$. Moreover  $Q_\lambda \tilde Q_\lambda^{-1} -1$ is trace-class and the Fredholm determinant $\det(Q_\lambda \tilde Q_\lambda^{-1})$ of $Q_\lambda \tilde Q_\lambda^{-1}$ can be used to define a function
$$
 \Xi(\lambda) = \log \det(Q_\lambda \tilde Q_\lambda^{-1})
$$
which is holomorphic in the upper half space and for some $\delta'>0$
 satisfies the bound
$$
 | \Xi(\lambda) | = O(e^{-\delta' \Im(z)}).
$$
See Theorem 1.6 of \cite{RTF} for the above bound in the sector of the form $\{ z \mid  \Im(z) \geq b |\Re(z)| \}$ for some $\delta>0$.

Assume $0<\theta < \frac{\pi}{2}$ and let $\mathfrak{S}_\theta$ be the open sector
$$
\mathfrak{S}_\theta=\{z \in \C \; | \; z \ne 0, |\arg (z)|< \theta \} .
$$
Let $\mathcal{P}_\theta$ be the set of functions that are holomorphic and polynomially bounded in $\mathfrak{S}_\theta$.
\begin{definition}
	\label{Pclass}
	The space $\tilde{\mathcal{P}}_\theta$ is defined to be the space of functions $f$ such that $f(\lambda)=g(\lambda^2)$ for some $g\in \mathcal{P}_{\theta+\epsilon}$ for some $\epsilon>0$ and there exists $a>0$ such that $f = O(|z|^a)$ if $|z|<1$
\end{definition}
\begin{figure}[h]
	\centering
	\includegraphics[scale=1.3]{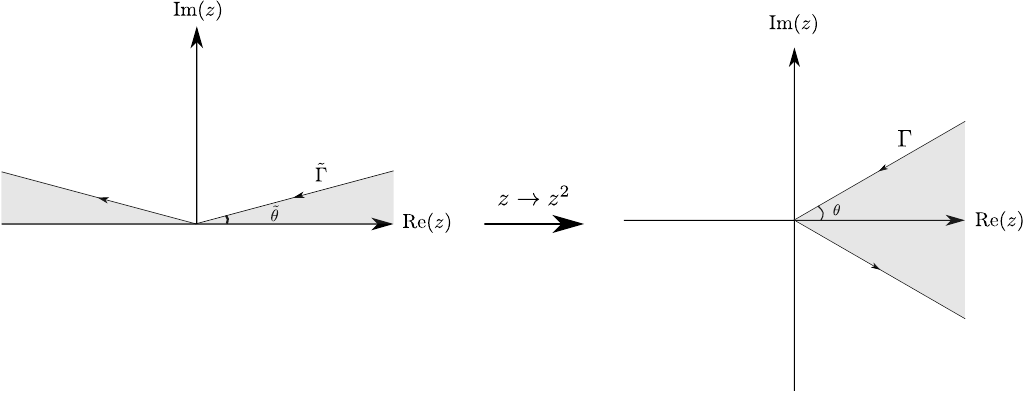}
	\caption{Definition of $\tilde{\Gamma}$.}
	\label{ztozsquare}
\end{figure}

We then have the following theorem.

\begin{theorem}[\cite{RTF}, Theorem 1.6 and 1.7]
	\label{frel trace thm}
	Let $f\in \tilde{\mathcal{P}}_\theta$. Then $D_f$ extends to a trace-class operator with integral kernel that is smooth on $X$ and has continuous inner and outer boundary values on $\partial \calO$.
	The trace of $D_f$ equals the integral over the diagonal of its integral kernel over $\R^d$. Moreover, it is equal to
	$$
	 \tr\,{D_f} = \frac{\rmi}{2\pi} \int_{\tilde \Gamma} f'(\lambda)\Xi(\lambda) \der \lambda.
	$$
\end{theorem}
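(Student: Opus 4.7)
The plan is to represent $D_f$ as a contour integral of resolvent differences, apply Krein's resolvent formula to reduce everything to boundary layer operators, and then use Jacobi's identity for the Fredholm determinant together with integration by parts to extract $f'(\lambda)\Xi(\lambda)$.

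First I would exploit that $f \in \tilde{\mathcal{P}}_\theta$ means $f(\lambda)=g(\lambda^2)$ with $g$ holomorphic and polynomially bounded in a sector $\mathfrak{S}_{\theta+\eps}$ and $f=O(|z|^a)$ near $0$. Choose a keyhole-type contour $\Gamma$ wrapping $[0,\infty)$ in the $\mu$-plane and write $f((-\Delta_\bullet)^{1/2})=\frac{1}{2\pi\rmi}\int_\Gamma g(\mu)(-\Delta_\bullet-\mu)^{-1}\,\dd\mu$ via holomorphic functional calculus. The vanishing of $f$ at $0$ secures convergence at the branch point where the massless spectrum accumulates. Set $\mu=-\lambda^2$ and let $\tilde\Gamma$ be the corresponding contour in the $\lambda$-plane (as in Figure \ref{ztozsquare}).

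Next, insert Krein's resolvent identity $R_{\calO,\lambda}-R_{\free,\lambda}=-\calS_\lambda Q_\lambda^{-1}\calS_\lambda^{*}$ and its single-obstacle analogue $R_{\calO_j,\lambda}-R_{\free,\lambda}=-\calS_\lambda (p_j Q_\lambda p_j)^{-1}\calS_\lambda^{*}$. Using $\tQ_\lambda=\sum_j p_j Q_\lambda p_j$ (equation \eqref{Q decomp}), the four pairwise subtractions defining $D_f$ collapse to the single contour integral
\begin{equation*}
D_f \;=\; \frac{1}{2\pi\rmi}\int_{\tilde\Gamma} g(-\lambda^2)\,\calS_\lambda\bigl(\tQ_\lambda^{-1}-Q_\lambda^{-1}\bigr)\calS_\lambda^{*}\,2\lambda\,\dd\lambda,
\end{equation*}
which is the crucial algebraic simplification: the free contribution cancels and only the multi-obstacle interference $Q_\lambda^{-1}-\tQ_\lambda^{-1}$ survives. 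Writing $Q_\lambda^{-1}-\tQ_\lambda^{-1}=-\tQ_\lambda^{-1}T_\lambda Q_\lambda^{-1}$ with $T_\lambda$ the off-diagonal part (which has a smooth kernel given by the free Green's function evaluated between distinct obstacles) one sees that the integrand is trace class, with exponential decay in $\Im(\lambda)$ inherited from $T_\lambda$: this is the same mechanism producing the bound $|\Xi(\lambda)|=O(e^{-\delta'\Im\lambda})$ quoted from \cite{RTF}. Smoothness of the diagonal kernel on $X$ together with the continuity of inner/outer boundary values would follow from the mapping properties of $\calS_\lambda$ into $C^\infty(\overline{\calE})\oplus C^\infty(\overline{\calO})$.

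To obtain the explicit trace formula I would apply $\Tr$ under the contour integral (justified by the above trace-class bounds) and use cyclicity: $\Tr[\calS_\lambda(\tQ_\lambda^{-1}-Q_\lambda^{-1})\calS_\lambda^{*}]=\Tr[\calS_\lambda^{*}\calS_\lambda(\tQ_\lambda^{-1}-Q_\lambda^{-1})]$, recognising $\calS_\lambda^{*}\calS_\lambda$ as essentially $\partial_\lambda Q_\lambda$ up to a factor of $2\lambda$ (a standard single-layer identity coming from differentiating the free Green's function in $\lambda$). Then Jacobi's formula gives
\begin{equation*}
\partial_\lambda\Xi(\lambda)=\partial_\lambda\log\det(Q_\lambda\tQ_\lambda^{-1})=\Tr\bigl[(\tQ_\lambda^{-1}-Q_\lambda^{-1})\partial_\lambda Q_\lambda\bigr],
\end{equation*}
so the integrand becomes $\frac{1}{2\pi\rmi}g(-\lambda^2)\,\Xi'(\lambda)$. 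Integration by parts along $\tilde\Gamma$ transfers the derivative onto $g(-\lambda^2)$, producing $-2\lambda g'(-\lambda^2)=-f'(\lambda)$, up to an overall sign fixed by the orientation; boundary contributions at infinity vanish because of the exponential decay of $\Xi$, and contributions at $\lambda=0$ vanish because of the $O(|z|^a)$ assumption on $f$. This yields $\Tr D_f=\frac{\rmi}{2\pi}\int_{\tilde\Gamma}f'(\lambda)\Xi(\lambda)\,\dd\lambda$.

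The main technical obstacle is the precise bookkeeping that makes every interchange of trace, integral and derivative rigorous. This requires uniform operator-norm estimates on $\calS_\lambda$, $Q_\lambda^{-1}$, and $\tQ_\lambda^{-1}$ along $\tilde\Gamma$, together with the key exponential bound on $T_\lambda$ (and hence on $\Xi$) which relies on the obstacles being separated by a positive distance and on Combes–Thomas / Agmon-type estimates for the free Green's function. All of these ingredients are exactly what is assembled in \cite{RTF}, so the present theorem is obtained by citing Theorems 1.6 and 1.7 there.
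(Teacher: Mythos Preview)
The paper does not prove this theorem: it is imported wholesale from \cite{RTF} (Theorems 1.6 and 1.7), as the attribution in the theorem header indicates and as you yourself conclude in your final sentence. There is therefore no ``paper's own proof'' to compare against.

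Your sketch of how the argument in \cite{RTF} presumably runs is broadly correct and is consistent with the pieces the present paper does quote: the layer-potential identity $R_{\rel,\lambda}=-\calS_\lambda(Q_\lambda^{-1}-\tQ_\lambda^{-1})\calS_\lambda^t$ (equation \eqref{Rrel}) and the relation $\Xi'(\lambda)=-2\lambda\,\Tr(R_{\rel,\lambda})$ invoked at the end of the proof of Theorem \ref{Erel thm}. A few small slips: the change of variable is $\mu=\lambda^2$, not $\mu=-\lambda^2$, so the integrand carries $g(\lambda^2)=f(\lambda)$; the paper uses the transpose $\calS_\lambda^t$ rather than the Hilbert-space adjoint, which is what makes the identity $\calS_\lambda^t\calS_\lambda=\frac{1}{2\lambda}\partial_\lambda Q_\lambda$ hold without complex conjugation along the contour; and your Jacobi formula has the sign reversed (one gets $\Xi'(\lambda)=\Tr\bigl[(Q_\lambda^{-1}-\tQ_\lambda^{-1})\partial_\lambda Q_\lambda\bigr]$, using that $\tQ_\lambda^{-1}$ is block-diagonal so only the diagonal part of $\partial_\lambda Q_\lambda$ survives in that trace). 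None of these affects the structure of the argument.
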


In particular, choosing the function $f(\lambda) = \sqrt{\lambda^2 + m^2}-m$ one obtains $D_f = H_\rel$ and therefore $H_\rel$ is trace-class with trace equal to
$$
 \tr(H_\rel) = \frac{1}{ \pi} \int_{m}^\infty \frac{\omega}{ \sqrt{\omega^2 - m^2}} \Xi(\rmi \omega) \der \omega.
$$
This follows immediately by deforming the contour integral using the exponential decay of $\Xi$ in the upper half plane, considering the branch cut of $\sqrt{z+m^2}$ at $\rmi \sqrt{m}$.

\begin{definition}\label{relstress}
The relative stress energy tensor is the renormalised stress energy tensor in the relative setting and it is defined as
\begin{equation*}
T_\rel =(T_\ren)_\calO -\sum_{i=1}^N (T_\ren)_{\calO_i} ,
\end{equation*}
where $(T_\ren)_\calO$ is the renormalised stress energy tensor for obstacle $\calO$ and $(T_\ren)_{\calO_i}$ is the renormalised stress energy tensor for obstacle $\calO_i$, which are defined at the beginning of this section.
\end{definition}

\begin{rem}
One can also consider other versions of a relative stress energy tensor. For instance, one can define $\calO_A=\calO_1\cup\cdots \cup \calO_j$ and $\calO_B=\calO_{j+1}\cup\cdots \cup \calO_N$ for some $1\le j < N$, dropping the connectedness requirement of obstacles and work with
$$
T_{\rel,A,B}=(T_\ren)_{\calO}-(T_\ren)_{\calO_A}-(T_\ren)_{\calO_B}.
$$
The corresponding energy encodes the amount of work needed to separate the two obstacle configurations $A$ and $B$. This quantity can also be expressed in terms of $T_\rel$ for $A$ and $B$. It is easy to see that this equals
$$
T_{\rel,A,B} = (T_\rel)_{\calO}-(T_\rel)_{\calO_A}-(T_\rel)_{\calO_B},
$$
and therefore working with $T_\rel$ only does not result in a loss of generality.
\end{rem}
\begin{theorem}
	\label{integrability thm}
 $T_\rel$ is smooth on $X$ and extends smoothly to $\overline{\calE}$ as well as to $\overline{\calO}$. The function $(T_\rel)_{00}$ is integrable on $\R^d$.
\end{theorem}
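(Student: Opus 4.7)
Smoothness of $T_\rel$ on $X$ is immediate from Theorem~\ref{smooth K thm}: each summand $(T_\ren)_{\calO_i}$ is smooth on $\R^d \setminus \partial \calO_i \supset X$, and so is the combination $T_\rel = (T_\ren)_\calO - \sum_i (T_\ren)_{\calO_i}$. For the smooth extension to $\overline{\calE}$ and $\overline{\calO}$ it suffices to work on a small neighbourhood $U$ of an arbitrary point of $\partial \calO_j$ not meeting any other boundary component. There I rewrite
$$T_\rel = \bigl[(T_\ren)_\calO - (T_\ren)_{\calO_j}\bigr] - \sum_{i \ne j}(T_\ren)_{\calO_i},$$
and note that the sum on the right is already smooth on $U$ because each $(T_\ren)_{\calO_i}$, $i\ne j$, is smooth away from $\partial\calO_i$. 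By the formulas of Theorem~\ref{expression of T} the task reduces to showing that the diagonals of $\breve H_\calO - \breve H_{\calO_j}$ and $\breve H_\calO^{-1} - \breve H_{\calO_j}^{-1}$, together with the first and second partial derivatives in $x$ and $x'$ taken before restriction, extend smoothly up to $\partial \calO_j$ from each side.

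To prove this I use the Krein-type single-layer representation $R_{\calO,\lambda} = R_{\free,\lambda} - \calS_\lambda Q_\lambda^{-1} \calS_\lambda^*$ and its analogue for $\calO_j$ in terms of $p_j Q_\lambda p_j$. Combining the block decomposition $Q_\lambda = \tilde Q_\lambda + T_\lambda$ from \eqref{Q decomp} with the identity $Q_\lambda^{-1} - \tilde Q_\lambda^{-1} = -\tilde Q_\lambda^{-1} T_\lambda Q_\lambda^{-1}$, every summand of $R_{\calO,\lambda} - R_{\calO_j,\lambda}$ is forced to carry at least one factor pairing $\partial\calO_j$ against some $\partial\calO_k$ with $k \ne j$ through the free Green kernel $G_{\free,\lambda}$. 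Since $G_{\free,\lambda}(x,y)$ is jointly smooth whenever $x$ and $y$ range over disjoint closed sets, the resulting Schwartz kernel of $R_{\calO,\lambda} - R_{\calO_j,\lambda}$ extends smoothly to $\partial\calO_j$ from either side. Passing from the resolvent to $H$ and $H^{-1}$ via the spectral integrals used in the proof of Theorem~\ref{smooth K thm}, and invoking the bound \eqref{expdecay} to interchange derivatives with the $\lambda$-integral, transfers the boundary smoothness to the kernels appearing in Theorem~\ref{expression of T}.

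For the integrability of $(T_\rel)_{00}$, Theorem~\ref{expression of T} applied in the relative setting gives
$$(T_\rel)_{00} = \tfrac{1}{2}\,\breve H_\rel\big|_\diag + \tfrac{1}{8}\,\Delta\bigl[\breve{D_{f}}\big|_\diag\bigr],$$
where $H_\rel = D_{f_1}$ with $f_1(\lambda) = \sqrt{\lambda^2+m^2} - m$ and $D_f$ corresponds to $f(\lambda) = (\lambda^2+m^2)^{-1/2} - m^{-1}$; the additive constants are invisible to the relative combination, and for $m = 0$ an analogous truncation of $\lambda^{-1}$ works. Both $f_1$ and $f$ lie in $\tilde{\mathcal P}_\theta$, so Theorem~\ref{frel trace thm} ensures that $D_{f_1}$ and $D_f$ are trace class and that the integrals of their diagonal kernels over $\R^d$ equal their operator traces; in particular $\breve H_\rel\big|_\diag \in L^1(\R^d)$. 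For the Laplacian term, the boundary smoothness established above, combined with exponential spatial decay of the diagonal (inherited from free-resolvent estimates controlling $\calS_\lambda Q_\lambda^{-1} \calS_\lambda^*$ at points far from $\partial\calO$), makes $\Delta[\breve{D_f}|_\diag]$ bounded, smoothly extendable from each side of $\partial\calO$, and exponentially decaying at infinity, hence integrable on $\R^d$. The main obstacle is the boundary-smoothness step: Theorem~\ref{frel trace thm} alone yields only continuous boundary traces, so genuine smoothness up to $\partial \calO_j$ must be extracted from the common-boundary cancellation encoded in the off-diagonal decomposition of $Q_\lambda$.
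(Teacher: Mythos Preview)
Your smoothness argument is close in spirit to the paper's, which also reduces everything to the layer-potential representation and the fact that $Q_\lambda^{-1}-\tilde Q_\lambda^{-1} = -\tilde Q_\lambda^{-1}T_\lambda Q_\lambda^{-1}$ is smoothing. One step in your write-up is incomplete, though: the observation that ``$G_{\free,\lambda}(x,y)$ is jointly smooth whenever $x$ and $y$ range over disjoint closed sets'' does not by itself yield boundary regularity. After your block decomposition there remain terms of the form $\calS_{j,\lambda}\,M\,\calS_{j,\lambda}^t$ with $M$ smoothing on $\partial\calO_j$ (coming from the $(j,j)$ block $p_j Q_\lambda^{-1}p_j-(p_jQ_\lambda p_j)^{-1}$), and here the outer factors pair the evaluation point $x$ near $\partial\calO_j$ with integration over $\partial\calO_j$ itself, so the free Green kernel is \emph{not} evaluated on disjoint sets. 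What is actually needed, and what the paper invokes via \cite{RTF}, is the single-layer mapping property $\calS_{j,\lambda}: C^\infty(\partial\calO_j)\to C^\infty(\overline{\calE})\oplus C^\infty(\overline{\calO})$; once the middle is smoothing this yields the claimed smoothness up to the boundary from each side, together with quantitative $C^k$-bounds uniform in $\lambda$ that justify differentiating under the $\lambda$-integral.

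The more serious gap is in the integrability of $\Delta[(\breve H^{-1})_\rel|_\diag]$ when $m=0$. Your claim of ``exponential spatial decay'' holds only for $m>0$, where the spectral integral $\int_m^\infty(\cdots)R_{\rel,\rmi\lambda}\,\der\lambda$ inherits a factor $e^{-c\,m\,\rho(x)}$. For $m=0$ the integration runs down to $\lambda=0$ and exponential decay is lost; one obtains only the polynomial rate $|\Delta h_\rel(x)|\le C\rho(x)^{-(2d-1)}$, which is just barely integrable in $\R^d$. Establishing this rate is the actual content of the proof: the paper bounds $|\Delta k_{\calO,\lambda}(x,x)|$ by products of $\|\nabla G_{\free,\lambda}(x,\cdot)\|_{H^{1/2}(\partial\calO)}$, $\|\Delta G_{\free,\lambda}(x,\cdot)\|_{H^{1/2}(\partial\calO)}$ and $\|Q_\lambda^{-1}\|_{H^{1/2}\to H^{-1/2}}$, tracks the joint $(\lambda,\rho(x))$-dependence via Hankel-function asymptotics, and then integrates in $\lambda$. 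Your ``inherited from free-resolvent estimates'' gestures at this but does not carry it out. Relatedly, no additive or multiplicative truncation of $f(\lambda)=\lambda^{-1}$ lies in $\tilde{\mathcal P}_\theta$ (Definition~\ref{Pclass} requires $f=O(|\lambda|^a)$ with $a>0$ near $0$), so Theorem~\ref{frel trace thm} cannot be applied to $(H^{-1})_\rel$ when $m=0$; fortunately you do not actually rely on that conclusion for the Laplacian term, but the assertion should be removed.
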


\begin{proof}
By Theorem \ref{stress energy tensor} we have
$$
(T_\rel)_{00}=\frac{1}{2} \breve H_\rel|_\diag+\frac{1}{8}\Delta((\breve  H^{-1})_\rel|_\diag).
$$
The theorem was shown in \cite{RTF} for the part $\frac{1}{2} \breve H_\rel|_\diag$ and the same method of proof can also be applied to the second term. We provide the full details here for the sake of completeness.
We use two estimates proved in \cite{RTF} which we now explain. Recall that the relative resolvent is given by
 \begin{equation*}
R_{\rel,\lambda}:=R_{\calO,\lambda}-R_{\free,\lambda}-\sum_{j=1}^N (R_{\calO_j,\lambda}-R_{\free,\lambda}),
\end{equation*}
and $R_{\calO,\lambda}=(-\Delta_\calO-\lambda^2)^{-1}$, $R_{\free,\lambda}=(-\Delta_\free-\lambda^2)^{-1}$, and $R_{\calO_i,\lambda}=(-\Delta_{\calO_i}-\lambda^2)^{-1}$. For the integral kernel $\breve R_{\rel,\lambda}$ we write $G_{\rel,\lambda}$.

As shown in \cite{RTF} in the proof of Theorem 1.5 the integral kernel $G_{\rel,\lambda}$ of $R_{\rel, \lambda}$ is smooth up to the boundary on $\overline{\calE}$ as well as to $\overline{\calO}$ and its $C^k$-norms on compact subsets $K \subset \overline \calE \times \overline{\calE}$ satisfy the bound
$$
 \| G_{\rel, \lambda}  \|_{C^k(K)} \leq C_{K,k} (\log|\lambda|)^\ell e^{-\delta' \Im(\lambda)}
$$
for some $\ell\geq 0$ for $\lambda$ in the sector containing the imaginary axis.

We consider the operator

\begin{equation}
\label{Hrel and Rrel eqn}
(H^{-1})_\rel = H_{\calO}^{-1} - H^{-1}_\free - \sum_{j=1}^N (H_{\calO_j}^{-1} - H^{-1}_\free)
= \frac{2}{\pi}\int_m^\infty \frac{\lambda}{\sqrt{\lambda^2-m^2}} R_{\rel,\rmi \lambda} \der \lambda.
\end{equation}

A similar bound holds for $K \subset \overline{\calO} \times \overline{\calE}$, $K \subset \overline{\calE} \times \overline{\calO}$, and $K \subset \overline{\calO} \times \overline{\calO}$. The $\log$-factor in the estimate is only needed in dimension two.
It then follows that $(H^{-1})_\rel$ has an integral kernel that extends smoothly to $\overline{\calE}$ as well as to $\overline{\calO}$ (i.e. smooth up to the boundary). 

Similarly $\breve H_\rel$ is also smooth up to the boundary. Hence, by Theorem \ref{stress energy tensor} and Definition \ref{relstress}, we obtain the smoothness of $T_\rel$ up to the boundary. 
In order to show integrability we recall an estimate for the diagonal of the integral kernel of the resolvent difference, in particular
 \cite{RTF}*{Theorem 2.9, Equ. (21) and (22)}.

Let $\lambda\in \mathfrak{D}_\nu$ (See \eqref{Deps def}, i.e. a sector in the upper half plane) and $k_{\calO,\lambda}(x,y)$ denote the integral of $R_{\calO,\lambda}-R_{\free,\lambda}$, then we have
\begin{equation*}
k_{\calO,\lambda}(x,y)=-\langle G_{\free,\lambda}(x,\cdot),Q_\lambda^{-1}(\overline{G}_{\free,\lambda}(y,\cdot))\rangle,
\end{equation*}
which implies
\begin{align*}
|\nabla (k_{\calO,\lambda}(x,x))|
\le & C\sum_{j=1}^d \Big(| \left(\langle \partial_{j} G_{\free,\lambda}(x,\cdot),Q_\lambda^{-1}(\overline{G}_{\free,\lambda}(x,\cdot))\rangle\right)|
\\
&+| \left(\langle  G_{\free,\lambda}(x,\cdot),Q_\lambda^{-1}(\partial_{j}\overline{G}_{\free,\lambda}(x,\cdot))\rangle\right)|\Big)
\\
\le  & C | \left(\langle \nabla G_{\free,\lambda}(x,\cdot),Q_\lambda^{-1}(\overline{G}_{\free,\lambda}(x,\cdot))\rangle\right)|,
\end{align*}
and
\begin{align*}
|\Delta (k_{\calO,\lambda}(x,x))|
\le C &\sum_{j=1}^d | \partial_{j}\partial_{j}\left(\langle  G_{\free,\lambda}(x,\cdot),Q_\lambda^{-1}(\overline{G}_{\free,\lambda}(x,\cdot))\rangle\right)|
\\
\le C &\Bigg( | \langle  \nabla G_{\free,\lambda}(x,\cdot),Q_\lambda^{-1}(\nabla\overline{G}_{\free,\lambda}(x,\cdot))\rangle|
\\
&+|\langle \Delta G_{\free,\lambda}(x,\cdot),Q_\lambda^{-1}(\overline{G}_{\free,\lambda}(x,\cdot))\rangle|\Bigg)
\\
\le C &\Bigg( \|\nabla G_{\free,\lambda}(x,\cdot)\|_{H^{\frac{1}{2}}}\|Q_\lambda^{-1}\|_{H^{\frac{1}{2}}\to H^{-\frac{1}{2}}}\|\nabla\overline{G}_{\free,\lambda}(x,\cdot)\|_{H^{\frac{1}{2}}}
\\
&+\|\Delta G_{\free,\lambda}(x,\cdot)\|_{H^{\frac{1}{2}}}\|Q_\lambda^{-1}\|_{H^{\frac{1}{2}}\to H^{-\frac{1}{2}}}\|\overline{G}_{\free,\lambda}(x,\cdot))\|_{H^{\frac{1}{2}}}\Bigg).
\end{align*}
Let $\rho(x)=\dist(x,\partial \calO)$. By Lemma A.2 of \cite{RTF} we have for $\rho(x)\lambda\le 1$,
\begin{equation*}
\sup_{y\in\partial \calO}|\partial^\alpha_x \partial^\beta_y G_{\free,\lambda}(x,y)|\le C
\left\{
\begin{aligned}
&\rho^{-(d-2+|\alpha|+|\beta|)}(x)&(d\ge 3),\\
&|\ln(\rho(x)\lambda)|+\rho^{-(|\alpha|+|\beta|)}(x)&(d=2),
\end{aligned}
\right.
\end{equation*} 
and $\ln(\rho(x)\lambda)$ disappears when $|\alpha|+|\beta|\ge 1$. Moreover, for $\rho(x)\lambda> 1$, one has
\begin{equation*}
\sup_{y\in\partial \calO}|\partial^\alpha_x \partial^\beta_y G_{\free,\lambda}(x,y)|\le C |\lambda|^{d-2+|\alpha|+|\beta|}e^{\Im \lambda \rho(x)}.
\end{equation*} 
This shows
\begin{equation*}
d\ge 3, \rho(x)>1\implies \left\{
\begin{aligned}
&\| G_{\free,\lambda}(x,\cdot)\|_{H^{\frac{1}{2}}}\le C_1 e^{-C_2(\Im \lambda)\rho(x)}\rho^{-(d-2)}(x),\\
&\|\nabla G_{\free,\lambda}(x,\cdot)\|_{H^{\frac{1}{2}}}\le C_1 e^{-C_2(\Im \lambda)\rho(x)}\rho^{-(d-1)}(x),\\
&\|\Delta G_{\free,\lambda}(x,\cdot)\|_{H^{\frac{1}{2}}}\le C_1 e^{-C_2(\Im \lambda)\rho(x)}\rho^{-d}(x),
\end{aligned}
\right.
\end{equation*}
and
\begin{equation*}
d=2, \rho(x)>1\implies \left\{
\begin{aligned}
&\| G_{\free,\lambda}(x,\cdot)\|_{H^{\frac{1}{2}}}\le C_1 e^{-C_2(\Im \lambda)\rho(x)}|\ln(\rho(x)\lambda)|,\\
&\|\nabla G_{\free,\lambda}(x,\cdot)\|_{H^{\frac{1}{2}}}\le C_1 e^{-C_2(\Im \lambda)\rho(x)}\rho^{-1}(x),\\
&\|\Delta G_{\free,\lambda}(x,\cdot)\|_{H^{\frac{1}{2}}}\le C_1 e^{-C_2(\Im \lambda)\rho(x)}\rho^{-2}(x),
\end{aligned}
\right.
\end{equation*}
for some positive $C_1$ and $C_2$. By Corollary 2.8 of \cite{RTF} we have
$$\|Q_\lambda^{-1}\|_{H^{\frac{1}{2}}\to H^{-\frac{1}{2}}}\le C(1+|\lambda|^2).$$
Now we can conclude that 
\begin{equation}
\label{d3kernel}
d\ge 3, \rho(x)>1\implies \left\{
\begin{aligned}
&| k_{\calO,\lambda}(x,x)|\le C_1 \rho^{-2d+4}(x) e^{-C_2 \Im\lambda \rho(x)},\\
&|\nabla (k_{\calO,\lambda}(x,x))|\le C_1 \rho^{-2d+3}(x) e^{-C_2 \Im\lambda \rho(x)},\\
&|\Delta (k_{\calO,\lambda}(x,x))|\le C_1 \rho^{-2(d-1)}(x) e^{-C_2 \Im\lambda \rho(x)},
\end{aligned}
\right.
\end{equation}
and
\begin{equation}
\label{d2kernel}
d=2, \rho(x)>1\implies \left\{
\begin{aligned}
&| k_{\calO,\lambda}(x,x)|\le C_1 e^{-C_2(\Im \lambda)\rho(x)}|\ln(\rho(x)\lambda)|^2,\\
&|\nabla (k_{\calO,\lambda}(x,x))|\le C_1 e^{-C_2(\Im \lambda)\rho(x)}|\ln(\rho(x)\lambda)|\rho^{-1}(x),\\
&|\Delta (k_{\calO,\lambda}(x,x))|\le C_1 e^{-C_2 \Im\lambda \rho(x)}|\ln(\rho(x)\lambda)|\rho^{-2}(x).
\end{aligned}
\right.
\end{equation}
Let $h_\rel=(\breve H^{-1})_\rel|_\diag$. That is
$h_\rel(x)=\frac{2}{\pi}\int_m^\infty \frac{\lambda}{\sqrt{\lambda^2-m^2}} G_{\rel,\rmi \lambda}(x,x)\der \lambda$.
For $m\neq 0$, one can use equations \eqref{d3kernel} and \eqref{d2kernel} to get the decay rate of $h_\rel(x)$ by integrating over $\lambda$. That is, for $\rho(x)>1$, $h_\rel (x)$ has a decay of $\rho^{-k}(x)e^{-Cm\rho(x)}$ with both $k$ and $C$ being positive. This warrants the integrability of $\Delta h_\rel (x)$ for $d\ge 2$, $m\neq 0$ and $\rho(x)>1$. Therefore, we will now focus on the case $m=0$. By integrability of the integrand we can interchange differentiation and integration and therefore get
\begin{align*}
&\nabla h_\rel (x)=\nabla \left(\frac{2}{\pi}\int_0^\infty  G_{\rel,\rmi \lambda}(x,x) \der \lambda\right)
=\frac{2}{\pi}\int_0^\infty  \nabla\left(G_{\rel,\rmi \lambda}(x,x)\right) \der \lambda,\\
&\Delta h_\rel (x)=\Delta \left(\frac{2}{\pi}\int_0^\infty  G_{\rel,\rmi \lambda}(x,x) \der \lambda\right)
=\frac{2}{\pi}\int_0^\infty  \Delta\left(G_{\rel,\rmi \lambda}(x,x)\right) \der \lambda.
\end{align*}
Again, by integrating over $\lambda$, we have
\begin{equation}
\label{DofHrel}
d\ge 2, \rho(x)>1\implies \left\{
\begin{aligned}
&|\nabla h_\rel (x)|\le C \rho^{-2d+2}(x),\\
&|\Delta h_\rel (x)|\le C \rho^{-2d+1}(x).
\end{aligned}
\right.
\end{equation}

Let $\Omega\subset \R^d$ be an open set with $\dist(\Omega,\partial \calO)>1$ and $\varphi\in C_0^\infty(\R^d)$ satisfy $0\le \varphi \le 1$ and $\varphi=1$ in a neighbourhood of $\R^d\backslash\Omega$. Then we have the decomposition
\begin{equation}
\label{hrel decomp eqn}
\Delta h_\rel=\Delta[(1-\varphi)h_\rel]+\Delta (\varphi h_\rel).
\end{equation}
The integrability of $\Delta (\varphi h_\rel)$ in equation \eqref{hrel decomp eqn} follows from the smoothness property of the kernel of $(\breve H^{-1})_\rel$ at the diagonal as shown above. Thereby the integrability of $\Delta h_\rel$ on $\R^d$ is equivalent to the one of $\Delta[(1-\varphi)h_\rel]$ on $\supp(1-\varphi)$. This follows immediately from equation \eqref{DofHrel}. Therefore, we have shown the integrability of $\Delta((\breve H^{-1})_\rel|_\diag)$ on $\R^d$. Finally, the integrability of $T_\rel$ follows from Theorem \ref{expression of T} and the definition of relative stress energy tensor (i.e. Definition \ref{relstress}).

\end{proof}

\begin{definition}
The relative energy is defined as
\begin{equation*}
E_{\rel}=\int_{\R^d} (T_\rel)_{00} \dd x=\int_{\R^d} \left( ((T_\ren)_\calO)_{00} -\sum_{i=1}^N ((T_\ren)_{\calO_i})_{00}\right) \dd x .
\end{equation*}
\end{definition}

\begin{theorem}
\label{Erel thm}
We have the equality
\begin{equation}
\label{Erel eqn}
E_\rel = \frac{1}{2} \Tr\, H_{\rel} = \frac{1}{2 \pi} \int_{m}^\infty \frac{\omega}{ \sqrt{\omega^2 - m^2}} \Xi(\rmi \omega) \der \omega.
\end{equation}
\end{theorem}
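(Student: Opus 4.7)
The plan is to start from the formula for the time-time component of the renormalised stress energy tensor in Theorem \ref{expression of T}, take the relative combination, and reduce the identity \eqref{Erel eqn} to two separate claims.

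Applying Theorem \ref{expression of T} to each configuration $\calO$ and $\calO_i$ and subtracting as in Definition \ref{relstress}, one obtains on $X$
\begin{equation*}
(T_\rel)_{00}(x) = \tfrac12 \breve H_\rel(x,x) + \tfrac18 \Delta\bigl[(\breve H^{-1})_\rel|_\diag\bigr](x).
\end{equation*}
Both terms are integrable on $\R^d$ by Theorem \ref{integrability thm}, so
\begin{equation*}
E_\rel = \tfrac12 \int_{\R^d}\breve H_\rel(x,x)\,\der x + \tfrac18 \int_{\R^d}\Delta\bigl[(\breve H^{-1})_\rel|_\diag\bigr]\,\der x.
\end{equation*}
It suffices to prove that (a) the first integral equals $\Tr H_\rel$, and (b) the second integral vanishes.

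For (a), I would apply Theorem \ref{frel trace thm} with $f(\lambda)=\sqrt{\lambda^2+m^2}-m$. Writing $f(\lambda)=g(\lambda^2)$ with $g(u)=\sqrt{u+m^2}-m$, the function $g$ is holomorphic and polynomially bounded in a sector containing $[0,\infty)$, and $f(\lambda)=O(\lambda^2)$ as $\lambda\to 0$, so $f\in\tilde{\mathcal{P}}_\theta$. A direct computation from the definition of $D_f$ shows $D_f = H_\rel$, the additive $-m$ constants cancelling because the coefficients in the relative combination sum to zero. Theorem \ref{frel trace thm} then gives that $H_\rel$ is trace class with $\int_{\R^d}\breve H_\rel(x,x)\,\der x = \Tr H_\rel$, together with the contour integral expression. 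Deforming $\tilde\Gamma$ upward onto the branch cut of $\sqrt{\lambda^2+m^2}$ along $\rmi[m,\infty)$—the identical contour deformation carried out in the discussion immediately following Theorem \ref{frel trace thm}, and justified by the exponential decay $|\Xi(\lambda)|=O(e^{-\delta'\Im\lambda})$—produces the second equality of \eqref{Erel eqn}.

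For (b), set $h_\rel := (\breve H^{-1})_\rel|_\diag$. Choose a cutoff $\chi_R$ equal to $1$ on $B_R$ and on a fixed neighbourhood of $\partial\calO$, supported in $B_{2R}$, and apply Green's identity on each connected component of $X=\calE\cup\calO_1\cup\cdots\cup\calO_N$ separately. The bulk exchange term $\int h_\rel\,\Delta\chi_R\,\der x$ vanishes as $R\to\infty$ by the $L^1$ and pointwise decay bounds on $h_\rel$ established in the proof of Theorem \ref{integrability thm} together with $\|\Delta\chi_R\|_\infty=O(R^{-2})$, and the boundary integrals on $\partial B_{2R}$ vanish by the decay of $\nabla h_\rel$. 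Using the invariance $R_\calO|_{\calO_j}=R_{\calO_j}|_{\calO_j}$—which forces $\breve H^{-1}_\calO(x,y)=\breve H^{-1}_{\calO_j}(x,y)$ for $x,y\in\calO_j$ and hence makes each $\calO_j$-interior contribution collapse to a sum over $k\ne j$ of terms smooth across $\partial\calO_j$—most boundary contributions on $\partial\calO$ cancel pairwise, leaving the residual
\begin{equation*}
\sum_{i=1}^N \int_{\partial\calO_i} \partial_{\nu_i}\bigl[\breve H^{-1}_{\calO_i}-\breve H^{-1}_\calO\bigr]|_\diag^{ext}\,\der\sigma.
\end{equation*}
The main obstacle is the vanishing of this residual sum. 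I would reduce it via the resolvent representation \eqref{Hrel and Rrel eqn} to the analogous statement $\sum_i\int_{\partial\calO_i}\partial_{\nu_i}[G_{\calO_i,\rmi\lambda}-G_{\calO,\rmi\lambda}](x,x)^{ext}\,\der\sigma = 0$ for each $\lambda>m$, and establish this by applying Green's second identity in $\calE$ to the smooth difference kernel $G_{\calO_i,\rmi\lambda}-G_{\calO,\rmi\lambda}$, exploiting the common Dirichlet condition on $\partial\calO_i$ shared by both $G_\calO$ and $G_{\calO_i}$ together with the symmetry of the Green's functions. Combining (a) and (b) then yields $E_\rel = \tfrac12\Tr H_\rel$ and hence \eqref{Erel eqn}.
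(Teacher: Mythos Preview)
Your approach follows the paper's closely. Part (a) is identical. For (b), both you and the paper apply the divergence theorem, use the interior identity $R_\calO|_{\calO_j}=R_{\calO_j}|_{\calO_j}$ together with smoothness across $\partial\calO_j$ of the $k\neq j$ terms to cancel the matching interior/exterior contributions, and arrive at the same residual
\[
\sum_{i}\int_{\partial\calO_i}\partial_\nu\bigl[(\breve H_\calO^{-1}-\breve H_{\calO_i}^{-1})|_\diag\bigr]_+\,\der\sigma.
\]

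The only point where you diverge is in how you propose to kill this residual. Your suggestion of ``Green's second identity in $\calE$'' applied to the two-variable difference kernel is vague, and it is not clear how such an identity would isolate the diagonal normal derivative on a single component $\partial\calO_i$ (note also that $G_{\calO_i,\lambda}-G_{\calO,\lambda}$ is not smooth in all of $\calE$, only near $\partial\calO_i$). The paper instead shows that each integrand vanishes \emph{pointwise}: writing $k(x,y)=\breve H_\calO^{-1}(x,y)-\breve H_{\calO_i}^{-1}(x,y)$, the shared Dirichlet condition together with the symmetry $k(x,y)=k(y,x)$ gives $k(x,y)=0$ whenever $x\in\partial\calO_i$ or $y\in\partial\calO_i$; the chain rule then yields $\partial_\nu(k|_\diag)(x)=(\partial_{\nu,x}k+\partial_{\nu,y}k)(x,x)=0$ on $\partial\calO_i$. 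You already named both ingredients (Dirichlet condition, symmetry); assembling them this way disposes of the residual term by term with no further integral identities.
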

\begin{proof}
We have
$$
\int_{\R^d} \Delta((\breve H^{-1})_\rel|_\diag) \dd x=\lim_{r \to \infty}\int_{B_r} \Delta((\breve H^{-1})_\rel|_\diag) \dd x,
$$
where $B_r$ is the ball of radius $r$ centred at the origin.
As $\Delta((\breve H^{-1})_\rel|_\diag)$ has only jump-type discontinuity across $\partial{X}$, we can apply the divergence theorem to the integral on $B_r$ for sufficiently large $r$. That is
\begin{multline*}
\int_{B_r} \Delta((\breve H^{-1})_\rel|_\diag) \dd x=\int_{\partial B_r}\partial_\nu((\breve H^{-1})_\rel|_\diag) \dd \sigma(x)\\+\int_{\partial X} \left(\partial_\nu((\breve H^{-1})_\rel|_\diag)\right)_+ \dd \sigma(x)+\int_{\partial X}\left(\partial_\nu((\breve H^{-1})_\rel|_\diag)\right)_- \dd \sigma(x),
\end{multline*}
where $(\cdot)_+$ and $(\cdot)_-$ are the exterior and interior limits respectively. 
From equation \eqref{DofHrel}, we also have
$$
|\nabla[(1-\varphi)h_\rel]|(x)\le \frac{C_\Omega}{(\dist(x,\partial \calO))^{2d-2}},
$$
which implies the contribution of the integral over $\partial B_r$ vanished as $r \to \infty$ and therefore
$$
\int_{\R^d} \Delta((\breve H^{-1})_\rel|_\diag) \dd x=\int_{\partial X} \left(\partial_\nu((\breve H^{-1})_\rel|_\diag)\right)_+ \dd \sigma(x)+\int_{\partial X}\left(\partial_\nu((\breve H^{-1})_\rel|_\diag)\right)_- \dd \sigma(x).
$$

From equation \eqref{Hrel and Rrel eqn}, we then have
\begin{align*}
\int_{\R^d} \nabla \cdot \nabla ((\breve H^{-1})_\rel|_\diag) \dd x
&=\int_{\R^d} \Delta((\breve H^{-1})_\rel|_\diag) \dd x
\\
&=\frac{2}{\pi}\int_{\R^d}\int_0^\infty  \left[\nabla \cdot \nabla (G_{\rel,\rmi \lambda}|_\diag)\right] \dd \lambda \dd x
\\
&=\frac{2}{\pi}\int_{\R^d}\int_0^\infty  \left[2\nabla \cdot (\nabla G_{\rel,\rmi \lambda}|_\diag)\right] \dd \lambda \dd x
\\
&=2\int_{\R^d}\left[\nabla \cdot (\nabla (\breve H^{-1})_\rel|_\diag)\right] \dd x .
\end{align*}

This shows
$$
\frac{1}{2}\int_{\R^d} \Delta((\breve H^{-1})_\rel|_\diag) \dd x=\int_{\partial X} \left(\partial_\nu (\breve H^{-1})_\rel|_\diag\right)_+ \dd \sigma(x)+\int_{\partial X}\left(\partial_\nu (\breve H^{-1})_\rel|_\diag\right)_- \dd \sigma(x).
$$
We start by showing that  for the restrictions to $\partial \calO_q$ we have the following identity
$$
\left[\left.\partial_\nu\left(\breve H_{\calO}^{-1} - \breve H^{-1}_\free -(\breve H_{\calO_q}^{-1} - \breve H^{-1}_\free)\right)\right|_\diag\right]_{\partial \calO_q,\pm}= \left[\left.\partial_\nu\left(\breve H_{\calO}^{-1}  -\breve H_{\calO_q}^{-1} \right)\right|_\diag\right]_{\partial \calO_q,\pm} =0.
$$
To see this we temporarily denote by $k(x,y)$ the integral kernel of  $\left(H_{\calO}^{-1}  - H_{\calO_q}^{-1} \right)$. This kernel vanishes $\calO_q \times \calO_q$ and the interior normal derivative therefore vanishes trivially. We therefore only need to concern ourself with the exterior normal derivative. As shown in the proof of Theorem \ref{integrability thm} the kernel $(H^{-1})_\rel$ is smooth. One concludes from this, using Theorem \ref{smooth K thm}, that $\left(H_{\calO}^{-1}  - H_{\calO_q}^{-1} \right)$ is smooth near $\calO_q \times \calO_q$ in $\overline \calE \times \overline \calE$.
The kernel $k$ satisfies Dirichlet boundary conditions in both variables in the sense that $k(x,y)=0$ if $y \in \partial \calO_q$ or if $x \in \partial \calO_q$. By the chain rule   $\left[\left.\partial_\nu\left(\breve H_{\calO}^{-1}  - \breve H_{\calO_q}^{-1} \right)\right|_\diag\right]_\pm(x)$
equals  $(\partial_{\nu,x} k(x,y) + \partial_{\nu,y} k(x,y))|_{y=x}$, which therefore vanishes on  $\partial \calO_q$. We then have

\begin{align*}
&\int_{\partial X} \left(\partial_\nu (\breve H^{-1})_\rel|_\diag \right)_\pm \dd \sigma(x)
\\
=&\sum_{q=1}^N\int_{\partial \calO_q} \left(\partial_\nu (\breve H^{-1})_\rel|_\diag \right)_\pm \dd \sigma(x)
\\
=&\sum_{q=1}^N\int_{\partial \calO_q}\left[\left.\partial_\nu\left(\breve H_{\calO}^{-1} - \breve  H^{-1}_\free - \sum_{p=1}^N (\breve H_{\calO_p}^{-1} - \breve H^{-1}_\free)\right)\right|_\diag\right]_\pm \dd \sigma(x)
\\
=&\sum_{q=1}^N \int_{\partial \calO_q}\left[-\sum_{p\ne q}^N  \left.\partial_\nu\left(\breve H_{\calO_p}^{-1}- \breve H^{-1}_\free \right)\right|_\diag\right]_\pm \dd \sigma(x).
\end{align*}

By Theorem \ref{smooth K thm} we know that $\left.\partial_\nu\left(\breve H_{\calO_p}^{-1}- \breve H^{-1}_\free \right)\right|_\diag$ is smooth across the boundary $\partial\calO_q$ for $p \ne q$, which implies
$$
 \int_{\partial \calO_q}\left[-\sum_{p\ne q}^N  \left.\partial_\nu\left(\breve H_{\calO_p}^{-1}- \breve H^{-1}_\free \right)\right|_\diag\right]_+ \dd \sigma(x)=- \int_{\partial \calO_q}\left[-\sum_{p\ne q}^N  \left.\partial_\nu\left(\breve H_{\calO_p}^{-1}- \breve H^{-1}_\free \right)\right|_\diag\right]_- \dd \sigma(x).
$$
Hence we have $\int_{\R^d} \Delta((\breve H^{-1})_\rel|_\diag) \dd x=0$, which verifies the first equation in \eqref{Erel eqn}. The representation of the trace of $H_{\rel}$ in terms of $\Xi$ follows via $ \Xi'(\lambda) = -2\lambda \tr\left( R_{\rel,\lambda} \right)$ from Theorem 4.2 in \cite{RTF}, Theorem \ref{frel trace thm} and equation \eqref{Hrel and Rrel eqn}. 

\end{proof}

\section{Estimates on the relative resolvent}
\label{Estimates on the relative resolvent}
In preparation for the proof of the variational formula we will need some additional estimates on the relative resolvent
 $R_{\rel,\lambda}$, which we collect in this section. We have the following well known layer potential representation (see for example \cite{RTF}*{Equ. (19)})
\begin{equation}
\label{Rdiff}
R_{\calO, \lambda} - R_{\free, \lambda}=-\calS_\lambda Q_\lambda^{-1}\calS_\lambda^t,
\end{equation}
which gives
\begin{equation}
\label{Rrel}
R_{\rel,\lambda}=-\calS_\lambda (Q_\lambda^{-1}-\tilde{Q}_\lambda^{-1}) \calS_\lambda^t
\end{equation}
Also, let $\rho$ be the function defined as 
\begin{equation*}
\rho(\lambda)
:=\left\{
\begin{aligned}
&\lambda^{d-4} & &\text{if $0< \lambda \le 1$ and $d=2,3$,}\\
&|\log \lambda| +1 & &\text{if $0< \lambda \le 1$ and $d=4$,}\\
&1 & &\text{if $0< \lambda \le 1$ and $d>4$,}\\
&1 & &\text{if $\lambda>1$}.
\end{aligned}
\right.
\end{equation*}

For $i,j,k\in \mathbb{Z}$, let $\rho_{i,j;k}$ be the functions defined as
\begin{equation*}
\rho_{i,j;k}(\lambda):=
\left\{
\begin{aligned}
&(|\log \lambda|+1)^i \lambda^j& &\text{for $0<\lambda \le 1$},\\
&\lambda^{k} & &\text{for $\lambda>1$}.
\end{aligned}
\right.
\end{equation*}
The following proposition partially follows from \cite[Proposition 2.1]{RTF} and extends \cite[Proposition 2.2]{RTF}.
Let us summarise some mapping properties of the layer potential operators.

\begin{proposition} \label{S properties}
	Let $d\ge 2$ and $-\frac{3}{2}\le s \le -\frac{1}{2}$. 
	Then we have the following properties of $\calS_{\lambda}$ for $\lambda \in \mathfrak{D}_\nu$ (See \eqref{Deps def}, i.e. a sector in the upper half plane).
	\
	
	\begin{enumerate}
		\item \label{S prop 1} If $\Re(\lambda) \ne 0$ then $\|\calS_{\lambda}\|_{H^{s}(\partial \calO)\to L^2(\R^d)}\le C_s \frac{\sqrt{1+|\lambda|^2}}{\Re(\lambda)\Im(\lambda)}$. 
		\item \label{S prop 2} Let $\chi\in C^\infty(\R^d)$ be supported in $V$, where $V\subset \R^d$ is an open set with smooth boundary and $\dist(V,\partial \calO)>0$. For $0<\delta < \dist(V,\partial \calO)$, we have that $\chi \calS_{\lambda}: H^{s}(\partial \calO)\to L^2(V)$ is a Hilbert-Schmidt operator whose Hilbert-Schmidt norm is bounded by $$\|\chi \calS_{\lambda}\|_{HS(H^{s}(\partial \calO)\to L^2(\R^d))} \le C_{\delta,\nu,k}\, \sqrt{\rho(\Im \lambda)} e^{-\delta \Im \lambda}$$ which implies $$\|\calS_{\lambda}\|_{H^{s}(\partial \calO)\to L^2(\R^d)} \le C_{\delta,\nu,s}\left(1+\sqrt{\rho(\Im \lambda)}\right).$$
		\item \label{S prop 3} $\|Q_\lambda^{-1}\|_{H^{\frac{1}{2}}(\partial \calO) \to H^{-\frac{1}{2}}(\partial \calO)} \le C_\nu (1+(\Im \lambda)^2)^2.$
	\end{enumerate}
\end{proposition}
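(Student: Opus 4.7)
Part \textit{(3)} is precisely the content of \cite{RTF}*{Corollary 2.8}, which I would cite directly; the new content lies in \textit{(1)} and \textit{(2)}. The strategy is to exploit the factorisation $\calS_\lambda = R_{\free,\lambda}\,\gamma^*$, where $\gamma^*:H^s(\partial\calO)\to H^{s-1/2}_{\compp}(\R^d)$ is the transpose of the Sobolev trace operator, bounded on the indicated range of $s$ by duality from the classical trace theorem. This reduces both parts to mapping estimates for the free resolvent combined with pointwise estimates on its Schwartz kernel.

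For part \textit{(1)} the starting observation is that $\lambda\in\mathfrak{D}_\nu$ with $\Re(\lambda)\ne 0$ forces $\Im(\lambda^2)=2\Re(\lambda)\Im(\lambda)\ne 0$, so by the spectral theorem
\[
\|R_{\free,\lambda}\|_{L^2\to L^2}\;=\;\frac{1}{\dist(\lambda^2,[0,\infty))}\;\le\;\frac{1}{2|\Re(\lambda)\Im(\lambda)|}.
\]
Combined with the elliptic identity $(-\Delta+1)R_{\free,\lambda}=1+(1+\lambda^2)R_{\free,\lambda}$, this controls $\|R_{\free,\lambda}\|_{L^2\to H^2}$ and, by duality, $\|R_{\free,\lambda}\|_{H^{-2}\to L^2}$. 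A direct Fourier computation refines this: the symbol $(|\xi|^2-\lambda^2)^{-1}$ satisfies $\sup_\xi(1+|\xi|^2)^{t}/|\xi^2-\lambda^2|^2\le C_t(1+|\lambda|^2)/|\Re\lambda\,\Im\lambda|^2$ for $t\in[-2,0]$, and extracting square roots gives the bound on $\|R_{\free,\lambda}\|_{H^{t}\to L^2}$. Composing with the continuity $\gamma^*:H^s(\partial\calO)\to H^{s-1/2}(\R^d)$ for $s\in[-3/2,-1/2]$ then yields \textit{(1)}.

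For part \textit{(2)} I would follow closely the strategy of \cite{RTF}*{Proposition 2.2}. The hypothesis $\dist(V,\partial\calO)>0$ makes the Schwartz kernel $\chi(x)G_{\free,\lambda}(x,y)$ of $\chi\calS_\lambda$ smooth on its support, and the pointwise estimates on $\partial_x^\alpha\partial_y^\beta G_{\free,\lambda}$ from \cite{RTF}*{Lemma A.2} (recalled already in the proof of Theorem \ref{integrability thm}), when integrated in $x\in\supp\chi$ and Sobolev-averaged in $y\in\partial\calO$, produce the Hilbert--Schmidt estimate. The macroscopic distance $\delta$ enters through the exponential decay $e^{-\Im(\lambda)|x-y|}$ of the free Green's function, yielding the factor $e^{-\delta\Im\lambda}$. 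The function $\rho(\Im\lambda)$ absorbs exactly the dimension-dependent small-$\Im\lambda$ behaviour of the low-frequency part of $G_{\free,\lambda}$ (polynomial in $d=2,3$, logarithmic in $d=4$, regular in $d\ge 5$). The operator-norm estimate then follows by decomposing $\R^d$ into a large ball, where the Hilbert--Schmidt bound with a suitable $\chi\equiv 1$ applies, and its complement, which is handled by part \textit{(1)}.

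The principal obstacle is the delicate dimension-dependent analysis in part \textit{(2)}: one must simultaneously track the $|\Re\lambda\,\Im\lambda|^{-1}$ singularity of the resolvent, the exponential-decay regime versus the low-frequency regime in $\Im\lambda$, and the boundary Sobolev orders $s\in[-3/2,-1/2]$, while handling the borderline logarithmic case $d=4$ and the polynomial low-dimensional cases $d=2,3$ separately. These are precisely the cases that the piecewise definition of $\rho$ encodes, and the bulk of the technical work is in the kernel estimates of \cite{RTF}*{Lemma A.2}, which we can import essentially verbatim.
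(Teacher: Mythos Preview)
Your treatment of parts \textit{(1)} and \textit{(3)} matches the paper's approach: the factorisation $\calS_\lambda=R_{\free,\lambda}\circ\gamma^*$ together with the Fourier-multiplier bound on the free resolvent is exactly what is used, and part \textit{(3)} is indeed just a citation. The Hilbert--Schmidt estimate in part \textit{(2)} is also obtained along the lines you describe (the paper phrases it via an $L^2(V\times V')$ bound on $G_{\free,\lambda}$ from \cite{RTF} followed by a Sobolev trace in the second variable, but this is equivalent to integrating the pointwise bounds you cite).

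There is, however, a genuine gap in your final step, the operator-norm bound $\|\calS_\lambda\|_{H^s\to L^2}\le C(1+\sqrt{\rho(\Im\lambda)})$. Your proposed decomposition is inverted. The Hilbert--Schmidt estimate requires $\dist(\supp\chi,\partial\calO)>0$, so it \emph{cannot} be applied with $\chi\equiv 1$ on any ball containing $\partial\calO$; the singularity of $G_{\free,\lambda}(x,y)$ at $x=y\in\partial\calO$ is precisely what the separation hypothesis is there to avoid. Conversely, part \textit{(1)} applies only when $\Re(\lambda)\ne 0$ and yields a bound $\sim(\Re(\lambda)\Im(\lambda))^{-1}$, which blows up on the imaginary axis and is therefore useless for a uniform estimate on the sector $\mathfrak{D}_\nu$.

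The correct decomposition is the opposite of what you wrote: take $\phi$ compactly supported with $\phi\equiv 1$ near $\partial\calO$. The piece $(1-\phi)\calS_\lambda$ is controlled by the Hilbert--Schmidt bound. The piece $\phi\calS_\lambda=\phi G_{\free,\lambda}\gamma^*$ requires a separate argument: the paper bounds $\phi G_{\free,\lambda}\phi:H^{-2}\to L^2$ (equivalently $L^2\to H^2$) by the Schur test applied to the explicit Hankel-function kernel, carried out in Appendix~\ref{Bounds on the free resolvent}, which gives a bound that is $O(1)$ for $d\ge 3$ and $O(|\log\Im\lambda|)$ for $d=2$ as $\Im\lambda\to 0$. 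This near-boundary estimate is the missing ingredient in your plan.
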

\begin{proof}
	Recall that we could also write the single layer potential operator as $\calS_\lambda=G_{\free,\lambda}\circ \gamma^*$. Note that $\gamma^*: H^{s}(\partial \calO) \to H^{s-\frac{1}{2}}(\R^d)$, $G_{\free,\lambda}: H^{s-\frac{1}{2}}(\R^d) \to H^{s+\frac{3}{2}}(\R^d)$ and the natural inclusion map $\iota: H^{s+\frac{3}{2}}(\R^d) \to L^2(\R^d)$ are bounded maps. For $-\frac{3}{2}\le s \le -\frac{1}{2}$, their norms are bounded by
	\begin{equation*}
	\left\{
	\begin{aligned}
	&\|\gamma^*\|_{H^s \to H^{s-\frac{1}{2}}}\le C_s , \\
	&\|G_{\free,\lambda}\|_{H^{s-\frac{1}{2}} \to H^{s+\frac{3}{2}}}\le \sup_{x\in\R}\left|\frac{\sqrt{1+x^2}}{x^2-\lambda^2}\right|\le \frac{\sqrt{1+|\lambda|^2}}{\Re(\lambda)\Im(\lambda)} ,\\
	&\|\iota\|_{H^{s+\frac{3}{2}}\to H^0}\le 1 .
	\end{aligned}
	\right.
	\end{equation*}
	The second bound follows from the spectral representation of $-\Delta_\free$. Therefore, we conclude the proof for part \eqref{S prop 1}. 
	
	Part \eqref{S prop 3} follows immediately from the bound on the Dirichlet-to-Neumann operator in \cite{RTF}. For part \eqref{S prop 2}, let $\calO\subset V'\subset \R^d$ be an open set with $\dist(V,V')>0$ and choose $0<\delta<\dist(V,V')$. In particular, $0<\delta<\dist(V,\partial\calO)$.  Let $k\in \N$. Now we have from \cite{RTF} the estimate
	\begin{multline*}
	\|(-\Delta)^{\frac{k}{2}} G_{\free,\lambda}\|^2_{L^2(V\times V')}
	\\
	\le C_{\delta,\nu,k}
	\left\{
	\begin{aligned}
	&(\Im \lambda)^{2k-2}\left(1+\left(\Im \lambda \log \left(\Im\lambda\right)\right)^2\right)e^{-2\delta \Im \lambda} & & \text{for $d=2$}\\
	&(\Im \lambda)^{2k}\left(1+\log\left(\Im \lambda\right)\right)e^{-2\delta \Im \lambda} & & \text{for $d=4$}\\
	&(\Im \lambda)^{d+2k-4}\left(1+|\Im \lambda |^{4-d}\right)e^{-2\delta \Im \lambda} & & \text{for $d\ne 2,4$}\\
	\end{aligned}
	\right.
	\end{multline*}
	where $\Delta=\Delta_x+\Delta_y$ is the Laplace operator on $V\times V'$. In other words, we have
	\begin{equation*}
	\|G_{\free,\lambda}\|^2_{H^{k}(V\times V')}\le C_{\delta,\nu,k}\,(1+(\Im \lambda)^{2k})\rho(\Im \lambda)e^{-2\delta \Im \lambda}.
	\end{equation*}
	Now by taking Sobolev trace, we have
	\begin{equation*}
	\|G_{\free,\lambda}\|^2_{H^{s}(V\times \partial \calO)}\le C_{\delta,\nu,k}\,(1+(\Im \lambda)^{2s+1})\rho(\Im \lambda)e^{-2\delta \Im \lambda}.
	\end{equation*}
	Statement  \eqref{S prop 2} follows, using the properties of the Hilbert-Schmidt norm (for example Section A.3.1 in \cite{ShubinPseudos}). Since $-\frac{3}{2}\le s \le -\frac{1}{2}$, we have
	\begin{equation}
	\label{S prop 2 eqn 1}
	\begin{aligned}	
	\|\chi \calS_{\lambda}\|^2_{HS(H^{s}(\partial \calO)\to L^2(V))} &\le C_1 \|\chi \calS_{\lambda}E_{-s}\|^2_{HS(L^2(\partial \calO) \to L^2(V))}
	\\
	&\le C_2 \| G_{\free,\lambda} \|^2_{H^{-s}(V\times \partial \calO)}
	\\
	&\le C_{\delta,\nu,k}\,(1+(\Im \lambda)^{-2s+1})\rho(\Im \lambda)e^{-2\delta \Im \lambda}
	\\
	&\le C_{\delta,\nu,k}\, \rho(\Im \lambda) e^{-2\delta \Im \lambda},
	\end{aligned}
	\end{equation}
	where $E_s=(\sqrt{-\Delta_{\partial\calO}}+1)^s$. To prove the last property of $\calS_{\lambda}$, we are left with proving the bound for $\phi\, \calS_{\lambda}$, where $\phi \in C^\infty(V'')$ is supported in $V''$ with $\calO\subset V'\subset V''$, $\phi(x)=1$ on $\calO$ and $1-\phi$ is supported in $V$. It suffices to bound $\phi \, G_{\free,\lambda} \, \phi : H^{s}(\R^d) \to L^2(\R^d)$ for $-\frac{3}{2}\le s \le -\frac{1}{2}$. Note that the explicit kernel of $\phi G_{\free,\lambda}\phi$, denoted by $K_\lambda(x,y)$, is given by
	\begin{equation*}
	K_\lambda(x,y)=\frac{i}{4}\phi(x)\phi(y)\left(\frac{\lambda}{2\pi |x-y|}\right)^{\frac{d-2}{2}}H^{(1)}_{\frac{d-2}{2}}(\lambda|x-y|),
	\end{equation*}
	where $H^{(1)}_\nu$ is the Hankel function. By the Schur test and estimates on the free resolvent (see Appendix \ref{Bounds on the free resolvent} for details), we have
	\begin{equation*}
	\|\phi G_{\free,\lambda}\phi\|_{L^2\to H^2}\le C_{r_0,\phi}
	\left\{
	\begin{aligned}
	&\rho_{1,0;0}(\Im \lambda) & &\text{for $d=2$},\\
	&1 & &\text{for $d\ne 2$}.
	\end{aligned}
	\right.
	\end{equation*}
	Taking the adjoint, we get the same bound for $H^{-2}\to L^2$. Finally, using the estimate \eqref{S prop 2 eqn 1} with $\chi=1-\phi$, we obtain
	$$
	\|\calS_{\lambda}\|_{H^{s}(\partial \calO)\to L^2(\R^d)} \le C_{\delta,\nu,s}\left(1+\sqrt{\rho(\Im \lambda)}\right)\quad \text{for $-\frac{3}{2}\le s \le -\frac{1}{2}$}.
	$$
\end{proof}
\begin{rem}
In the case of odd dimensions an easier argument can be used to provide a weaker estimate that is still sufficient for the purposes of this paper. Using the strong Huygens principle  one deduces (\cite[Section 3.1]{DZ2019})
\begin{equation}
\label{phiGphiodd}
\|\phi \, G_{\free,\lambda} \, \phi\|_{H^{-2} \to L^2}\le C_{\phi}(1+|\lambda|) \quad\text{for}\quad \lambda\in \mathfrak{D}_\nu ,
\end{equation}
which implies
$$
\|\phi \, G_{\free,\lambda} \, \phi\|_{H^{s-\frac{1}{2}} \to L^2}\le C_{\phi}(1+|\lambda|) \le C_{\phi,\nu}(1+\Im \lambda) \quad\text{for}\quad -\frac{3}{2}\le s \le -\frac{1}{2}.
$$
This gives $\|\phi \, \calS_{\lambda}\|_{H^{s} \to L^2}\le C_{\phi,\nu}(1+\Im \lambda)$ for $-\frac{3}{2}\le s \le -\frac{1}{2}$.
Applying inequality \eqref{S prop 2 eqn 1} for $\chi=1-\phi$ and using $\frac{-2s+1}{2} \ge 1$, we get
$$
\|\calS_{\lambda}\|_{H^{s}\to L^2} \le C_{\delta,\nu,s}\left(1+\Im \lambda+\sqrt{\rho(\Im \lambda)}\right)\quad \text{for $-\frac{3}{2}\le s \le -\frac{1}{2}$}.
$$
\end{rem}

\section{Hadamard variation formula and equivalence of approaches $(2)$ and $(3)$}
\label{Had:Sec}
In this section, we will show that a version of Hadamard variation of the renormalised stress energy tensor $(T_\ren)_{ij}$ defined in Section \ref{The renormalised stress-energy tensor} is related to the Hadamard variation formula for the resolvent. We will follow the methods developed in \cite{GarabedianSchiffer,Henry2005,HezariZelditch,Peetre} and derive a Hadamard variation formula for the relative resolvent, then apply it to the relative stress energy tensor. Short proofs of Hadamard variation formula can be found in \cite{Ozawa1982,SuzukiTsuchiya2016} for the case of bounded domains. Since we are dealing with an unbounded domain we extend theory to non-compact setting for the special case of boundary translation flows (see Definition \ref{def of BT flow}) in Theorem \ref{existence thm}, which are sufficient for our purposes.

\subsection{Hadamard variation formula}
Let $U$ be a possibly unbounded open subset in $\R^d$ with smooth compact boundary and $Y$ be a smooth vector field on $\R^d$. The flow, denoted dy $\varphi_\epsilon$ and generated by $Y$, gives,  for small $|\epsilon|$, a one-parameter family of smooth manifolds in $\R^d$, which is denoted by $U_\epsilon=\varphi_\epsilon(U)$. For our application, $U$ would be either $\mathcal{E}$ or $\calO$ as defined in Section \ref{Introduction} and we will only consider flows that generate rigid translations of obstacles. 
\begin{definition}
	\label{def of BT flow}
A flow $\varphi_\epsilon$ associated with vector field $Y$ is called a boundary translation if $Y$ is locally constant near $\partial U$.
\end{definition}
Following Peetre's derivation of Hadamard variation formula, we define the following variational derivative.
\begin{definition}
Let $u_\epsilon$ be a (weak-$*$) $C^1$ curve of functions in $\mathcal{D}'(U_\epsilon)$. The variational derivative at $\epsilon=0$ is given by
\begin{equation}
\label{variation derivative}
\delta_Y u:=\theta_Y u -Yu,
\end{equation}
where 
$
\theta_Y u=\lim_{\epsilon\to 0}\frac{\varphi_\epsilon^*u_\epsilon-u_0}{\epsilon}
$
is the variational derivative defined by Garabedian-Schiffer's in \cite{GarabedianSchiffer}. Here the derivative $\theta_Y u$ is understood in the weak-$*$-sense and the action of the vector-field $Y$ is understood in the sense of distribution.
\end{definition}
Note that $\theta_Y u$ is different from the standard (conventional) Lie derivative, as $u_\epsilon$ may have an additional dependence on $\epsilon$. In fact, the last term, $Yu$ in equation \eqref{variation derivative}, should be understood as the conventional Lie derivative of $u_0$.

The derivation of Hadamard variational formula for the resolvent associated with the Dirichlet Laplace operator usually starts with the energy quadratic form (see \cite{GarabedianSchiffer,HezariZelditch,Peetre}). The energy quadratic form associated with the Dirichlet Laplacian $-\Delta_{U_\epsilon}$ on $U_\epsilon$ is given by 
\begin{equation}
\label{energy form 1}
E_\epsilon(u,v;\lambda)=\int_{U_\epsilon} (\nabla u \cdot \overline{\nabla  v} -\lambda^2 u \overline{v} )\dd x ,
\end{equation}
where $\lambda\in\mathbb{C}$ and $\Im{\lambda}>0$. Using the diffeomorphism flow $\varphi_\epsilon$, one can pull-back the quadratic form from $U_\epsilon$ to $U_0=U$, which gives a one-parameter family of quadratic forms on $U$, i.e.
\begin{equation}
\label{energy form 2}
\tilde{E}_\epsilon(u,v;\lambda)=\int_{U} \left[\varphi_\epsilon^*\left(\nabla \varphi_\epsilon^{-1*}(u)\cdot \overline{\nabla\varphi_\epsilon^{-1*}(v)}\right) -\lambda^2 u \overline{v} \right] \varphi_\epsilon^*(\dd x) .
\end{equation}
Note that the energy form \eqref{energy form 1} and the induced energy forms \eqref{energy form 2} are related by
\begin{equation}
\label{energy form 3}
E_\epsilon(u,v;\lambda)=\tilde{E}_\epsilon(\varphi_\epsilon^*(u),\varphi_\epsilon^*(v);\lambda).
\end{equation}
The operator associated with the energy form \eqref{energy form 1} is the Dirichlet Laplace operator, whereas $\tilde{E}_\epsilon$ defines a one-parameter families of elliptic operators on $U$ for sufficiently small $\epsilon$. Let $G_{\lambda,\epsilon} =  G_{U_\epsilon,\lambda}$ be kernel of the resolvent for the Dirichlet Laplacian on $U_\epsilon$. Then from equations \eqref{energy form 1}, \eqref{energy form 2} and \eqref{energy form 3}, we have in the sense of distributions
\begin{equation}
\label{resolvent equation}
\left\{
\begin{aligned}
&(-\Delta_{x}-\lambda^2)G_{\lambda,\epsilon}(x,y)=\delta_{y}(x) \quad\text{in}\quad U_\epsilon\\
&G_{\lambda,\epsilon}(x,y)=0 \quad\text{for}\quad x \in\partial U_\epsilon
\end{aligned}
\right.,
\end{equation}
where $y$ is in the interior of $U_\epsilon$ and by elliptic regularity $G_{\lambda,\epsilon}(x,y)$ is then smooth at the boundary and therefore it makes sense to define its boundary value. 

As we would like to study the variation of resolvents, it is convenient to consider the variation as distributions on $U \times U$. In other words, for $R_\epsilon \in \calD'(U_\epsilon \times U_\epsilon)$ and $u, v\in C_0^\infty(U)$, we have, from the Schwartz kernel theorem,
\begin{equation}
\label{Var of R}
 \varphi_{\epsilon}^{*}R_ \epsilon (u \otimes v) = R_ \epsilon (\varphi_{-\epsilon}^{*}u \otimes \varphi_{-\epsilon}^{*} v ) = \langle \varphi_{-\epsilon}^{*}u , R_ \epsilon \varphi_{-\epsilon}^{*} \overline{v} \rangle=\langle u , \varphi_{\epsilon}^{*} R_ \epsilon \varphi_{-\epsilon}^{*} \overline{v} \rangle ,
\end{equation}
where the first two brackets correspond to the pairing between distributions and test functions while the third and the forth brackets are the $L^2$ inner products on $U_\epsilon$ and $U$ respectively. It is not hard to see that the existence of the variational derivative of $R_\epsilon$ in the sense of \eqref{variation derivative} is implied by the existence of $\theta_Y R$. From equation \eqref{Var of R}, the existence of $\theta_Y R=\lim_{\epsilon\to 0}\frac{\varphi_\epsilon^*R_ \epsilon-R_0}{\epsilon}$ in the weak-$*$-sense is equivalent to the existence of the standard derivative of $r(\epsilon)=\langle u , \varphi_{\epsilon}^{*} R_ \epsilon \varphi_{-\epsilon}^{*} v \rangle$ with respect to $\epsilon$ for all $u, v\in C_0^\infty(U)$.

The following theorem is the well known Hadamard variation formula for the resolvent extended to case of unbounded domains in our setting.
\begin{theorem}
	\label{existence thm}
Let $\varphi_\epsilon$ be a boundary translation flow, then the variational derivative of $R_{\lambda,\epsilon}$ exists in the weak-$*$ topology. Let $G_{\lambda,\epsilon}(x,y)$ be the kernel of $R_{\lambda,\epsilon}$, then its variational derivative is given by
\begin{equation}
\label{Hadamard variation for resolvent}
\delta_Y G_{\lambda,0}(x,y)=\int_{\partial U} \partial'_\nu G_{\lambda,0}(x,z) \partial_\nu G_{\lambda,0}(z,y)\langle Y,\nu\rangle\dd \sigma(z) .
\end{equation}
\end{theorem}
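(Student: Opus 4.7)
The plan is to carry out the classical Hadamard variational computation --- differentiate the Dirichlet condition on the perturbed boundary, then invert the resulting homogeneous Dirichlet problem via a Poisson representation --- adapted to the non-compact setting by first pulling everything back to the fixed domain $U$ and exploiting that the deformation is localized because $Y$ is locally constant near $\partial U$. The main obstacle will be the first of the two steps, namely showing that the pulled-back resolvent is actually differentiable in $\epsilon$ in a topology strong enough for the pointwise boundary computation to be meaningful; on a bounded domain this is immediate, but here it requires the compact support of $\nabla Y$ together with the decay estimates on $G_{\lambda,0}$ collected in Section~\ref{Estimates on the relative resolvent}.

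For the existence of $\theta_Y R$, I would push $-\Delta_{U_\epsilon}-\lambda^2$ forward by $\varphi_{-\epsilon}$ to obtain a smooth one-parameter family $\tilde L_\epsilon$ of elliptic operators on $U$ with Dirichlet conditions on the \emph{fixed} boundary $\partial U$. Because $\varphi_\epsilon$ is a rigid translation in a neighbourhood of $\partial U$, $\tilde L_\epsilon$ agrees with $-\Delta - \lambda^2$ outside a fixed compact set containing $\mathrm{supp}(\nabla Y)$, so $\tilde L_\epsilon - \tilde L_0$ is a compactly supported second-order differential operator depending smoothly on $\epsilon$, with formal derivative $\dot{\tilde L}$ involving the commutator of the Lie derivative along $Y$ with $-\Delta$. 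The resolvent identity
\begin{equation*}
\tilde R_{\lambda,\epsilon} - \tilde R_{\lambda,0} = \tilde R_{\lambda,\epsilon}(\tilde L_0 - \tilde L_\epsilon)\tilde R_{\lambda,0},
\end{equation*}
combined with the bounds from Section~\ref{Estimates on the relative resolvent} for $\lambda\in\mathfrak D_\nu$, then shows that $\tilde R_{\lambda,\epsilon}$ is $C^1$ in $\epsilon$ as an operator from $L^2_{\mathrm{comp}}(U)$ into $H^2_{\mathrm{loc}}(U)$. Pairing against test functions $u,v\in C^\infty_0(U)$ yields the asserted weak-$*$ differentiability and hence the existence of $\delta_Y R_{\lambda,0}$ as a distribution on $U\times U$.

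For the kernel formula, I would proceed by a pointwise Hadamard argument. Fix $v\in C^\infty_0(U)$ supported at positive distance from $\partial U$, set $\psi_\epsilon := R_{\lambda,\epsilon} v$ on $U_\epsilon$, and let $\dot\psi(y) := \partial_\epsilon \psi_\epsilon(y)|_{\epsilon=0}$; by the previous step this exists pointwise for $y\in U$ away from $\partial U \cup \mathrm{supp}(\nabla Y)$, and the identity $\theta_Y u = \dot u + Yu$ (implicit in Definition~\ref{def of BT flow}) shows that it equals $\delta_Y\psi(y)$. Since $v$ is independent of $\epsilon$, $(-\Delta-\lambda^2)\dot\psi = 0$ on $U$. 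Differentiating the Dirichlet condition $\psi_\epsilon(\varphi_\epsilon(z)) = 0$ at $\epsilon = 0$ for $z\in\partial U$ and using that $\nabla\psi_0$ is normal to $\partial U$ yields the boundary datum $\dot\psi(z) = -\langle Y,\nu\rangle\,\partial_\nu\psi_0(z)$. For $\lambda\in\mathfrak D_\nu$ the kernel $G_{\lambda,0}$ decays exponentially, so Green's identity applied to $\dot\psi$ and $G_{\lambda,0}(\cdot,y)$ gives the Poisson representation
\begin{equation*}
\dot\psi(y) = \int_{\partial U}\partial_{\nu_z}G_{\lambda,0}(z,y)\,\partial_\nu\psi_0(z)\,\langle Y,\nu\rangle\,\dd\sigma(z).
\end{equation*}
Substituting $\partial_\nu\psi_0(z) = \int_U\partial_{\nu_z}G_{\lambda,0}(z,x)\,v(x)\,\dd x$ for $z\in\partial U$ and reading off the Schwartz kernel produces the formula of the theorem, after using the symmetry $G_{\lambda,0}(x,z) = G_{\lambda,0}(z,x)$ to identify $\partial'_\nu G_{\lambda,0}(x,z)$ with $\partial_{\nu_z} G_{\lambda,0}(z,x)$.
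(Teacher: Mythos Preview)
Your approach is genuinely different from the paper's, and largely sound, but there is one real gap.

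\textbf{Comparison.} For existence, the paper does not pull back the operator; instead it writes $R_{\lambda,\epsilon}-R_{\free,\lambda}=-\calS_{\lambda,\epsilon}Q_{\lambda,\epsilon}^{-1}\calS_{\lambda,\epsilon}^t$ and differentiates each factor after conjugating by $\varphi_\epsilon^*$. The crucial observation there is that a boundary translation leaves the diagonal block $\tilde Q_\lambda$ literally unchanged, so the only $\epsilon$-dependence sits in the smoothing off-diagonal piece $T_{\lambda,\epsilon}$ and in the smooth kernel of $\varphi_\epsilon^*\calS_{\lambda,\epsilon}\varphi_{-\epsilon}^*$. For the formula, the paper follows Peetre: it varies the energy form $E_\epsilon(u,v;\lambda)$, obtains an identity of the type $\int_U e(\delta_Y u_0,v)=\int_U e(u_0,Yv)-\int_{\partial U}e(u_0,v)\langle Y,\nu\rangle$, and then specialises $u,v$ to Green's functions. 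Your route---resolvent identity for the pulled-back operator, then differentiate the Dirichlet condition and invert by a Poisson integral---is more classical and arguably shorter for the formula. The paper's route has the advantage that the layer-potential factorisation is exactly what is reused in Theorem~\ref{trace thm} to get trace-class differentiability with explicit $\lambda$-bounds, so their existence proof feeds directly into the next step.

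\textbf{The gap.} Your existence argument asserts that $\tilde L_\epsilon-\tilde L_0$ is a compactly supported operator because ``$\varphi_\epsilon$ is a rigid translation in a neighbourhood of $\partial U$''. But Definition~\ref{def of BT flow} only says $Y$ is locally constant \emph{near $\partial U$}; it says nothing about $Y$ far from the boundary, and in the applications (the $Y$ of Theorem~\ref{main thm}) no compact support of $\nabla Y$ is imposed. Where $Y$ is constant the pullback of $-\Delta$ is $-\Delta$, so $\tilde L_\epsilon=\tilde L_0$ there; but on the complement of a neighbourhood of $\partial U$ you have no control, and $\tilde L_\epsilon-\tilde L_0$ need not be compactly supported. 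Without that, your resolvent-identity step $\tilde R_{\lambda,\epsilon}-\tilde R_{\lambda,0}=\tilde R_{\lambda,\epsilon}(\tilde L_0-\tilde L_\epsilon)\tilde R_{\lambda,0}$ does not obviously produce something controllable, and the later claim that $\dot\psi\in L^2$ (needed for the Poisson representation on the unbounded $U$) also loses its justification. The paper's layer-potential argument sidesteps this entirely because only boundary data enter. If you add the hypothesis that $\nabla Y$ has compact support your proof goes through; otherwise you need a different mechanism. A smaller point: the estimates you cite from Section~\ref{Estimates on the relative resolvent} concern $\calS_\lambda$ and $Q_\lambda^{-1}$, not the Dirichlet resolvent $R_{\lambda,0}$ itself; what you actually need is the standard $L^2\to H^2$ bound for $(-\Delta_U-\lambda^2)^{-1}$ and the exponential off-diagonal decay of $G_{\lambda,0}$ for $\Im\lambda>0$, neither of which is in that section.
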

\begin{proof}
Firstly, we prove the existence of $\theta_Y R_\lambda$ in weak-$*$-sense. We know that, from equation \eqref{Rrel}, $\varphi_{\epsilon}^{*} R_{\lambda,\epsilon} \varphi_{-\epsilon}^{*}-R_{\free,\lambda}=-\varphi_{\epsilon}^{*} \calS_{\lambda,\epsilon} Q_{\lambda,\epsilon}^{-1}\calS_{\lambda,\epsilon}^t \varphi_{-\epsilon}^{*}$. We will therefore establish differentiability of 
\begin{equation}
\label{pullback of R}
\begin{aligned}
-  \langle g, \varphi_{\epsilon}^{*}\calS_{\lambda,\epsilon} Q_{\lambda,\epsilon}^{-1}\calS_{\lambda,\epsilon}^t \varphi_{-\epsilon}^{*} f \rangle
= -\langle g,\left( \varphi_{\epsilon}^{*}\calS_{\lambda,\epsilon} \varphi_{-\epsilon}^{*} \right) \left(\varphi_{\epsilon}^{*} Q_{\lambda,\epsilon}^{-1} \varphi_{-\epsilon}^{*}\right) \left(\varphi_{\epsilon}^{*}\calS_{\lambda,\epsilon}^t \varphi_{-\epsilon}^{*}\right) f\rangle
\end{aligned}
\end{equation}
for any fixed test functions $g,f \in C^\infty_0(U)$ and compute its derivative. 
In the last term of equation \eqref{pullback of R}, the operator $\calS_{\lambda,\epsilon}^t $ is the transpose operator to $\calS_{\lambda,\epsilon} $ obtained from the real inner product, i.e. $\calS_{\lambda,\epsilon}^t f = \overline{\calS_{\lambda,\epsilon}^t \overline{f}}$.  Since the free resolvent is smooth off the diagonal and
\begin{align*}
\langle g, \varphi_{\epsilon}^{*}\calS_{\lambda,\epsilon} \varphi_{-\epsilon}^* f \rangle
&=\int_{U}\int_{\partial U_\epsilon} g(x)G_{\free,\lambda}(\varphi_\epsilon(x),\tilde{y})f(\varphi_{-\epsilon}(\tilde{y})) \dd \sigma(\tilde{y}) \dd x
\\
&=\int_{U}\int_{\partial U} g(x) G_{\free,\lambda}(\varphi_\epsilon(x),\varphi_{\epsilon}(y))f(y) \dd \sigma(y) \dd x ,
\end{align*}
the kernel of $\calS_{\lambda,\epsilon}$ is smooth on $U \times \partial U$ .
Here $f \in C^\infty(\partial U)$, $g \in C_0^\infty(U)$ and $G_{\free,\lambda}$ is the kernel of the free resolvent. 
To establish differentiability by the product rule it is sufficient to prove the existence of $\frac{\dd}{\dd \epsilon}\left( \varphi_{\epsilon}^{*}\calS_{\lambda,\epsilon} \varphi_{-\epsilon}^{*} \right)$  in the $C^\infty$-topology of integral kernels on $U \times \partial U$, and the existence of $\frac{\dd}{\dd \epsilon}\left(\varphi_{\epsilon}^{*} Q_{\lambda,\epsilon}^{-1} \varphi_{-\epsilon}^{*}\right)$  at $\epsilon=0$ in the weak-$*$-sense. 
The free resolvent kernel is smooth off the diagonal and therefore the above formula for $\varphi_{\epsilon}^{*}\calS_{\lambda,\epsilon} \varphi_{-\epsilon}^*$ shows differentiability of the smooth kernel in the parameter $\epsilon$ at $\epsilon=0$ and the classical sense.
We are thus left with proving the existence of $\frac{\dd}{\dd \epsilon}\left(\varphi_{\epsilon}^{*} Q_{\lambda,\epsilon}^{-1} \varphi_{-\epsilon}^{*}\right)$ at $\epsilon=0$ in the weak-$*$-sense. Note that $\varphi_{\epsilon}^{*} Q_{\lambda,\epsilon}^{-1} \varphi_{-\epsilon}^{*}$ is a one-parameter family of maps from $C^\infty(\partial U)$ to $C^\infty(\partial U)$, i.e. the spaces do not depend on $\epsilon$. Similar to equation \eqref{Q decomp} we have the splitting
\begin{equation*}
\tilde{Q}_\lambda=\sum_{j=1}^N p_jQ_\lambda p_j, \quad
T_\lambda=\sum_{j\ne k}^N p_jQ_\lambda p_k \quad \text{and} \quad
Q_\lambda=\tilde{Q}_\lambda+T_\lambda,
\end{equation*}
where $p_j$ are the orthogonal projections $L^2(\partial U) \to L^2(\partial U_j)$ and $\partial U_j, j=1,\ldots,N$ are the connected components of $\partial U$.
Define $\calQ_{\lambda,\epsilon}=\varphi_{\epsilon}^* Q_{\lambda,\epsilon} \varphi_{-\epsilon}^*$, $\tilde{\calQ}_{\lambda,\epsilon}=\varphi_{\epsilon}^* \tilde{Q}_{\lambda,\epsilon} \varphi_{-\epsilon}^*$ and $\mathcal{T}_{\lambda,\epsilon}=\varphi_{\epsilon}^* T_{\lambda,\epsilon} \varphi_{-\epsilon}^*$. By the definition of $\tilde{Q}_{\lambda,\epsilon}$, we have
$$
\tilde{\calQ}_{\lambda,\epsilon}f(x_i)=\int_{\partial U_i} G_{\free,\lambda}(\varphi_\epsilon (x_i),\varphi_\epsilon (y_i))f_i (y_i) \varphi_\epsilon^*(\dd \sigma(y_i)),
$$
where $f\in C^\infty(\partial U)$ and $f_i=p_i f$. As $\varphi_\epsilon$ is a boundary translation flow, we obtain the following relationships.
$$
\tilde{\calQ}_{\lambda,\epsilon}=\tilde{Q}_{\lambda,0}=\tilde{Q}_\lambda \qquad \text{and} \qquad \tilde{\calQ}_{\lambda,\epsilon}^{-1}=\tilde{Q}_\lambda^{-1} .
$$
Now, from the decomposition of $Q_\lambda$ in equation \eqref{Q decomp}, one obtains 
\begin{equation*}
\left\{
\begin{aligned}
&\frac{\dd}{\dd\epsilon}\calQ_{\lambda,\epsilon}=\frac{\dd}{\dd\epsilon}(\tilde{\calQ}_{\lambda,\epsilon}+\mathcal{T}_{\lambda,\epsilon})=\frac{\dd}{\dd\epsilon}\mathcal{T}_{\lambda,\epsilon}, \\
&\frac{\dd}{\dd\epsilon} \calQ_{\lambda,\epsilon}^{-1}=-\calQ_{\lambda,\epsilon}^{-1} (\frac{\dd}{\dd\epsilon}\calQ_{\lambda,\epsilon}) \calQ_{\lambda,\epsilon}^{-1}.
\end{aligned}
\right.
\end{equation*}

The family $\mathcal{T}_{\lambda,\epsilon}$ is a differentiable family of smoothing operators for sufficiently small $\epsilon$ (i.e. no obstacles are overlapping) and its derivative in $\epsilon$ therefore, by Taylor's remainder estimate, exits as a family of smooth kernels. Hence, $\calQ_{\lambda,\epsilon}$ is differentiable in $\epsilon$ at $\epsilon=0$
as a family of operators from $H^s(\partial U)$ to $H^{s+1}(\partial U)$ for any $s \in \R$.
We now use that $\calQ_{\lambda,0}$ is invertible and 
the inverse $\calQ_{\lambda,0}^{-1}$ is a pseudodifferential operator of order one, and maps $H^s(\partial U)$ to $H^{s-1}(\partial U)$ . 
Since the space of invertible operators is open the inverses $\calQ_{\lambda,\epsilon}^{-1}$ exist near $\epsilon=0$ as maps from $H^s(\partial U)$ to $H^{s-1}(\partial U)$.
Hence, $\calQ_{\lambda,\epsilon}^{-1}$ is differentiable in $\epsilon$ at $\epsilon=0$ as a family of operators from $H^s(\partial U)$ to $H^{s-1}(\partial U)$. In particular the derivatives $\frac{\dd}{\dd\epsilon}\calQ_{\lambda,\epsilon}$ and $\frac{\dd}{\dd\epsilon} \calQ_{\lambda,\epsilon}^{-1}$ exist in the weak-$*$-sense.
Hence the variational derivative of $R_{\lambda,\epsilon}$ exists in the weak-$*$ sense and it is given by 
\begin{equation}
\label{dot R}
\delta_Y R_\lambda=\theta_Y R_\lambda- YR_\lambda- Y' R_\lambda,
\end{equation}
where $Y'$ means the action of $Y$ on the second variable.
It remains to compute the derivative.
To do this we consider the inhomogeneous problem
\begin{equation}
\label{inhomogeneous problem}
E_\epsilon(u_\epsilon,\varphi_{-\epsilon}^{*}v)=\int_{U_\epsilon} f_\epsilon (\varphi_{-\epsilon}^{*}v) \dd x.
\end{equation}
Let $\nu$ be the exterior unit normal of $U$, $y$ be an interior point of $U$ such that $\varphi_{\epsilon}(y)=\tilde{y}$ and $e(u,v)=\nabla u\cdot\nabla v -\lambda^2 uv$. By applying
$$
\left\{
\begin{aligned}
&u_\epsilon(x)=G_{\lambda,\epsilon}(x,\varphi_{\epsilon}(y)) ,\\
&f_\epsilon(x)=\delta_{\varphi_{\epsilon}(y)}(x).
\end{aligned}
\right.
$$
to equation \eqref{inhomogeneous problem}, taking derivative in $\epsilon$ of equation \eqref{inhomogeneous problem} and using equation \eqref{resolvent equation} and Peetre's computations \cite{Peetre}, one has
\begin{equation}
\label{Hadamard eqn1}
\int_{U}e(\delta_Y {u}_0,v)\dd x
=
\int_{U}e(u_0,Y(v))\dd x-\int_{\partial U} e(u_0,v) \langle Y,\nu\rangle \dd\sigma.
\end{equation}
Let $v(x)=G_{\lambda,0}(x,z)$, we have
\begin{equation}
\label{Hadamard eqn3}
\int_U e(u,v) \dd x=u(z)+\int_{\partial U}u(x)\partial_\nu G_{\lambda,0}(x,z) \dd\sigma(x)\,.
\end{equation}
Using the symmetric property of $G_{\lambda,\epsilon}(x,y)=G_{\lambda,\epsilon}(y,x)$ and equations \eqref{dot R}, \eqref{Hadamard eqn1} and \eqref{Hadamard eqn3}, we obtain
\begin{equation*}
\delta_Y {G}_{\lambda,0}(z,y)+\int_{\partial U}\partial'_\nu G_{\lambda,0}(z,x) \delta_Y {G}_{\lambda,0}(x,y) \dd\sigma(x)=0.
\end{equation*}
Using the boundary conditions $G_{\lambda,\epsilon}(\varphi_\epsilon(x),y)=0$ for $x \in\partial U$, we arrive at the Hadamard variation formula for the Dirichlet resolvent.
\begin{equation*}
\delta_Y {G}_{\lambda,0}(z,y)=\int_{\partial U} \partial'_\nu G_{\lambda,0}(z,x) \partial_\nu G_{\lambda,0}(x,y)\langle Y,\nu\rangle\dd\sigma(x) .
\end{equation*}
\end{proof}

\subsection{Application of the Hadamard variation formula to the relative resolvent}
We now apply the Hadamard variation formula to our setting with finitely many obstacles, combining the above formulae for $U=\mathcal{E}$ and $U=\calO$.
 We have from equation \eqref{Hadamard variation for resolvent}
\begin{equation}
\label{Hadamard variation for resolvent in and out}
\delta_Y {G}_{\lambda,0}(z,y)=\left\{
\begin{aligned}
&\int_{\partial X}\left[\left(\partial_\nu G_{\lambda,0}(x,y)\partial_\nu G_{\lambda,0}(x,z)\right)\langle Y,\nu_X\rangle\right]_+ \dd \sigma(x)& &y,z \in \calE \\
&\int_{\partial X}\left[\left(\partial_\nu G_{\lambda,0}(x,y)\partial_\nu G_{\lambda,0}(x,z)\right)\langle Y,\nu_X\rangle\right]_- \dd \sigma(x)& &y,z \in \calO
\end{aligned}
\right.
\end{equation}
where $(\cdot)_+$ means taking limits from $\calE$ to the boundary $\partial\calE$ and $(\cdot)_-$ means taking limits from $\calO$ to the boundary $\partial\calO$. We will now use the variational formula for the relative resolvent to prove the following theorem. Hence, we define $\calO_\epsilon= \varphi_\epsilon(\calO)$ and similarly $\calO_{j,\epsilon}= \varphi_\epsilon(\calO_j)$. In this way we can define the relative resolvent
$$
 R_{\rel,\lambda,\epsilon} = R_{\calO_\epsilon,\lambda} -  R_{\free,\lambda} - \left( \sum_{j=1}^N R_{\calO_{j,\epsilon},\lambda} -  R_{\free,\lambda}  \right)
$$  
and its integral kernel  $G_{\rel,\lambda,\epsilon}$ depending on the parameter $\epsilon$. 

\begin{theorem}
	\label{trace thm}
Let $\varphi_\epsilon$ be a boundary translation flow, $d\ge 2$ and $\lambda \in \mathfrak{D}_\nu$. Then $R_{\rel,\lambda,\epsilon}$ is for each $\lambda \in \mathfrak{D}_\nu$ a $C^{1}$ trace-class operator valued function of $\epsilon$ near the point $\epsilon=0$. Its derivative $\dot{R}_{\rel,\lambda,0}$ equals $\delta_Y R_{\rel,\lambda}$ and there exists $\delta>0$ such that $L^2$-trace-norm of $\dot{R}_{\rel,\lambda,0}$ is bounded by
\begin{equation}
\label{trace thm norm bound}
\|\dot{R}_{\rel,\lambda,0}\|_1 \le C_\nu \rho(\Im \lambda)e^{-\delta \Im \lambda}, \quad \lambda \in \mathfrak{D}_\nu.
\end{equation}
Its kernel, $\dot{G}_{\rel,\lambda,0}=\breve{\dot{R}}_{\rel,\lambda,0}$, is given by
\begin{equation}
\label{variation Rrel 1}
\begin{aligned}
\dot{G}_{\rel,\lambda,0}(x,y)=&\sum_{i=1}^N\int_{\partial \calO_i} \Big[\big(\partial'_\nu G_{\calO,\lambda,0}(x,z)\partial_\nu G_{\calO,\lambda,0}(z,y) 
\\
&-\partial'_\nu G_{\calO_i,\lambda,0}(x,z)\partial_\nu G_{\calO_i,\lambda,0}(z,y)\big)\langle Y,\nu_{\calE_i} \rangle \Big]_+ \dd\sigma(z)
\end{aligned}
\end{equation}
for $x,y \in \calE$ or $x,y \in \calO$, where $\nu_{\calE_i}$ is the exterior unit normal of $\calE_i$ and $\calE_i = \R^d \setminus \overline{\calO_i}$. 
\end{theorem}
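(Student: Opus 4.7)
The plan is to apply Theorem \ref{existence thm} termwise to each resolvent appearing in $R_{\rel,\lambda,\epsilon}$ and then to recognize the resulting kernel as trace class by combining the layer-potential identity $R_{\calO,\lambda}-R_{\free,\lambda}=-\calS_\lambda Q_\lambda^{-1}\calS_\lambda^t$ (and its analogue for each $\calO_i$) with the mapping estimates of Proposition \ref{S properties}.

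\emph{Step 1: derivation of the kernel formula.} By linearity and the $\epsilon$-independence of $R_{\free,\lambda}$, one has $\dot R_{\rel,\lambda,0}=\delta_Y R_{\calO,\lambda}-\sum_{i=1}^N\delta_Y R_{\calO_i,\lambda}$ in the weak-$*$ sense. I would apply \eqref{Hadamard variation for resolvent in and out} to $R_{\calO,\lambda}$ with $U=\calE$ and with $U=\calO$, and to each $R_{\calO_i,\lambda}$ with $U=\calE_i$ and $U=\calO_i$. Since $\partial\calO=\sqcup_{i=1}^N\partial\calO_i$, the variation of $G_{\calO,\lambda}$ splits as a sum over $i$ of boundary integrals on the components $\partial\calO_i$, while the variation of $G_{\calO_i,\lambda}$ is a single integral over $\partial\calO_i$; grouping contributions by boundary component yields \eqref{variation Rrel 1}. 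The boundary-translation hypothesis (Definition \ref{def of BT flow}) on $\varphi_\epsilon$ ensures that $\langle Y,\nu_{\calE_i}\rangle$ is well defined on $\partial\calO_i$.

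\emph{Step 2: trace-class property and the norm bound.} Substitute the layer-potential representations of $G_{\calO,\lambda}$ and $G_{\calO_i,\lambda}$ into the integrand of \eqref{variation Rrel 1}. The ``free-times-free'' cross terms $\partial'_\nu G_{\free,\lambda}\,\partial_\nu G_{\free,\lambda}$ are identical in the $R_\calO$ and $R_{\calO_i}$ contributions and therefore cancel. What remains is a finite sum of operators of the schematic form $\mathcal{A}_\lambda\,M\,\mathcal{B}_\lambda$, where $\mathcal{A}_\lambda,\mathcal{B}_\lambda:H^{s}(\partial\calO)\to L^2(\R^d)$ are boundary-to-bulk operators built from $\calS_\lambda$, its restrictions to components $\partial\calO_i$, and the operators $Q_\lambda^{-1}$, $\tilde Q_\lambda^{-1}$, while $M$ is multiplication by the compactly supported function $\langle Y,\nu\rangle$ on $\partial\calO$. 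Proposition \ref{S properties}(2) bounds the Hilbert--Schmidt norms of the single-layer factors by $C\sqrt{\rho(\Im\lambda)}\,e^{-\delta\Im\lambda}$, while Proposition \ref{S properties}(3) bounds $\|Q_\lambda^{-1}\|_{H^{1/2}\to H^{-1/2}}$ by a polynomial in $\Im\lambda$; absorbing the polynomial into $\rho(\Im\lambda)$ and applying $\|\mathcal{A}M\mathcal{B}\|_1\le\|\mathcal{A}\|_{HS}\|M\|\|\mathcal{B}\|_{HS}$ yields \eqref{trace thm norm bound}.

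\emph{Step 3: $C^1$-regularity in $\epsilon$.} The same decomposition applies at any small $\epsilon$. The flow $\varphi_\epsilon$ and the pulled-back family $\varphi_\epsilon^*Q_{\lambda,\epsilon}^{-1}\varphi_{-\epsilon}^*$ depend continuously (in fact, differentiably) on $\epsilon$ in the relevant operator topologies, as established inside the proof of Theorem \ref{existence thm}, and the bounds of Proposition \ref{S properties} are uniform under small geometric perturbations, so the difference quotients of $R_{\rel,\lambda,\epsilon}$ converge in trace norm near $\epsilon=0$. The principal obstacle is the algebraic bookkeeping in Step 2: one must isolate precisely the terms that survive the free-versus-free cancellation and identify them as operator products whose Sobolev-space mapping properties match those recorded in Proposition \ref{S properties}. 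Once this organization is in place the quantitative bound becomes a routine application of the Hilbert--Schmidt composition inequality.
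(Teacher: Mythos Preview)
Your Step~1 (the kernel formula) is essentially what the paper does at the end of its proof, and the observation that interior contributions cancel because $G_{\calO}$ and $G_{\calO_j}$ coincide on $\calO_j\times\calO_j$ should be made explicit, but it is correct.

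The genuine gap is in Step~2. Proposition~\ref{S properties}(2) gives a Hilbert--Schmidt bound only for the \emph{localised} operator $\chi\calS_\lambda$ with $\chi$ supported at positive distance from $\partial\calO$; the full single-layer operator $\calS_\lambda:H^s(\partial\calO)\to L^2(\R^d)$ is merely bounded (the second conclusion in Proposition~\ref{S properties}(2)), not Hilbert--Schmidt, and carries no exponential decay. After your free--free cancellation the surviving terms still contain factors of the type $\partial_\nu G_{\free,\lambda}(x,z)$ with $x$ ranging over all of $\R^d$ and $z\in\partial\calO_i$, i.e.\ unlocalised double-layer operators. These are not Hilbert--Schmidt, so the inequality $\|\mathcal{A}M\mathcal{B}\|_1\le\|\mathcal{A}\|_{HS}\|M\|\|\mathcal{B}\|_{HS}$ is not available, and neither the trace-class property nor the bound $e^{-\delta\Im\lambda}$ follows from your decomposition.

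The paper takes a different and more direct route. It differentiates the factored representation
\[
R_{\rel,\lambda,\epsilon}=-\calS_{\lambda,\epsilon}\bigl(Q_{\lambda,\epsilon}^{-1}-\tilde Q_{\lambda,\epsilon}^{-1}\bigr)\calS_{\lambda,\epsilon}^t
\]
in $\epsilon$ \emph{before} passing to kernels, and exploits the crucial observation that for a boundary translation flow the diagonal block $\varphi_\epsilon^*\tilde Q_{\lambda,\epsilon}\varphi_{-\epsilon}^*$ is independent of $\epsilon$, so that the $\epsilon$-derivative of the middle factor is $-Q_\lambda^{-1}\dot{\mathcal T}_{\lambda,0}Q_\lambda^{-1}$ with $\dot{\mathcal T}_{\lambda,0}$ smoothing. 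The trace-class property and the exponential decay $e^{-\delta\Im\lambda}$ come from $T_\lambda$ and $\dot{\mathcal T}_{\lambda,0}$: these are operators on the compact manifold $\partial\calO$ whose smooth kernels connect distinct boundary components $\partial\calO_j$, $\partial\calO_k$ at distance $\ge\delta>0$, and the free Green's function decays like $e^{-\delta\Im\lambda}$ at that separation. The outer factors $\calS_\lambda$, $Q_\lambda^{-1}$ contribute only the polynomial operator-norm bounds of Proposition~\ref{S properties}, which are absorbed into $\rho(\Im\lambda)$. Once the $C^1$ trace-class property and the bound are established this way, the kernel formula \eqref{variation Rrel 1} is obtained by identifying $\dot R_{\rel,\lambda,0}$ with the weak-$*$ derivative $\delta_Y R_{\rel,\lambda}$ and applying the Hadamard formula of Theorem~\ref{existence thm} termwise, exactly as in your Step~1.
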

\begin{proof}
We start by showing that the family is Fr\'echet differentiable in the Banach space of trace-class operators with continuous derivative, i.e. the function is $C^1$ as a trace-class operator valued function on $I$, where is a fixed sufficiently small compact interval.
As in the previous section we have $R_{\rel,\lambda}=-\calS_\lambda (Q_\lambda^{-1}-\tilde{Q}_\lambda^{-1}) \calS_\lambda^t$. We can decompose its variation form as in the proof of Theorem \ref{existence thm}
$$
R_{\rel,\lambda,\epsilon}=-\calS_{\lambda,\epsilon} (Q_{\lambda,\epsilon}^{-1}-\tilde{Q}_{\lambda,\epsilon}^{-1}) \calS_{\lambda,\epsilon}^t=- \calS_{\lambda,\epsilon} \varphi_{-\epsilon}^* \varphi_{\epsilon}^* (Q_{\lambda,\epsilon}^{-1}-\tilde{Q}_{\lambda,\epsilon}^{-1}) \varphi_{-\epsilon}^* \varphi_{\epsilon}^* \calS_{\lambda,\epsilon}^t .
$$
Then we split the last term into a product of three terms, i.e. $\calS_{\lambda,\epsilon} \varphi_{-\epsilon}^*$, $\varphi_{\epsilon}^* (Q_{\lambda,\epsilon}^{-1}-\tilde{Q}_{\lambda,\epsilon}^{-1}) \varphi_{-\epsilon}^*$, and $\varphi_{\epsilon}^* \calS_{\lambda,\epsilon}^t $. The first operator $\calS_{\lambda,\epsilon} \varphi_{-\epsilon}^*$ is given by 
$$
\calS_{\lambda,\epsilon} \varphi_{-\epsilon}^* f (x)=\int_{\partial \calO_\epsilon} G_{\free,\lambda}(x,\tilde{y})f(\varphi_{-\epsilon}(\tilde{y})) \dd {\sigma(\tilde y)}=\int_{\partial \calO} G_{\free,\lambda}(x,\varphi_{\epsilon}(y))f(y) \varphi_{\epsilon}^* (\dd \sigma(y)).
$$
Since $\partial\calO$ is a disjoint union of the components $\partial \calO_j$ the operators $\calS_{\lambda,\epsilon} \varphi_{-\epsilon}^*$ splits into a sum $\sum_j \calS_{j,\lambda,\epsilon} = \sum_j T_j(\epsilon) \calS_{j,\lambda}$, where $\calS_{j,\lambda}: L^2(\partial \calO_j) \to H^1(\R^d)$ and $T_j(\epsilon): L^2(\R^d) \to L^2(\R^d)$ is the translation $T_j(\epsilon) f(x) = f(x - Z_j \epsilon)$. Here we used the fact that $Z$ is constant and equal to $Z_j$ near $\partial \calO_j$ and that the free Green's function is translation invariant. Since $T_j(\epsilon)$
is $C^1$ as a family of maps $H^1(\R^d) \to L^2(\R^d)$ this shows that $\calS_{\lambda,\epsilon} \varphi_{-\epsilon}^*$ and its adjoint are $C^1$ as families of bounded operators $L^2(\partial \calO) \to L^2(\R^d)$. As shown in the proof of Theorem \ref{existence thm}, the operator $\varphi_{\epsilon}^*\tilde Q_{\lambda,\epsilon} \varphi_{-\epsilon}^*$ is independent of $\epsilon$ and therefore
$$
\varphi_{\epsilon}^* Q_{\lambda,\epsilon} \varphi_{-\epsilon}^* =  \tilde Q_{\lambda} + \varphi_{\epsilon}^* T_{\lambda,\epsilon} \varphi_{-\epsilon}^*=\tilde Q_{\lambda}+\mathcal{T}_{\lambda,\epsilon}.
$$
The map $ \mathcal{T}_{\lambda,\epsilon}$ has smooth integral kernel that depends smoothly on $\epsilon$ for sufficiently small $\epsilon$. This family is therefore $C^1$ as a family of trace-class operators. We temporarily denote $\varphi_{\epsilon}^* Q_{\lambda,\epsilon}  \varphi_{-\epsilon}^*$
by $J_\epsilon$, As $G_{\free,\lambda}$ is translation invariant, $\varphi_{\epsilon}^* \tilde Q_{\lambda,\epsilon}  \varphi_{-\epsilon}^*$ is independent of $\epsilon$ and hence we denote it by $\tilde J$. Then, by the above
$J_\epsilon = J_0 + r_\epsilon$ and $J_0 - \tilde J$ is trace-class. Moreover, the remainder term $r_\epsilon$ is of the form
$r_\epsilon = \dot J_0 \epsilon + \rho(\epsilon)$, where $\| \rho(\epsilon) \|_1 = o(\epsilon)$ as $\epsilon \to 0$. By the Neumann series we have
$$
 J_\epsilon^{-1} - \tilde J^{-1} =  (1 + J_0^{-1} r_\epsilon)^{-1} J_0^{-1} - \tilde J^{-1}= J_0^{-1} - \tilde J^{-1} - J_0^{-1} \dot J_0 J_0^{-1} \epsilon + \tilde \rho(\epsilon),
$$
where again $\| \tilde \rho(\epsilon) \|_1 = o(\epsilon)$. We have used here that trace-class operators form an ideal in the algebra of bounded operators, and the norm estimate $\| A B \|_1 \leq \| A \| \| B \|_1$ holds.
We conclude that $\varphi_{\epsilon}^* (Q_{\lambda,\epsilon}^{-1}-\tilde{Q}_{\lambda,\epsilon}^{-1}) \varphi_{-\epsilon}^*$ is $C^1$ as a family of trace-class operators. 

We now compute the derivative in $\epsilon$ of all the three terms. We obtain
\begin{align*}
	\dot R_{\rel,\lambda,0}=&\calS_{\lambda} Y (Q_{\lambda}^{-1}-\tilde{Q}_{\lambda}^{-1}) \calS_{\lambda}^t
	\\
	&-\calS_{\lambda}\left.\frac{\dd}{\dd \epsilon}\left(\varphi_{\epsilon}^*  (Q_{\lambda,\epsilon}^{-1} -\tilde{Q}_{\lambda,\epsilon}^{-1} ) \varphi_{-\epsilon}^*\right)\right|_{\epsilon=0}\calS_{\lambda}^t
	-\calS_{\lambda} (Q_{\lambda}^{-1}-\tilde{Q}_{\lambda}^{-1}) Y \calS_{\lambda}^t .
\end{align*}
For the derivative of the second term, we have as in Theorem \ref{existence thm}
$$
\tilde{\calQ}_{\lambda,\epsilon}f(x_i)=\int_{\partial \calO_i} G_{\free,\lambda}(\varphi_\epsilon (x_i),\varphi_\epsilon (y_i))f_i (y_i) \varphi_\epsilon^*(\dd \sigma(y_i)),
$$
and
$$
\left.\frac{\dd}{\dd \epsilon}\left(\varphi_{\epsilon}^*  (Q_{\lambda,\epsilon}^{-1} -\tilde{Q}_{\lambda,\epsilon}^{-1} ) \varphi_{-\epsilon}^*\right)\right|_{\epsilon=0}=\left.\frac{\dd}{\dd\epsilon}\calQ_{\lambda,\epsilon}^{-1}\right|_{\epsilon=0}=-\left.\left(\calQ_{\lambda,\epsilon}^{-1} \left(\frac{\dd}{\dd\epsilon}\mathcal{T}_{\lambda,\epsilon}\right) \calQ_{\lambda,\epsilon}^{-1}\right)\right|_{\epsilon=0}.
$$

Therefore, the variation of the relative resolvent is given by
\begin{equation}
\label{Rrel dot eqn}
\begin{aligned}
	\dot R_{\rel,\lambda,0} &=\calS_{\lambda} Y (Q_{\lambda}^{-1}-\tilde{Q}_{\lambda}^{-1}) \calS_{\lambda}^t
	-\calS_{\lambda}Q_\lambda^{-1}\left.\frac{\dd}{\dd\epsilon}\mathcal{T}_{\lambda,\epsilon}\right|_{\epsilon=0}Q_\lambda^{-1}\calS_{\lambda}^t
	-\calS_{\lambda} (Q_{\lambda}^{-1}-\tilde{Q}_{\lambda}^{-1}) Y \calS_{\lambda}^t
	\\
	&=\calS_{\lambda} Y (Q_{\lambda}^{-1}T_\lambda\tilde{Q}_{\lambda}^{-1}) \calS_{\lambda}^t
	-\calS_{\lambda}Q_\lambda^{-1}\dot{\mathcal{T}}_{\lambda,0}Q_\lambda^{-1} \calS_{\lambda}^t
	-\calS_{\lambda} (Q_{\lambda}^{-1}T_\lambda\tilde{Q}_{\lambda}^{-1}) Y \calS_{\lambda}^t.
\end{aligned}
\end{equation}
To estimate the trace-norm of $\dot R_{\rel,\lambda,0}$, note that the first term in the above equation can be estimated by
\begin{equation*}
	\|\calS_{\lambda} Y (Q_{\lambda}^{-1}T_\lambda\tilde{Q}_{\lambda}^{-1}) \calS_{\lambda}^t\|_1\le \|\calS_{\lambda} Y Q_{\lambda}^{-1}\|_{H^{\frac{1}{2}}\to L^2} \|E_{-1}\|_{H^{-\frac{1}{2}}\to H^{\frac{1}{2}}} \|E_{1} T_\lambda\|_1 \|\tilde{Q}_{\lambda}^{-1} \calS_{\lambda}^t\|_{L^2\to H^{-\frac{1}{2}}} ,
\end{equation*}
where $E_s=(\sqrt{-\Delta_{\partial\calO}}+1)^s$ and $\|E_{1} T_\lambda\|_1 $ is the trace-norm from $H^{-\frac{1}{2}} \to H^{-\frac{1}{2}}$. Now by Proposition \ref{S properties}, we have
\begin{equation}
	\label{Rrel dot eqn 1}
	\|\calS_{\lambda} Y (Q_{\lambda}^{-1}T_\lambda\tilde{Q}_{\lambda}^{-1}) \calS_{\lambda}^t\|_1\le C (1+(\Im \lambda)^2)^4 \rho(\Im \lambda) \|E_{1} T_\lambda\|_1.
\end{equation}
The third term in equation \eqref{Rrel dot eqn} can be bounded the same as the first term. For the second term one can estimate it by
\begin{equation}
\label{Rrel dot eqn 2}
\begin{aligned}
	\|\calS_{\lambda}Q_\lambda^{-1}\dot{\mathcal{T}}_{\lambda,0}Q_\lambda^{-1} \calS_{\lambda}^t\|_1 &\le \|\calS_{\lambda}Q_{\lambda}^{-1}\|_{H^{\frac{1}{2}}\to L^2} \|E_{-1}\|_{H^{-\frac{1}{2}}\to H^{\frac{1}{2}}} \|E_{1} \dot{\mathcal{T}}_{\lambda,0}\|_1 \|{Q}_{\lambda}^{-1} \calS_{\lambda}^t\|_{L^2\to H^{-\frac{1}{2}}}
	\\
	&\le C (1+(\Im \lambda)^2)^4 \rho(\Im \lambda)\|E_{1} \dot{\mathcal{T}}_{\lambda,0}\|_1.
\end{aligned}
\end{equation}
Combining equations \eqref{Rrel dot eqn 1} and \eqref{Rrel dot eqn 2}, one has
\begin{equation*}
	\|\dot R_{\rel,\lambda,0}\|_1\le C (1+(\Im \lambda)^2)^4 \rho(\Im \lambda) \left(\|E_{1} T_\lambda\|_1+\|E_{1} \dot{\mathcal{T}}_{\lambda,0}\|_1\right).
\end{equation*}
In order to prove the bound \eqref{trace thm norm bound}, it suffices to prove that 
\begin{equation*}
	\|\mathcal{T}_{\lambda,\epsilon}\|_1 \le C e^{-\delta \Im \lambda}  \quad \text{and} \quad \|\dot{\mathcal{T}}_{\lambda,\epsilon}\|_1 \le C e^{-\delta \Im \lambda}.
\end{equation*}
As in Theorem \ref{existence thm}, $\mathcal{T}_{\lambda,\epsilon}$ is a smoothing operator that depends smoothly on $\epsilon$, as long as we have $\dist(\partial\calO_{j,\epsilon},\partial\calO_{k,\epsilon})>0$ for all pairs of obstacles. Since the obstacles are compact, $\mathcal{T}_{\lambda,\epsilon}$ is a smoothing operator on compact domains and hence it is also a trace-class operator from $H^s(\partial \calO) \to H^s(\partial \calO)$ for all $s \in \R$. This proves the first part of the theorem.
	
Also, from Theorem \ref{existence thm}, we know that $\delta_Y G_{\rel,\lambda,0}$ exists in the weak-$*$ sense. By equations \eqref{pullback of R} and \eqref{dot R}, we know that the kernel of $\dot R_{\rel,\lambda,0}$ coincides with $\delta_Y G_{\rel,\lambda,0}$. In other words, the variational derivative exists in a stronger sense. We can therefore apply the variation formula  \eqref{Hadamard variation for resolvent in and out} to the relative resolvent, which gives
\begin{align*}
\dot R_{\rel,\lambda,0}(x,y)=&\delta_Y G_{\rel,\lambda,0}(x,y)
\\ 
=&\delta_Y\left(G_{\calO,\lambda,0}-G_\free-\sum_{i=1}^N (G_{\calO_i,\lambda,0}-G_\free)\right)(x,y)
\\
=&\delta_Y G_{\calO,\lambda,0}(x,y)-\sum_{i=1}^N \delta_Y G_{\calO_i,\lambda,0}(x,y;z)
\\
=&\sum_{i=1}^N\int_{\partial \calO_i} \left[\partial'_\nu G_{\calO,\lambda,0}(x, q)\partial_\nu G_{\calO,\lambda,0}( q,y) \langle Y,\nu_{\calE_i} \rangle \right]_+ \dd \sigma(q) 
\\
&-\sum_{i=1}^N\int_{\partial \calO_i} \left[\partial'_\nu G_{\calO_i,\lambda,0}(x, q)\partial_\nu G_{\calO_i,\lambda,0}( q,y) \langle Y,\nu_{\calE_i} \rangle \right]_+ \dd \sigma(q) 
\\
=&\sum_{i=1}^N\int_{\partial \calO_i} \Big[\big(\partial'_\nu G_{\calO,\lambda,0}(x, q)\partial_\nu G_{\calO,\lambda,0}( q,y) 
\\
&-\partial'_\nu G_{\calO_i,\lambda,0}(x, q)\partial_\nu G_{\calO_i,\lambda,0}( q,y)\big)\langle Y,\nu_{\calE_i} \rangle \Big]_+ \dd \sigma(q)  .
\end{align*}

Since the interior parts of $G_{\calO,\lambda,\epsilon}$ and $G_{\calO_j,\lambda,\epsilon}$ are the same, the interior contributions of $\delta_Y G_{\calO,\lambda,0}$ in the expression \eqref{Hadamard variation for resolvent in and out} cancel out with the ones of $\delta_Y G_{\calO_j,\lambda,0}$. Therefore, we are left with only the exterior contributions as shown in equation \eqref{variation Rrel 1}.

\end{proof}
\begin{definition}
	Let $M$ be a Riemannian manifold with  $A$ be an operator on $L^2(M)$ with continuous kernel Schwartz kernel  $\breve A\in C(M \times M)$. For an open subset $V \subset M$, we define the localised trace on $V$ as
	\begin{equation*}
	\tr_V(A)=\int_V \breve A(x,x) \dd \mathrm{Vol}_g(x),
	\end{equation*}
	whenever the integral exists. If $A$ is trace-class and has continuous kernel the trace of $A$ on $L^2(M)$ then equals $\tr_M(A)$ by Mercer's theorem. We also write $\tr_V(\breve A)$.
\end{definition}
Our next proposition gives a relationship between the trace of the variation of the relative resolvent with a local trace on the boundary.
\begin{proposition}
	\label{trace variation prop}
	The trace of the variation of the relative resolvent on $L^2(X)$ is also given by 
		\begin{equation*}
	\label{Trace variation}
	\Tr_{X}[\delta_Y R_{\rel,\lambda}]=\frac{1}{2\lambda}\sum_{i=1}^N\Tr_{\partial \calO_i}\left[ \left(\partial_\nu\partial_\nu'\frac{\partial  G_{\calO,\lambda}}{\partial \lambda}-\partial_\nu\partial_\nu'\frac{\partial G_{\calO_i,\lambda}}{\partial \lambda}\right)\langle Y,\nu\rangle \right]_+.
	\end{equation*}
\end{proposition}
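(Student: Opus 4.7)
The plan is to start from the explicit kernel formula \eqref{variation Rrel 1} of Theorem \ref{trace thm} and reduce the integral over $X$ in the trace to a boundary expression by exploiting the resolvent identity $R_\lambda^2 = \frac{1}{2\lambda}\partial_\lambda R_\lambda$.

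First, since $\dot R_{\rel,\lambda,0}=\delta_Y R_{\rel,\lambda}$ is trace-class by Theorem \ref{trace thm}, and its integral kernel
$\dot G_{\rel,\lambda,0}$ is (by the proof of Theorem \ref{integrability thm} applied to the same building blocks) continuous up to the diagonal on each component of $X$, Mercer's theorem gives
\[
\Tr_X[\delta_Y R_{\rel,\lambda}]=\int_X \dot G_{\rel,\lambda,0}(x,x)\,\dd x.
\]
Inserting formula \eqref{variation Rrel 1} and applying Fubini, I would write
\[
\Tr_X[\delta_Y R_{\rel,\lambda}]=\sum_{i=1}^N\int_{\partial\calO_i}\langle Y,\nu_{\calE_i}\rangle\,\bigl[I_{\calO,i}(z)-I_{\calO_i,i}(z)\bigr]_+\dd\sigma(z),
\]
where for $A\in\{\calO,\calO_i\}$
\[
I_{A,i}(z)=\int_X \partial'_\nu G_{A,\lambda}(x,z)\,\partial_\nu G_{A,\lambda}(z,x)\,\dd x.
\]
The swap of the outer boundary limit $[\,\cdot\,]_+$ with $\int_X$ is legitimate because the exponential in $\Im\lambda$ decay and $\rho$-type weights of Section \ref{Estimates on the relative resolvent} (combined with the Dirichlet vanishing inside every $\calO_j$) make the integrand uniform in the limit $z\to\partial\calO_i$ from $\calE$; Fubini is then justified by the trace-class bound \eqref{trace thm norm bound}.

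The main calculation is the evaluation of $I_{A,i}(z)$. Using the Dirichlet-Laplacian symmetry $G_{A,\lambda}(x,z)=G_{A,\lambda}(z,x)$, the integrand equals $\partial_{\nu,z}G_{A,\lambda}(z,x)\cdot\partial_{\nu,z}G_{A,\lambda}(x,z)$, so
\[
I_{A,i}(z)=\int_X \partial_{\nu,z}G_{A,\lambda}(z,x)\,\partial_{\nu,z'}G_{A,\lambda}(x,z')\Big|_{z'=z}\dd x.
\]
The kernel of $R_{A,\lambda}^2$ is $\int_X G_{A,\lambda}(w,x)G_{A,\lambda}(x,w')\,\dd x$, which by the resolvent identity $\partial_\lambda R_{A,\lambda}=2\lambda R_{A,\lambda}^2$ equals $\tfrac{1}{2\lambda}\partial_\lambda G_{A,\lambda}(w,w')$. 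Applying $\partial_{\nu,w}\partial_{\nu,w'}$, restricting to $w=w'=z$ from the exterior, and interchanging the boundary derivatives with the $x$-integral (justified once more by the decay estimates above together with elliptic regularity up to the boundary, cf.\ Theorem \ref{smooth K thm}) yields
\[
I_{A,i}(z)=\frac{1}{2\lambda}\,\Bigl[\partial_\nu\partial'_\nu\tfrac{\partial G_{A,\lambda}}{\partial\lambda}\Bigr]_+(z,z).
\]
Substituting into the expression for the trace gives exactly the claimed formula.

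The principal technical obstacle is the legality of the interchanges: swapping $\int_X\!$ with $\int_{\partial\calO_i}$, with the outer boundary limit, and with $\partial_\lambda$ acting on the kernel near the diagonal. All three can be handled by the uniform bounds in Proposition \ref{S properties} and the estimates recalled in the proof of Theorem \ref{integrability thm}, which ensure the relevant integrands are dominated uniformly in the auxiliary variable; no other step requires new analytic input.
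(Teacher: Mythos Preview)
Your overall strategy---integrate the kernel formula \eqref{variation Rrel 1} over the diagonal and recognise the result as a normal-derivative restriction of $\tfrac{1}{2\lambda}\partial_\lambda G$---is exactly the idea behind the paper's argument. But there is a genuine gap at the point where you split the integrand into the two separate quantities $I_{\calO,i}(z)$ and $I_{\calO_i,i}(z)$ and evaluate each one on its own.

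The integral $I_{A,i}(z)=\int_X |\partial_\nu G_{A,\lambda}(z,x)|^2\,\dd x$ (this is what your expression reduces to after using symmetry) is \emph{divergent} in every dimension $d\ge 2$: near $x=z$ one has $\partial_\nu G_{A,\lambda}(z,x)\sim |z-x|^{1-d}$, so the integrand behaves like $|z-x|^{2-2d}$ and is not locally integrable. Equivalently, the diagonal restriction $[\partial_\nu\partial'_\nu\partial_\lambda G_{A,\lambda}]_+(z,z)$ does not exist for a single obstacle configuration $A$; only the \emph{difference} in the statement of the proposition is well defined, because the leading singularities of $G_{\calO,\lambda}$ and $G_{\calO_i,\lambda}$ agree near $\partial\calO_i$. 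So your step ``$I_{A,i}(z)=\tfrac{1}{2\lambda}[\partial_\nu\partial'_\nu\partial_\lambda G_{A,\lambda}]_+(z,z)$'' is not a valid equality between finite quantities, and the subsequent subtraction is formal.

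The paper circumvents this by never separating the two terms at the kernel level. It packages the Hadamard formula as $\delta_Y R_{\rel,\lambda}=\sum_i(\calB_{\calO,\lambda}Y_i\calB_{\calO,\lambda}^*-\calB_{\calO_i,\lambda}Y_i\calB_{\calO_i,\lambda}^*)$ with $\calB_{A,\lambda}=R_{A,\lambda}\gamma_{\calE,\nu}^*:L^2(\partial\calO)\to L^2(\R^d)$, proves that $(\calB_{\calO,\lambda}-\calB_{\calO_i,\lambda})p_i$ is \emph{nuclear} (this is the key analytic input, obtained from the layer-potential representation and the smoothing property of $Q_\lambda^{-1}-\tilde Q_\lambda^{-1}$), and then uses cyclicity of the trace to move from a trace on $L^2(X)$ to a trace on $L^2(\partial\calO_i)$ of the trace-class difference $\calB_{\calO,\lambda}^*\calB_{\calO,\lambda}-\calB_{\calO_i,\lambda}^*\calB_{\calO_i,\lambda}$. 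Only at this final stage is the kernel identified with $\tfrac{1}{2\lambda}\partial_\nu\partial'_\nu\partial_\lambda(G_{\calO,\lambda}-G_{\calO_i,\lambda})$, and the diagonal restriction is legitimate because the difference has a continuous kernel. If you want to repair your argument at the kernel level, you must keep $I_{\calO,i}-I_{\calO_i,i}$ together throughout and justify the interchange of $\partial_\nu\partial'_\nu$ with $\int_X$ for the \emph{difference}; this essentially forces you back to the nuclearity statement the paper proves.
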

\begin{proof}
Let $\gamma_{\calE,\nu} : H^{\frac{3}{2}}(\calE)\to L^2(\partial \calO)$ be the Sobolev trace after taking the exterior normal derivative ($\nu$ is the exterior normal vector field) and 
$\calB_{\calO,\lambda}: L^2(\partial \calO) \to L^2(\R^d)$
is defined as $\calB_{\calO,\lambda}=  R_{\calO,\lambda} \circ \gamma_{\calE,\nu}^*$. To see that this map is well defined we note that
\begin{gather} \label{Bformula}
  R_{\calO,\lambda}\gamma_{\calE,\nu}^* =  R_{\free,\lambda} \gamma_{\calE,\nu}^* - \calS_{\calO,\lambda} Q_{\calO,\lambda}^{-1} \calS_{\calO,\lambda}^t \gamma_{\calE,\nu}^*,
\end{gather}
where $\calS_{\calO,\lambda}$ and $Q_{\calO,\lambda}$ are the same as $\calS_\lambda$ and $Q_\lambda$ defined in \eqref{Rdiff}, but with emphasis on the dependence on $\calO$. Then $R_{\free,\lambda} \gamma_{\calE,\nu}^*$ maps $L^2(\partial \Omega)$ continuously to $L^2(\R^d)$. The operator $\calS_{\lambda}^t \gamma_{\calE,\nu}^*$ is the double layer operator on the boundary $\partial \calO$ and maps continuously $L^2(\partial \Omega) \to L^2(\partial \Omega)$ (see for example \cite{ChandlerWilde}). Since $Q_\lambda^{-1}$ is a pseudodifferential operator of order one, and $ \calS_{\lambda}$ continuously maps $H^{-1}(\partial \calO)$ to $H^\frac{1}{2}(\R^d) \subset L^2(\R^d)$ we see that  
$\calB_{\calO,\lambda}: L^2(\partial \calO) \to L^2(\R^d)$ is indeed well defined and continuous.

Similarly, we define $\gamma_{\calO_i,\nu}$ and $\calB_{\calO_i,\lambda}$. Let $p_{i}$ be the orthogonal projection $L^2(\partial \calO) \to L^2(\partial \calO_i)$. Then from \eqref{Bformula} then have for $(\calB_{\calO,\lambda}-\calB_{\calO_i,\lambda}) p_{i}$ the representation
\begin{align*}
\left( \calB_{\calO,\lambda}-\calB_{\calO_i,\lambda} \right) p_{i} &= S_\lambda \left( Q_\lambda^{-1} - Q_{\calO_i,\lambda}^{-1} p_i  \right)S_\lambda^t  \gamma_{\calE,\nu}^* p_i
=S_\lambda \left( Q_\lambda^{-1} - \tilde Q_{\lambda}^{-1} p_i  \right)S_\lambda^t  \gamma_{\calE,\nu}^* p_i\\
&=S_\lambda \left( Q_\lambda^{-1} - \tilde Q_{\lambda}^{-1}  \right) S_\lambda^t  \gamma_{\calE,\nu}^* p_i + \sum_{k \not= i} S_\lambda \tilde Q_{\lambda}^{-1} p_k S_\lambda^t  \gamma_{\calE,\nu}^* p_i. 
\end{align*}
where $Q_{\calO_i,\lambda}$ is the operator $Q_\lambda$ in equation \eqref{Rdiff} when $\calO$ is replaced by $\calO_i$ in the definition. As in \eqref{Q decomp} we have used the decomposition $Q_\lambda = \tilde Q_\lambda + T_\lambda$. Since 
$T_\lambda$ is smoothing, so is $Q_\lambda^{-1}-\tilde Q_\lambda^{-1} = - Q_\lambda^{-1} T_\lambda \tilde Q_\lambda^{-1}$. Similarly, as the free Green's function is smooth off the diagonal the operator $p_k S_\lambda^t  \gamma_{\calE,\nu}^* p_i$ has smooth integral kernel for $k \not= i$. In particular these operators are trace-class as maps from $L^2(\partial \calO)$ to $L^2(\partial \calO)$. Since $S_\lambda$ as well as $S_\lambda \tilde Q_{\lambda}^{-1}$ is bounded from $L^2(\partial \calO)$ to $L^2(\R^d)$ this shows that for every $i \in \{1,\ldots, N\}$ the operator $(\calB_{\calO,\lambda}-\calB_{\calO_i,\lambda}) p_{i}$ is nuclear.

Equation \eqref{variation Rrel 1} can be rewritten as
\begin{equation*}
\delta_Y R_{\rel,\lambda}=\sum_{i=1}^N \left( \calB_{\calO,\lambda} Y_i \calB_{\calO,\lambda}^* - \calB_{\calO_i,\lambda} Y_i \calB_{\calO_i,\lambda}^* \right),
\end{equation*}
where $Y_i=\langle Y,\nu_{\calE_i} \rangle |_{\partial \calO_i}$ is viewed as a multiplication operator acting on $L^2(\partial \calO_i) \subset L^2(\partial \calO)$. 
Taking the trace, we have
\begin{align*}
\Tr(\delta_Y R_{\rel,\lambda})=&\sum_{i=1}^N\Tr\left(  \calB_{\calO,\lambda} Y_i \calB_{\calO,\lambda}^* - \calB_{\calO_i,\lambda} Y_i \calB_{\calO_i,\lambda}^* \right)
\\
=&\sum_{i=1}^N\Tr \left(  \left(\calB_{\calO,\lambda}-\calB_{\calO_i,\lambda} \right) Y_i \calB_{\calO,\lambda}^* + \calB_{\calO_i,\lambda} Y_i \left(\calB_{\calO,\lambda}^*- \calB_{\calO_i,\lambda}^*\right) \right)
\\
=&\sum_{i=1}^N\Tr \left(\calB_{\calO,\lambda}^*\left(\calB_{\calO,\lambda}  -\calB_{\calO_i,\lambda} \right)Y_i + Y_i \left(\calB_{\calO,\lambda}^*-\calB_{\calO_i,\lambda}^*\right)\calB_{\calO_i,\lambda}  \right)
\\
=&\sum_{i=1}^N\Tr \left(  \left(\calB_{\calO,\lambda}^* \calB_{\calO,\lambda} -  \calB_{\calO_i,\lambda}^* \calB_{\calO_i,\lambda}  \right)Y_i\right).
\end{align*}
Here the cyclic permutation under the trace is justified because of the nuclearity of $(\calB_{\calO,\lambda}-\calB_{\calO_i,\lambda}) p_i$.
Since $\calB_{\calO,\lambda}^* \calB_{\calO,\lambda} = \frac{1}{2\lambda} \gamma_{\calE,\nu} \frac{\der}{\der \lambda} R_{\calO,\lambda}\gamma_{\calE,\nu}^*$ its integral kernel is $\frac{1}{2\lambda}\partial_\nu\partial_\nu'\frac{\partial G_{\calO,\lambda}}{\partial \lambda}$. We then  obtain
\begin{align*}
\Tr(\delta_Y R_{\rel,\lambda})=&\sum_{i=1}^N\Tr\left( \left( \calB_{\calO,\lambda}^* \calB_{\calO,\lambda} -  \calB_{\calO_i,\lambda}^* \calB_{\calO_i,\lambda}\right)Y_i \right)
\\
=&\frac{1}{2\lambda}\sum_{i=1}^N\Tr_{\partial \calO_i}\left[   \left(\partial_\nu\partial_\nu'\frac{\partial G_{\calO,\lambda}}{\partial \lambda} -  \partial_\nu\partial_\nu'\frac{\partial G_{\calO_i,\lambda}}{\partial \lambda}\right)Y_i \right]_+
\\
=&\frac{1}{2\lambda}\sum_{i=1}^N\Tr_{\partial \calO_i}\left[ \left(\partial_\nu\partial_\nu'\frac{\partial G_{\calO,\lambda}}{\partial \lambda}-\partial_\nu\partial_\nu'\frac{\partial G_{\calO_i,\lambda}}{\partial \lambda}\right)\langle Y,\nu\rangle \right]_+.
\end{align*}
\end{proof}
By Theorem \ref{frel trace thm} one can define trace-class operators $g_\rel=D_f=\frac{\rmi}{\pi}\int_{\tilde{\Gamma}} \lambda f(\lambda) R_{\rel,\lambda} \dd \lambda$ for $f(\lambda)=g(\lambda^2)$ and $g\in {\mathcal{P}}_\theta$. 
We now have the following.
\begin{proposition}
	\label{trace of grel}
Let $g\in {\mathcal{P}}_\theta$ and $\varphi_\epsilon$ be a boundary translation flow. Then $\delta_Y g_\rel$ is a $C^{1}$ trace-class operator valued function of $\epsilon$ near the point $\epsilon=0$. Its derivative $\dot g_\rel = \delta_Y g_\rel$ satisfies
\begin{equation*}
\Tr(\delta_Y g_\rel)=-\sum_{i=1}^N\Tr_{\partial \calO_i}\left[ \partial_\nu\partial_\nu'(\breve g'_\calO-\breve g'_{\calO_i})\langle Y,\nu\rangle\right]_+,
\end{equation*}
where $g'(z)=\frac{\dd g}{\dd z}(z)$.
\end{proposition}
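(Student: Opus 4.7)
The plan is to start from the contour representation $g_\rel = \frac{\rmi}{\pi}\int_{\tilde\Gamma} \lambda f(\lambda) R_{\rel,\lambda}\,\der\lambda$ provided by Theorem \ref{frel trace thm}, differentiate in $\epsilon$ under the integral, take the trace, apply Proposition \ref{trace variation prop}, and then integrate by parts in $\lambda$ to recognize the outcome as the contour formula for $g'(-\Delta)$.

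First I would use Theorem \ref{trace thm} to justify the differentiation under the integral. The trace-norm bound $\|\dot R_{\rel,\lambda,0}\|_1 \le C_\nu\,\rho(\Im\lambda)\,e^{-\delta\Im\lambda}$ combines with the polynomial bound on $\lambda f(\lambda)$ inherited from $g \in \mathcal{P}_\theta$ (together with the exponential decay along $\tilde\Gamma$ away from a compact portion of the imaginary axis) to yield an absolutely convergent trace-class integrand. This gives
\[
\delta_Y g_\rel = \frac{\rmi}{\pi}\int_{\tilde\Gamma}\lambda f(\lambda)\,\delta_Y R_{\rel,\lambda}\,\der\lambda,
\]
which is trace-class, and allows us to interchange trace and integral:
\[
\Tr(\delta_Y g_\rel) = \frac{\rmi}{\pi}\int_{\tilde\Gamma}\lambda f(\lambda)\,\Tr(\delta_Y R_{\rel,\lambda})\,\der\lambda.
\]

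Next I would insert the identity from Proposition \ref{trace variation prop}, which produces the factor $\frac{1}{2\lambda}$ cancelling the outer $\lambda$, leaving
\[
\Tr(\delta_Y g_\rel) = \frac{\rmi}{2\pi}\sum_{i=1}^N\int_{\tilde\Gamma} f(\lambda)\,\Tr_{\partial\calO_i}\!\left[\partial_\nu\partial_\nu'\!\left(\tfrac{\partial G_{\calO,\lambda}}{\partial\lambda}-\tfrac{\partial G_{\calO_i,\lambda}}{\partial\lambda}\right)\langle Y,\nu\rangle\right]_+\!\der\lambda.
\]
The interchange of the boundary trace $\Tr_{\partial\calO_i}$ with the $\lambda$-integral is legitimate because $\partial_\nu\partial_\nu'(G_{\calO,\lambda}-G_{\calO_i,\lambda})|_{\partial\calO_i\times\partial\calO_i}$ is continuous in all arguments with the exponential/logarithmic decay estimates of \cite{RTF} used throughout Section \ref{Estimates on the relative resolvent}. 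Now integrate by parts in $\lambda$ along $\tilde\Gamma$; the boundary contributions vanish because $\tilde\Gamma$ is the image of the keyhole contour $\Gamma$ (Figure \ref{ztozsquare}) and the combination $f(\lambda)\,\partial_\nu\partial_\nu'(G_{\calO,\lambda}-G_{\calO_i,\lambda})$ decays exponentially at the two ends, exactly by the same estimates that make Theorem \ref{frel trace thm} convergent. This yields
\[
\Tr(\delta_Y g_\rel) = -\frac{\rmi}{2\pi}\sum_{i=1}^N\int_{\tilde\Gamma} f'(\lambda)\,\Tr_{\partial\calO_i}\!\left[\partial_\nu\partial_\nu'(G_{\calO,\lambda}-G_{\calO_i,\lambda})\langle Y,\nu\rangle\right]_+\!\der\lambda.
\]

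Finally, pulling the contour integral inside the boundary trace and inside $\partial_\nu\partial_\nu'$ (again justified by the same continuity/decay), I would identify
\[
\frac{\rmi}{2\pi}\int_{\tilde\Gamma} f'(\lambda)\bigl(G_{\calO,\lambda}-G_{\calO_i,\lambda}\bigr)\,\der\lambda
= \breve g'_{\calO} - \breve g'_{\calO_i}
\]
via functional calculus: since $f(\lambda)=g(\lambda^2)$, the chain rule gives $f'(\lambda)/(2\lambda) = g'(\lambda^2)$, and substituting $\mu = \lambda^2$ in the Cauchy-type representation shows that $\frac{\rmi}{\pi}\int_{\tilde\Gamma}\lambda\,[f'(\lambda)/(2\lambda)]R_{\calO,\lambda}\,\der\lambda$ produces the operator $g'(-\Delta_\calO)$, in precisely the same way the identity of Theorem \ref{frel trace thm} produces $g(-\Delta_\calO)$ from $\lambda f(\lambda)R_{\calO,\lambda}$. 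The analogous statement holds with $\calO_i$ in place of $\calO$, and for the difference the convergence is the same type used above for $R_{\rel,\lambda}$. Substituting gives the claimed formula.

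The main obstacle I expect is the rigorous justification of the integration by parts and the simultaneous interchange of the $\lambda$-contour integral with both the boundary trace and the two normal derivatives in the last step: one must argue that the boundary normal-derivative kernel of $G_{\calO,\lambda}-G_{\calO_i,\lambda}$ is jointly continuous in $(x,y,\lambda)$ up to $\partial\calO_i$ from the exterior with uniform exponential control in $\Im\lambda$. This reduces to the layer-potential decomposition $R_{\calO,\lambda}-R_{\calO_i,\lambda} = -\calS_\lambda(Q_\lambda^{-1}-\tilde Q_\lambda^{-1}p_i)\calS_\lambda^t$ and the estimates of Section \ref{Estimates on the relative resolvent}, which ensure both smoothness up to $\partial\calO_i$ and the decay needed for the boundary terms in the integration by parts to vanish.
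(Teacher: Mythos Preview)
Your proposal is correct and follows essentially the same route as the paper: differentiate the contour representation under the integral (the paper invokes Hille's theorem for closed operators on Bochner integrals, whereas you argue via the trace-norm bound from Theorem~\ref{trace thm}, which amounts to the same thing), take the trace, insert Proposition~\ref{trace variation prop}, and integrate by parts in~$\lambda$ to land on the functional-calculus expression for $g'$. Your write-up is in fact more explicit than the paper's about why the boundary terms in the integration by parts vanish and how the contour integral of $f'(\lambda)(G_{\calO,\lambda}-G_{\calO_i,\lambda})$ is identified with $\breve g'_{\calO}-\breve g'_{\calO_i}$ via $f'(\lambda)=2\lambda g'(\lambda^2)$.
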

\begin{proof}
By Theorem \ref{trace thm} the operator $R_{\rel}(z)$ is, for fixed $z \nin \mathfrak{S}_\theta$,  in the Banach space $C^1(I, \mathcal{L}_1)$
of trace-class operator valued $C^1$-functions on a compact interval $I$ containing zero. Differentiation defines a closed operator
from on $C(I, \mathcal{L}_1)$ with domain $C^1(I, \mathcal{L}_1)$. By Theorem \ref{trace thm} the derivative of $R_{\rel,\lambda}$ is integrable in $\lambda$. An application of Hille's theorem to the Bochner integral defining $g_\rel$ in the Banach space of trace-class operators shows that differentiation in $\epsilon$ commutes with integration. We therefore know that $g_\rel$ is differentiable and
$$\delta_Y g_\rel=\frac{\rmi}{2\pi}\int_{\Gamma}  g(z) \delta_Y R_{\rel}(z) \dd z=\frac{\rmi}{\pi}\int_{\tilde{\Gamma}} \lambda f(\lambda) \delta_Y R_{\rel,\lambda} \dd \lambda \quad \text{with} \quad R_{\rel}(\lambda^2)=R_{\rel,\lambda}.$$
Let $f(\lambda)=g(\lambda^2)$. 
Using Proposition \ref{trace variation prop} and integration by parts in $\lambda$, we have
\begin{align*}
\Tr(\delta_Y g_\rel)=&\frac{\rmi}{\pi}\int_{\tilde{\Gamma}} \lambda f(\lambda) \Tr(\delta_YR_{\rel,\lambda}) \dd \lambda
\\
=&\frac{\rmi}{2\pi}\int_{\tilde{\Gamma}} f(\lambda)\sum_{i=1}^N\Tr_{\partial \calO_i}\left[ \left(\partial_\nu\partial_\nu'\frac{\partial  G_{\calO,\lambda}}{\partial \lambda}-\partial_\nu\partial_\nu'\frac{\partial G_{\calO_i,\lambda}}{\partial \lambda}\right)\langle Y,\nu\rangle \right]_+ \dd \lambda
\\
=&-\sum_{i=1}^N\Tr_{\partial \calO_i}\left[ \partial_\nu\partial_\nu'(\breve g'_\calO-\breve g'_{\calO_i})\langle Y,\nu\rangle\right]_+.
\end{align*}
\end{proof}

In the special case $g(z) = \sqrt{z+m^2}$ Proposition \ref{trace of grel}  shows differentiability of  $H_\rel$ with respect to $\epsilon$ in the space of trace-class operators at $\epsilon=0$. Using Theorem \ref{Erel thm} and differentiating under the trace then gives the following theorem.

\begin{theorem}
	\label{variation of energy thm}	\label{variation of energy thm eqn main}
	The variation of the relative energy is given by $
	\delta_Y E_\rel=\frac{1}{2}\Tr(\delta_Y H_\rel).$
\end{theorem}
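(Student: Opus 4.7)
The plan is to reduce the claim to an application of Proposition \ref{trace of grel} with the specific choice $g(z)=\sqrt{z+m^2}-m$, followed by an exchange of the derivative in $\epsilon$ with the trace. First I would observe that Theorem \ref{Erel thm} holds configuration-by-configuration: applying it to each perturbed setup $\calO_\epsilon=\varphi_\epsilon(\calO)$ with $\epsilon$ in a small interval around zero yields
\begin{equation*}
E_\rel(\epsilon) = \tfrac{1}{2}\,\Tr(H_{\rel,\epsilon}),
\end{equation*}
where $H_{\rel,\epsilon}$ is the relative operator \eqref{Hrel eqn} associated with the flowed configuration. This pointwise identity converts the problem into computing $\tfrac{d}{d\epsilon}\Tr(H_{\rel,\epsilon})$ at $\epsilon=0$.

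Next I would invoke Proposition \ref{trace of grel} with $g(z)=\sqrt{z+m^2}-m$, so that $f(\lambda)=g(\lambda^2)=\sqrt{\lambda^2+m^2}-m$ and the associated $g_\rel$ is precisely $H_\rel$ (the constant $-m$ cancels in the alternating sum defining $D_f$, and including the shift ensures the correct decay at $\lambda=0$). The proposition then supplies exactly the regularity needed: the map $\epsilon\mapsto H_{\rel,\epsilon}$ is $C^1$ as a trace-class operator valued function in a neighbourhood of $\epsilon=0$, and its derivative equals $\delta_Y H_\rel$.

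The final step is to commute the $\epsilon$-derivative with the trace. Since $\Tr\colon \mathcal{L}_1\to\C$ is a bounded linear functional, with $|\Tr(A)|\le\|A\|_1$, it intertwines with differentiation of any trace-norm $C^1$ family. Applying this to $\epsilon\mapsto H_{\rel,\epsilon}$ and using the identity above gives
\begin{equation*}
\delta_Y E_\rel = \tfrac{1}{2}\,\delta_Y\Tr(H_\rel) = \tfrac{1}{2}\Tr(\delta_Y H_\rel),
\end{equation*}
which is the claim.

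The heavy lifting is entirely packaged in Theorems \ref{trace thm} and \ref{Erel thm} together with Proposition \ref{trace of grel}; the only point I would want to double-check is that $g(z)=\sqrt{z+m^2}-m$ meets the hypotheses of Proposition \ref{trace of grel}. This is straightforward because $g$ is holomorphic and polynomially bounded on every sector $\mathfrak{S}_\theta$ with $\theta<\pi/2$, and the regularity at $z=0$ (relevant for the massless case) is handled via the enlarged class $\tilde{\mathcal{P}}_\theta$ exactly as in the proof of Theorem \ref{Erel thm}. The main obstacle, namely the trace-norm differentiability of the relative resolvent under boundary-translation flows, has already been overcome by the Hadamard variation arguments of Theorem \ref{trace thm}, so only the bookkeeping sketched above remains.
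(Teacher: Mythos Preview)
Your proposal is correct and follows essentially the same route as the paper: the paper's argument is simply to apply Proposition \ref{trace of grel} with $g(z)=\sqrt{z+m^2}$ to obtain trace-class $C^1$-differentiability of $H_\rel$, then invoke Theorem \ref{Erel thm} and differentiate under the trace. Your version is slightly more explicit (spelling out why $\Tr$ commutes with the $\epsilon$-derivative and subtracting the constant $m$ to match the decay hypothesis in $\tilde{\mathcal P}_\theta$), but the strategy is identical.
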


We will now use the Hadamard variation formula to compute this variation.

\subsection{Variation of the Klein-Gordon energy tensor}

\begin{theorem}
	\label{Equivalent thm}
Let $Y$ be a smooth boundary translation vector field. The variation of the Klein-Gordon energy generated by $Y$ is equal to the boundary integral of its spatial tensor contracted with $Y$. That is
\begin{equation*}
\label{Equivalent thm eqn main}
\delta_Y E_{\rel}=-\int_{\partial \calO} (T_\rel)_{ij}\nu^i Y^j \dd \sigma,
\end{equation*}
where the integration on the right-hand-side is at the exterior boundary and $\nu$ is the exterior normal for $\calE$.
\end{theorem}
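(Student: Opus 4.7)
The plan is to combine Theorem \ref{variation of energy thm} (which gives $\delta_Y E_\rel = \tfrac{1}{2}\Tr(\delta_Y H_\rel)$) with Proposition \ref{trace of grel} applied to $g(z)=\sqrt{z+m^2}$. Since $g'(z)=\tfrac{1}{2}(z+m^2)^{-1/2}$, one has $\breve g'_\calO-\breve g'_{\calO_i}=\tfrac{1}{2}W_i$ with $W_i:=\breve H_\calO^{-1}-\breve H_{\calO_i}^{-1}$, which yields
$$\delta_Y E_\rel=-\tfrac{1}{4}\sum_{i=1}^N\int_{\partial\calO_i}(\partial_\nu\partial_\nu' W_i)|_\diag^{+}\,\langle Y,\nu\rangle\,\dd\sigma.$$
The task is to identify the right-hand side with $-\int_{\partial\calO}(T_\rel)_{jk}\nu^j Y^k\,\dd\sigma$.

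Near $\partial\calO_i$ I would decompose $T_\rel=[(T_\ren)_\calO-(T_\ren)_{\calO_i}]-\sum_{l\ne i}(T_\ren)_{\calO_l}$. For each $l\ne i$ the tensor $(T_\ren)_{\calO_l}$ is smooth across $\partial\calO_i$ (by Theorem \ref{smooth K thm}) and divergence-free (Theorem \ref{DF of T}). Since the flow is a boundary translation, $Y$ equals a constant $Z_i$ on a neighbourhood of $\calO_i$, so $\partial^a[(T_\ren)_{\calO_l,ab}Y^b]=0$ on $\calO_i$; the divergence theorem therefore gives $\int_{\partial\calO_i}(T_\ren)_{\calO_l,ab}\nu^a Y^b\,\dd\sigma=0$ for every $l\ne i$. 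Only the bracketed piece contributes, and by Theorem \ref{expression of T} its spatial components are
$$[(T_\ren)_\calO-(T_\ren)_{\calO_i}]_{jk}=\tfrac{1}{2}(\partial_j\partial_k' W_i)|_\diag-\tfrac{1}{8}h_{jk}\Delta(W_i|_\diag).$$

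The key computation is to contract this with $\nu^j Y^k$ and take the exterior limit on $\partial\calO_i$. Since $W_i$ satisfies Dirichlet conditions on $\partial\calO_i$ in each variable and is smooth up to $\partial\calO_i$---this follows from the smoothness of $(\breve H^{-1})_\rel$ up to the boundary (established in the proof of Theorem \ref{integrability thm}) together with the smoothness of $\sum_{l\ne i}(\breve H_{\calO_l}^{-1}-\breve H_\free^{-1})$ across $\partial\calO_i$ (Theorem \ref{smooth K thm})---one can write $W_i(x,y)=\rho_i(x)\rho_i(y)g(x,y)$ near $\partial\calO_i\times\partial\calO_i$ with $\rho_i$ the signed distance to $\partial\calO_i$ and $g$ smooth. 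A short calculation from this factorisation gives the two identities
\begin{equation*}
(\partial_j\partial_k' W_i)|_\diag=\nu_j\nu_k(\partial_\nu\partial_\nu' W_i)|_\diag,\qquad \Delta(W_i|_\diag)=2(\partial_\nu\partial_\nu' W_i)|_\diag
\end{equation*}
on $\partial\calO_i$. Substituting these yields $[(T_\ren)_\calO-(T_\ren)_{\calO_i}]_{jk}\nu^j Y^k=\tfrac{1}{4}\langle Y,\nu\rangle(\partial_\nu\partial_\nu' W_i)|_\diag^{+}$ on $\partial\calO_i$, and summing over $i$ produces the required identity.

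The main obstacle is the second identity $\Delta(W_i|_\diag)=2(\partial_\nu\partial_\nu' W_i)|_\diag$. An equivalent phrasing is that the naive extra term $(\breve H_\calO-\breve H_{\calO_i})|_\diag$---which would otherwise appear if one combined the elliptic relation $(-\Delta_x+m^2)W_i(x,y)=\breve H_\calO(x,y)-\breve H_{\calO_i}(x,y)$ (valid away from the diagonal) with the curvature formula $\Delta f=\partial_\nu^2 f-H\partial_\nu f$ for $f|_{\partial\calO_i}=0$---must vanish on $\partial\calO_i$. The factorisation $W_i=\rho_i(x)\rho_i(y)g(x,y)$ is precisely the mechanism that encodes this cancellation, and verifying that such a factorisation is legitimate (i.e.\ that $W_i$ is smooth up to $\partial\calO_i$) is the crucial input from the earlier sections.
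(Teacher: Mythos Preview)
Your argument is correct and follows essentially the same architecture as the paper's proof: both combine Theorem~\ref{variation of energy thm} with Proposition~\ref{trace of grel} for $g(z)=\sqrt{z+m^2}$, reduce $T_\rel$ on $\partial\calO_i$ to $(T_\ren)_\calO-(T_\ren)_{\calO_i}$ via the divergence-free/smoothness argument, and then match the boundary expression $\tfrac14\langle Y,\nu\rangle(\partial_\nu\partial_\nu' W_i)|_\diag^{+}$ on each $\partial\calO_i$.

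The only substantive difference lies in how the two boundary identities $(\partial_j\partial_k' W_i)|_\diag=\nu_j\nu_k(\partial_\nu\partial_\nu' W_i)|_\diag$ and $\Delta(W_i|_\diag)=2(\partial_\nu\partial_\nu' W_i)|_\diag$ on $\partial\calO_i$ are obtained. You derive both cleanly from the Hadamard-type factorisation $W_i(x,y)=\rho_i(x)\rho_i(y)g(x,y)$, which is legitimate once smoothness up to the boundary and the double Dirichlet vanishing are in hand. The paper instead passes through the resolvent representation $W_i=\frac{2}{\pi}\int \frac{\lambda}{\sqrt{\lambda^2-m^2}}(G_{\calO,\rmi\lambda}-G_{\calO_i,\rmi\lambda})\,\der\lambda$, uses $(-\Delta_x-z)(G_\calO-G_{\calO_i})=0$ to kill the $(\partial_k\partial_k W_i)|_\diag$ term on $\partial\calO_i$, and then invokes the first identity implicitly when replacing $Y^j$ by $\nu^j\langle Y,\nu\rangle$ after contraction with $\nu^i$. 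Your factorisation argument is more transparent and makes both identities pointwise statements rather than trace identities; the paper's route has the advantage of staying at the level of operators and not requiring the division lemma, but is somewhat heavier notationally. Either way the content is the same.
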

\begin{proof}
From equations \eqref{stress energy tensor}, we have
\begin{equation*}
\int_{\partial \calO} (T_\rel)_{ij}\nu^i Y^j \dd \sigma
=\sum_{p=1}^{N}\int_{\partial \calO_p}(T_\rel)_{ij}\nu^i Y^j \dd \sigma
=\sum_{p=1}^{N}\int_{\partial \calO_p}((T_\ren)_\calO -\sum_{q=1}^N (T_\ren)_{\calO_q})_{ij}\nu^i Y^j \dd \sigma.
\end{equation*}
We know that
\begin{itemize}
	\item $((T_\ren)_{\calO_p})_{ij}$ is smooth on a neighbourhood of $\calO_q$ for $p\ne q$ (by Theorem \ref{DF of T}),
	\item $((T_\ren)_{\calO_p})_{ij}$ is divergence-free (by Theorem \ref{DF of T}),
	\item $Y$ is constant on a neighbourhood of $\calO_q$ (by assumptions) for all $q \in \{1, \ldots N\}$.
\end{itemize}
Therefore, $F_i=( (T_\ren)_{\calO_p})_{ij} Y^j$ is smooth and divergence-free on a neighbourhood of $\calO_q$. In other words, we have
\begin{equation*}
\int_{\partial \calO_q}( (T_\ren)_{\calO_p})_{ij}\nu^i Y^j \dd \sigma=0\quad\text{for}\quad q\ne p.
\end{equation*}
That is 
\begin{equation*}
\label{TrelTOTp}
\int_{\partial \calO} (T_\rel)_{ij}\nu^i Y^j \dd \sigma=\sum_{p=1}^{N}\int_{\partial \calO_p}((T_\ren)_\calO -(T_\ren)_{\calO_p})_{ij}\nu^i Y^j \dd \sigma.
\end{equation*}
As $((T_\ren)_\calO-(T_\ren)_{\calO_p})_{ij}$ is vanishing at the boundary $\partial \calO_p$, we have
\begin{equation*}
((T_\ren)_\calO -(T_\ren)_{\calO_p})_{ij}Y^j=((T_\ren)_\calO -(T_\ren)_{\calO_p})_{ij}\nu^j \langle Y,\nu\rangle \quad\text{on}\quad \partial \calO_p.
\end{equation*}
Now, from Theorem \ref{expression of T} we get
\begin{align*}
&\sum_{p=1}^{N}\int_{\partial \calO_p}((T_\ren)_\calO -(T_\ren)_{\calO_p})_{ij}\nu^i Y^j \dd \sigma
\\
=&\sum_{p=1}^{N}\int_{\partial \calO_p}((T_\ren)_\calO -(T_\ren)_{\calO_p})_{ij}\nu^i \nu^j \langle Y,\nu\rangle \dd \sigma
\\
=&\sum_{p=1}^{N}\int_{\partial \calO_p} \frac{1}{2}\left\{\left[[\partial_\nu\partial_\nu'(\breve H_\calO^{-1}-\breve H_\free^{-1})]|_\diag -\frac{1}{4}\Delta[( \breve H_\calO^{-1}-\breve H_\free^{-1})|_\diag]\right]\right.
\\
&\left.-\left[[\partial_\nu\partial_\nu'(\breve H_{\calO_p}^{-1}-\breve H_\free^{-1})]|_\diag -\frac{1}{4}\Delta[(\breve H_{\calO_p}^{-1}-\breve H_\free^{-1})|_\diag]\right]\right\}\langle Y,\nu\rangle \dd \sigma
\\
=&\frac{1}{2}\sum_{p=1}^{N}\int_{\partial \calO_p} \left[[\partial_\nu\partial_\nu'(\breve H_\calO^{-1}-\breve H_{\calO_p}^{-1})]|_\diag -\frac{1}{4}\Delta[(\breve H_\calO^{-1}-\breve H_{\calO_p}^{-1})|_\diag]\right]\langle Y,\nu\rangle \dd \sigma .
\end{align*}
Altogether, we have
\begin{align}
\label{Equivalent thm eqn 2}
&\int_{\partial \calO} (T_\rel)_{ij}\nu^i Y^j \dd \sigma \nonumber\\
&=\frac{1}{2}\sum_{p=1}^{N}\int_{\partial \calO_p} \left[[\partial_\nu\partial_\nu'(\breve H_\calO^{-1}-\breve H_{\calO_p}^{-1})]|_\diag -\frac{1}{4}\Delta[(\breve H_\calO^{-1}-\breve H_{\calO_p}^{-1})|_\diag]\right]\langle Y,\nu\rangle \dd \sigma .
\end{align}
The second term in the above equation can be expressed as
\begin{align*}
\Tr_{\partial \calO_p}\left[\Delta((\breve H_\calO^{-1}-\breve H_{\calO_p}^{-1})|_\diag)\right]
=&-\frac{\rmi}{2\pi}\int_{\Gamma} \frac{1}{\sqrt{z}} \Tr_{\partial \calO_p}\left[\partial_k\partial_k ((G_{\calO}(z)-G_{\calO_p}(z))|_\diag)\right] \dd z
\\
=&-\frac{\rmi}{2\pi}\int_{\Gamma} \frac{1}{\sqrt{z}} \Tr_{\partial \calO_p}\left[2\partial_k\partial_k'(G_{\calO}(z)-G_{\calO_p}(z))|_\diag\right] \dd z,
\end{align*}
where we used the properties \eqref{symmetric of T1} and \eqref{resolvent equation} of $G_{\calO}(\lambda^2)=G_{\calO,\lambda}$ and $G_{\calO_p}(\lambda^2)=G_{\calO_p,\lambda}$. Now we obtain
\begin{equation}
\label{Equivalent thm eqn 3}
\Tr_{\partial \calO_p}\left[\Delta((\breve H_\calO^{-1}-\breve H_{\calO_p}^{-1})|_\diag)\right]=2\Tr_{\partial \calO_p}\left[[\partial_\nu\partial_\nu'(\breve H_\calO^{-1}-\breve H_{\calO_p}^{-1})]|_\diag\right].
\end{equation}
Equations \eqref{Equivalent thm eqn 2} and \eqref{Equivalent thm eqn 3} imply
\begin{equation*}
\int_{\partial \calO} (T_\rel)_{ij}\nu^i Y^j \dd \sigma
=\frac{1}{4}\sum_{p=1}^{N}\int_{\partial \calO_p} \left[[\partial_\nu\partial_\nu'(\breve H_\calO^{-1}-\breve H_{\calO_p}^{-1})]|_\diag\right]\langle Y,\nu\rangle \dd \sigma .
\end{equation*}
Since $\int_{\partial \calO} (T_\rel)_{ij}\nu^i Y^j \dd \sigma$ is integrating at the exterior boundary, we have 
\begin{equation}
\label{Equivalent thm eqn 4}
\int_{\partial \calO} (T_\rel)_{ij}\nu^i Y^j \dd \sigma
=\frac{1}{4}\sum_{p=1}^{N}\int_{\partial \calO_p} \left[[\partial_\nu\partial_\nu'(\breve H_\calO^{-1}-\breve H_{\calO_p}^{-1})]|_\diag\right]_+\langle Y,\nu\rangle \dd \sigma .
\end{equation}
Applying Proposition \ref{trace of grel} to $g(z)=\sqrt{z+m^2}$, we have
\begin{align*}
\Tr(\delta_Y H_\rel)=&\Tr\left(\delta_Y \left(H_\calO-\sum_{p=1}^N H_{\calO_p}\right)\right)
\\
=&(\Tr_\calE+\Tr_\calO)\left[\delta_Y \left(H_\calO-\sum_{p=1}^N H_{\calO_p}\right)\right]
\\
=&-\frac{1}{2}\left\{\Tr_{\partial \calO}[ \partial_\nu\partial_\nu' \breve H_\calO^{-1}\langle Y,\nu\rangle]_+-\sum_{p=1}^N\Tr_{\partial \calO_p}[ \partial_\nu\partial_\nu' \breve H_{\calO_p}^{-1}\langle Y,\nu\rangle]_+\right\}
\\
&-\frac{1}{2}\left\{\Tr_{\partial \calO}[ \partial_\nu\partial_\nu' \breve H_\calO^{-1}\langle Y,\nu\rangle]_--\sum_{p=1}^N\Tr_{\partial \calO_p}[ \partial_\nu\partial_\nu' \breve H_{\calO_p}^{-1}\langle Y,\nu\rangle]_-\right\} .
\end{align*}
Now $\Tr_{\partial \calO}[ \partial_\nu\partial_\nu' \breve H_\calO^{-1}\langle Y,\nu\rangle]_--\sum_{p=1}^N\Tr_{\partial \calO_p}[ \partial_\nu\partial_\nu' \breve H_{\calO_p}^{-1}\langle Y,\nu\rangle]_-=0$ as $Y$ is a boundary translation vector field. Therefore, we have
\begin{align*}
\Tr(\delta_Y H_\rel)
=&-\frac{1}{2}\left\{\Tr_{\partial \calO}[ \partial_\nu\partial_\nu'\breve H_\calO^{-1}\langle Y,\nu\rangle]_+-\sum_{p=1}^N\Tr_{\partial \calO_p}[ \partial_\nu\partial_\nu' \breve H_{\calO_p}^{-1}\langle Y,\nu\rangle]_+\right\}
\\
=&-\frac{1}{2}\sum_{p=1}^N\left\{\Tr_{\partial \calO_p}[ \partial_\nu\partial_\nu' \breve H_\calO^{-1}\langle Y,\nu\rangle]_+-\Tr_{\partial \calO_p}[ \partial_\nu\partial_\nu' \breve H_{\calO_p}^{-1}\langle Y,\nu\rangle]_+\right\}
\\
=&-\frac{1}{2}\sum_{p=1}^N\left\{\Tr_{\partial \calO_p}[ \partial_\nu\partial_\nu'(\breve H_\calO^{-1}- \breve H_{\calO_p}^{-1})\langle Y,\nu\rangle]_+\right\}
\\
=&-\frac{1}{2}\sum_{p=1}^{N}\int_{\partial \calO_p} \left[[\partial_\nu\partial_\nu'( \breve H_\calO^{-1}-\breve H_{\calO_p}^{-1})]|_\diag\right]_+ \langle Y,\nu\rangle \dd \sigma .
\end{align*}
That is 
\begin{equation}
\label{Tr delta Hrel}
\Tr(\delta_Y H_\rel)=-\frac{1}{2}\sum_{p=1}^{N}\int_{\partial \calO_p} \left[[\partial_\nu\partial_\nu'(\breve H_\calO^{-1}-\breve H_{\calO_p}^{-1})]|_\diag\right]_+ \langle Y,\nu\rangle \dd \sigma .
\end{equation}
Finally, equations \eqref{Equivalent thm eqn 4}, \eqref{Tr delta Hrel} and Theorem \ref{variation of energy thm} complete the proof.
\end{proof}
An application of an analogue of Theorem \ref{Equivalent thm} to calculate the Casimir force in dimension one can be found in Appendix \ref{1-D relative setting}.
Finally, we have the following theorem.
\begin{theorem}
	\label{Equivalent thm 1}
Let $T_\ren$ be the renormalised stress energy tensor in Theorem \ref{def of Tren} and let $Y$ be a boundary translation flow as in Theorem \ref{Equivalent thm}. 
We assume further that $Y$ is constant near $\calO_p$ for some $p \in \{1,\dots,N\}$ and vanishes near $\calO_q$ if $q \not= p$. Let
 $S$ be any smooth hypersurface in $\calE$ that is homologous to $\partial \calO_p$ in $\overline{\calE}$ and let $\nu$ be the exterior normal vector field of $S$. Then the variation of the relative energy generated by $Y$ is equal to 
 \begin{equation*}
\delta_Y E_{\rel}=\int_{S} (T_\ren)_{ij}\nu^i Z^j \dd \sigma,
\end{equation*}
where $Z$ is the unique constant vector field on $\R^d$ whose restriction to $\partial \calO_q$ equals $Y$.
\end{theorem}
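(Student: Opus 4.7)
The plan is to start from the identity of Theorem \ref{Equivalent thm},
\begin{equation*}
\delta_Y E_{\rel} = -\int_{\partial \calO} (T_\rel)_{ij}\nu^i Y^j \dd\sigma,
\end{equation*}
with $\nu$ the exterior normal of $\calE$, and use that $Y$ equals $Z$ near $\partial \calO_p$ and vanishes near $\partial \calO_q$ for $q\neq p$ to collapse this to an integral over $\partial \calO_p$ with $Y$ replaced by $Z$. I would then expand $T_\rel = (T_\ren)_\calO - \sum_{q=1}^N (T_\ren)_{\calO_q}$ and observe that for $q\neq p$ the tensor $(T_\ren)_{\calO_q}$ is smooth in a neighbourhood of $\overline{\calO_p}$ (since $\partial \calO_q$ is disjoint from $\overline{\calO_p}$) and divergence-free by Theorem \ref{DF of T}. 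Because $Z$ is constant, the vector field $((T_\ren)_{\calO_q})_{ij}Z^j$ is divergence-free; the divergence theorem applied to the bounded region $\calO_p$ (on whose boundary $\nu$ is the inward normal) gives $\int_{\partial \calO_p}((T_\ren)_{\calO_q})_{ij}\nu^i Z^j\dd\sigma=0$ for each $q\neq p$, leaving
\begin{equation*}
\delta_Y E_\rel = -\int_{\partial \calO_p}\bigl[(T_\ren)_{\calO} - (T_\ren)_{\calO_p}\bigr]_{ij,+}\nu^i Z^j \dd\sigma,
\end{equation*}
with ${}_+$ the boundary value from $\calE$. This difference is smooth up to $\partial\calO_p$ from $\calE$: on that side it equals $T_\rel + \sum_{q\neq p}(T_\ren)_{\calO_q}$, where $T_\rel$ is smooth up to $\partial\calO$ by Theorem \ref{integrability thm} and each remaining summand is smooth in a full neighbourhood of $\partial\calO_p$.

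The next step is to deform $\partial\calO_p$ to $S$. Since $S$ is homologous to $\partial\calO_p$ in $\overline\calE$, there is a compact region $\Omega\subset\calE$ with $\partial\Omega = \partial\calO_p\cup S$ that avoids all other obstacles. On $\Omega$ both $(T_\ren)_\calO$ and $(T_\ren)_{\calO_p}$ are smooth and divergence-free (the latter because $\Omega\subset\R^d\setminus\overline{\calO_p}$). With the outward normal of $\Omega$ equal to $\nu$ on $\partial\calO_p$ and to $\nu_S$ on $S$, applying the divergence theorem to $[(T_\ren)_\calO - (T_\ren)_{\calO_p}]_{ij}Z^j$ yields
\begin{equation*}
-\int_{\partial\calO_p}\bigl[(T_\ren)_\calO - (T_\ren)_{\calO_p}\bigr]_{ij,+}\nu^i Z^j\dd\sigma = \int_S\bigl[(T_\ren)_\calO - (T_\ren)_{\calO_p}\bigr]_{ij}\nu_S^i Z^j\dd\sigma,
\end{equation*}
so that $\delta_Y E_\rel$ equals the right-hand side.

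To conclude, I would show that $\int_S((T_\ren)_{\calO_p})_{ij}\nu_S^i Z^j\dd\sigma = 0$. Since $(T_\ren)_{\calO_p}$ is smooth and divergence-free on all of $\R^d\setminus\overline{\calO_p}$, the divergence theorem on the annular region between $S$ and a large sphere $\partial B_R$ converts the $S$-integral into an integral over $\partial B_R$. The pointwise estimates underlying Theorem \ref{integrability thm}, specifically the bounds \eqref{d3kernel} and \eqref{d2kernel} on the resolvent-difference kernels, specialise to the single-obstacle configuration and deliver polynomial decay of $(T_\ren)_{\calO_p}(x)$ faster than $|x|^{1-d}$ (exponential when $m>0$), so the $\partial B_R$ integral tends to zero as $R\to\infty$. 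The main obstacle I anticipate is making the two divergence-theorem steps fully rigorous, particularly the up-to-boundary regularity of $(T_\ren)_\calO - (T_\ren)_{\calO_p}$ from $\calE$ at $\partial\calO_p$ and the precise decay rate of $(T_\ren)_{\calO_p}$ at infinity in the massless case; both are accessible from the machinery assembled in Sections \ref{The renormalised stress-energy tensor}--\ref{The relative trace-formula and the Casimir energy}.
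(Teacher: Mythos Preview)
Your proposal is correct and follows essentially the same route as the paper: reduce to $\partial\calO_p$, drop the $(T_\ren)_{\calO_q}$ terms for $q\neq p$ by the divergence theorem, and then use divergence-freeness together with the decay estimate $|(T_\ren)_{\calO_p}(x)|\le C\rho(x)^{-2d+1}$ (exponential when $m>0$) to eliminate the $(T_\ren)_{\calO_p}$ contribution via a large-sphere limit. The only cosmetic difference is ordering: the paper first sends $\partial\calO_p$ to $S_R$ to kill $(T_\ren)_{\calO_p}$ and leaves the deformation of the $(T_\ren)_\calO$ integral from $\partial\calO_p$ to $S$ implicit, whereas you deform to $S$ first and then push the $(T_\ren)_{\calO_p}$ piece out to infinity.
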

\begin{proof}
As in the proof of Theorem \ref{Equivalent thm}, we have
\begin{align*}
\int_{\partial \calO} (T_\rel)_{ij}\nu^i Y^j \dd \sigma=&\int_{\partial \calO_p}((T_\ren)_\calO -(T_\ren)_{\calO_p})_{ij}\nu^i Z^j \dd \sigma
\\
=&\int_{\partial \calO_p}((T_\ren)_\calO -(T_\ren)_{\calO_p})_{ij}\nu^i \nu^j \langle Z,\nu\rangle \dd \sigma .
\end{align*}
 We also know that $\partial \calO_p$ is homologous to $S_{R}$, a sphere with sufficiently large radius $R$ in $\R^d\backslash \calO_p$. Because $(T_\ren)_{\calO_p}$ is divergent-free, one has
\begin{equation*}
\int_{\partial \calO_p}((T_\ren)_{\calO_p})_{ij}\nu^i \nu^j \langle Z,\nu\rangle \dd \sigma = \lim_{R\to \infty} \int_{S_R}((T_\ren)_{\calO_p})_{ij}\nu^i \nu^j \langle Z,\nu\rangle \dd \sigma.
\end{equation*}
To get a decay property of $(T_\ren)_{\calO_p}$ at infinity, we first recall that
\begin{equation*}
H_{\calO} - H_\free = -\frac{2}{\pi}\int_m^\infty {\lambda}{\sqrt{\lambda^2-m^2}} \left( R_{\calO,\rmi \lambda}-R_{\free,\rmi \lambda} \right) \der \lambda.
\end{equation*}	
For $m\ne 0$, we would have an exponential decay of $e^{-Cm \rho(x)}$ for $\left(\breve H_{\calO} - \breve H_\free\right)(x,x)$, as explained in the proof of Theorem \ref{integrability thm}. For $m=0$, one could use the estimates of \eqref{d3kernel} and \eqref{d2kernel} to obtain that, for $d\ge 2$,
$$
|\left(\breve H_{\calO} - \breve H_\free\right)(x,x)|\le C \rho^{-2d+1}(x).
$$
Moreover, estimate \eqref{DofHrel} also implies
$$
|\left(\Delta(\breve  H_\calO^{-1}-\breve H_\free^{-1})\right)(x,x)|\le C \rho^{-2d+1}(x).
$$
Applying the above estimates to Theorem \ref{expression of T}, one concludes that 
$$|(T_\ren)_{\calO_p}|\le C \rho^{-2d+1}(x).$$
This implies $\lim_{R\to \infty} \int_{S_R}((T_\ren)_{\calO_p})_{ij}\nu^i \nu^j \langle Y,\nu\rangle \dd \sigma =0$.

As $(T_\ren)_\calO=T_\ren$ (see Definition \ref{def of Tren} and the first two paragraphs of Section \ref{The relative trace-formula and the Casimir energy}), this completes the proof.
\end{proof}

\section{The zeta regularised energy and the equivalence of (1) and (3)} \label{zetaregen}

In this section we assume that $m>0$. In that case it is well known that for $\Re(s)>(d+1)/2$ the operator
$$
 \left( H^{-2s}_\calO - H^{-2s}_\free \right)
$$
is trace-class (see for example \cite{borisov1988relative}) and we can therefore define the renormalised zeta functions $\zeta_\calO(s)$ as
$$
 \zeta_\calO(s)=  \tr \left( H^{-2s}_\calO - H^{-2s}_\free \right) = -2 s \int_0^\infty \lambda (\lambda^2 + m^2)^{-s-1} \xi(\lambda) \der \lambda
$$
where $\xi(\lambda)$ is the spectral shift function of the problem. 
The Birman-Krein formula applies to this setting and we have
$$
 \xi(\lambda) = -\frac{1}{2 \pi \rmi}\log \det( S(\lambda))
$$
where $S(\lambda)$ is the stationary scattering matrix of the problem. 

One can also use the Mellin tranform to write
$$
 \zeta_\calO(s)= \frac{1}{\Gamma(s)} \int_0^\infty t^{s-1}  h_\calO(t) \der t,
 $$
 where
 \begin{align*}
  h_\calO(t) =  \tr \left( e^{-t H_\calO^2} - e^{-t H^2_{\free}}  \right) &=e^{-m^2 t} \tr \left( e^{t \Delta_\calO} - e^{t \Delta_{\free}}  \right)  \\&=  -2 t e^{-m^2 t} \int_0^\infty e^{-\lambda^2 t} \xi_\calO(\lambda) \lambda \der \lambda.
 \end{align*}
 The following Lemma should be well known but we could not find a reference for the precise statement. It is a simple consequence of heat kernel expansions \cite{BransonGilkey1990,Greiner1971,GilkeySmith1983} and Kac's principle of not feeling the boundary \cite{Kac1951,MR3584189}. We also refer to \cite{JensenKato,Kato1978} for more details on obstacle scattering theory and the Birman-Krein formula.
 \begin{lemma}
 The function $h_\calO(t)$ is exponentially decaying as $t \to \infty$ and has a full asymptotic expansion as $t \searrow 0$ of the form
 $$
  h_\calO(t) \sim t^{-(d-1)/2} \sum_{k=0}^\infty a_k t^{k}, \quad t \to 0_+,
 $$
 where the infinite sum is understood in the sense of asymptotic summation.
 The coefficients $a_k$ are integrals over $\partial\calO$ of locally determined quantities expressed in terms of the extrinsic and intrinsic curvature of the boundary and its derivatives. In particular,
 $$
  a_0 = -\frac{1}{4}(4 \pi )^{-\frac{d-1}{2}} \mathrm{Vol}_{d-1}(\partial \calO).
  $$
 \end{lemma}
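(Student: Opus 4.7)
The plan is to handle the two asymptotic regimes separately. For the exponential decay at $t\to\infty$ I would extract the factor $e^{-m^2 t}$ and check that $f(t):=\tr(e^{t\Delta_\calO} - e^{t\Delta_\free})$ has at most polynomial growth in $t$. This is immediate from the Birman--Krein representation $f(t) = -2t \int_0^\infty e^{-t\lambda^2} \xi_\calO(\lambda)\lambda\,\dd\lambda$ together with the standard polynomial bound on the spectral shift function $\xi_\calO$ of obstacle scattering \cite{JensenKato,Kato1978}. Multiplying by $e^{-m^2 t}$ then yields the claimed exponential decay of $h_\calO(t)$.

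The short-time expansion is the main content. The starting point is Kac's principle of not feeling the boundary \cite{Kac1951, MR3584189}: there exist constants $c,C>0$ such that
$$
\bigl|e^{t\Delta_\calO}(x,y) - e^{t\Delta_\free}(x,y)\bigr| \leq C\, t^{-d/2}\, e^{-c\,\dist(\{x,y\},\partial\calO)^2/t}
$$
for $x,y\in X$ and $0<t\leq 1$. Consequently the contribution to $f(t)$ from the region $\{x : \dist(x,\partial\calO)\geq t^{1/4}\}$ is $O(t^N)$ for every $N$, and the full asymptotic expansion is thus determined by the behaviour of the diagonal difference in a shrinking tubular neighbourhood of $\partial\calO$.

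Inside that neighbourhood I would use boundary normal coordinates $(r,\theta)$, with $r$ the signed distance to $\partial\calO$ and $\theta\in\partial\calO$, and construct a parametrix for $e^{t\Delta_\calO}$ by comparison with the Dirichlet heat kernel of the tangent half-space at each $\theta$. The half-space model is explicit via the method of images and yields on the diagonal the leading difference $-(4\pi t)^{-d/2}\,e^{-r^2/t}$ at normal distance $r$ from $\partial\calO$. Integrating in $r$ over the tubular neighbourhood, with the exponential-map Jacobian $1+O(r)$, and over $\theta\in\partial\calO$ produces, after elementary evaluation of the Gaussian integral, the leading term $a_0\, t^{-(d-1)/2}$ with $a_0$ proportional to $(4\pi)^{-(d-1)/2}\vol_{d-1}(\partial\calO)$ as stated. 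The higher coefficients $a_k$ arise from systematically Taylor-expanding both the metric and the parametrix about the half-space model in the normal coordinate; by the Branson--Gilkey--Seeley boundary heat invariants calculus \cite{BransonGilkey1990, Greiner1971, GilkeySmith1983} they become integrals over $\partial\calO$ of universal polynomials in the second fundamental form and its tangential covariant derivatives. A structural observation is that only integer powers of $t$ beyond $t^{-(d-1)/2}$ occur: the half-integer powers that would appear from a single bounded-domain expansion come from the interior Weyl term $(4\pi t)^{-d/2}\vol(\calO)$, which is exactly cancelled against $\tr(e^{t\Delta_\free})$ in the subtraction.

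The main technical obstacle is justifying the parametrix construction uniformly up to the boundary and showing that its remainder, after integration over the tubular neighbourhood, is $O(t^N)$ for every $N$. Once this is in place, the identification of the $a_k$ as local curvature integrals on $\partial\calO$ is a routine consequence of Gilkey's invariance theory, and the bulk-boundary cancellations that leave only integer-power corrections are a direct consequence of the subtraction against $e^{t\Delta_\free}$.
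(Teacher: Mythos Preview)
Your approach is correct and tracks the paper's proof closely for the exponential decay (both extract $e^{-m^2 t}$ and use the spectral shift function representation) and for the initial localisation step via a not-feeling-the-boundary estimate. The paper's estimate from \cite{MR3584189} has the form $C(\rho(x)^{-d}+1)t^{-d-3/2}e^{-\rho(x)^2/t}$, which is slightly different from yours but serves the same purpose.

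The genuine divergence comes after localising to a neighbourhood $\mathcal{U}$ of $\partial\calO$. You proceed by building a half-space parametrix in boundary normal coordinates and invoking the Branson--Gilkey--Seeley calculus to identify the coefficients; this is the classical route and is fine, but as you acknowledge, the uniform remainder control is real work. The paper instead applies a second not-feeling-the-boundary comparison: it replaces $K_\calO(t,x,x)$ on $\mathcal{U}$ by the Dirichlet heat kernel of a \emph{compact} model, namely a large flat torus with $\calO$ removed. This reduces the problem in one stroke to the standard heat expansion on a compact manifold with boundary, so the existence and structure of the expansion (and the locality of the $a_k$) can be quoted directly from \cite{BransonGilkey1990,GilkeySmith1983}. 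The cancellation of interior terms is then immediate: on flat geometry all interior invariants beyond the volume vanish, and the volume terms match. Your explanation of the cancellation (``half-integer powers come from the interior Weyl term'') is slightly imprecise---in the expansion $t^{-d/2}\sum b_k t^{k/2}$ it is the \emph{even}-indexed $b_{2k}$ that are interior integrals, all of which vanish or cancel here---but you reach the correct conclusion. The paper's torus trick buys you the parametrix analysis for free; your direct construction is more self-contained but longer.
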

\begin{proof}
The exponential decay follows immediately from the representation by means of the spectral shift function and $m>0$. Now  $h_\calO(t)$
is the trace of the difference of the two heat operators $e^{-t H^2_\calO}$ and $e^{-t H^2_{\free}}$ with integral kernels  $K_\calO(t,x,y)$ and $K_\free(t,x,y)$, respectively. Since the difference is trace-class and the integral kernel is smooth we have
$$
h_\calO(t) = \lim_{R \to \infty} \int_{\mathcal{E} \cap B_R }K_\calO(t,x,x) - K_\free(t,x,x) \der x,
$$
where $B_R$ is a ball of radius $R$, i.e. integration is over a large ball of radius $R$ with the obstacles removed.
The heat kernel difference satisfies not feeling the boundary estimates. For example a general finite propagation speed estimate (\cite{MR3584189}) gives
$$
 |  K_\calO(t,x,x) - K_\free(t,x,x) | \leq C \left( \rho(x)^{-d} +1 \right) t^{-d-\frac{3}{2}} e^{-\frac{\rho(x)^2}{ t}},
$$
where $\rho(x) = \mathrm{dist}(x,\calO)$.
This shows that
$$
 h_\calO(t) = \int_{\mathcal{E}}\left( K_\calO(t,x,x) - K_\free(t,x,x) \right)\der x.
$$
Let $\mathcal{U}\subset \mathcal{E}$ be an open neighbourhood that contains $\partial \mathcal{E}$, i.e. $\partial \mathcal{E}\subset\mathcal{U}\subset \mathcal{E}$. Then we have for $|t|<1$ that
$$
 h_\calO(t) = \int_{\mathcal{U}} \left( K_\calO(t,x,x) - K_\free(t,x,x) \right) \der x + O(t^N),
$$
for any $N>0$.
This computation is therefore purely local. Using another not feeling the boundary estimate we can replace $K_\calO(t,x,x)$ in this integral
by the Dirichlet heat kernel on the compact manifold obtained by removing $\calO$ from the large flat torus. The coefficients are therefore the same as the heat kernel coefficients on a domain with boundary. It is well-known that heat kernel coefficients are determined by local invariants of the jets of the symbols of the operators, i.e. jets of Riemannian metric and second fundamental forms (\cite[Lemma 2.6]{GilkeySmith1983} or \cite[Lemma 2.1]{BransonGilkey1990}). As our interior geometry is Euclidean, only the first heat coefficients corresponding to the interior is non-zero. This first coefficient is the same for both operators and therefore only boundary terms contribute to the expansion. The first non-trivial term is given by $a_0 = -\frac{1}{4}(4 \pi )^{-\frac{d-1}{2}} \mathrm{Vol}_{d-1}(\partial \calO)$ (see for example \cite[Theorem 1.1]{BransonGilkey1990}).
\end{proof}

\begin{rem}
 The general form of the heat expansion for a compact Riemannian manifold $M$ with boundary $\partial M$ is of the form
 $$
  \Tr e^{t \Delta_M}\sim t^{-\frac{d}{2}} \sum_{k=0}^\infty b_k t^{\frac{k}{2}}, \quad t \to 0_+,
  $$
  where the $b_{2k}$ are integrals of locally defined quantities over $M$, and the $b_{2k+1}$ are integrals of locally defined quantities over $\partial M$ which are determined by the boundary conditions.
  When considering differences heat kernels of Laplace operators with different boundary conditions the $b_{2k}$ terms cancel and only the terms containing $b_{2k+1}$ remain.
\end{rem}

It follows as usual (for example \cite[Section 1.12]{Gilkey}) that $\zeta_\calO$ has a meromorphic continuation to the complex plane. If $d$ is odd then there are finitely many poles
at $\{\frac{d-1}{2},\frac{d-1}{2}-1,\ldots, 1\}$ with residue at $\frac{d-1}{2}-k$ determined by the coefficients $a_k$. In this case the values at non-positive integers are also expressible in terms of $a_k$. In case $d$ is even poles may be located at the points
$\{\frac{d-1}{2}-k \;\mid \; k \in \mathbb{N}_0 \}$. 
\begin{definition} \label{regenergy}
The regularised energy $E_\reg$ is then defined 
$$
 E_\reg =   \frac{1}{2}  \mathrm{FP}_{s=-\frac{1}{2}}\left( \zeta_\calO(s) \right),
$$
where $ \mathrm{FP}_{z=a} f(z)$ denotes the finite part of the meromorphic function $f$ at the point $a$, i.e.
the constant term in the Laurent expansion of $f$ about the point $a$.
\end{definition}
In particular, in case $d$ is odd we have $E_\reg = \frac{1}{2} \zeta_\calO(-\frac{1}{2})$.

We can also define a zeta regularized energy $E^j_\reg$ for every object $\calO_j$. Obviously, $E^j_\reg$ does not depend on the position of $\calO_j$ in $\R^d$ and is also invariant under active rotations of the object. Since the heat coefficients are local quantities 
the relative zeta function
$$
  \zeta_\rel(s) = \zeta_\calO(s) - \sum_{j=1}^N \zeta_{\calO_j}(s)
$$
is an entire function. Since the relative quantities
$$
  \left( H^{-2s}_\calO - H^{-2s}_\free \right) - \sum_{j=1}^N \left( H^{-2s}_{\calO_j} - H^{-2s}_\free \right) 
$$
are trace-class for $s<0$ 
we also have that $E_\rel = \frac{1}{2} \zeta_\rel(-\frac{1}{2})$ and therefore
$$
 E_\rel = E_\reg - \sum_{j=1}^N E^j_\reg.
$$
Thus, $E_\rel - E_\reg$ does not change if the individual objects are translated or rotated.

\section{Proof of main theorem}
In this section, we will prove our main theorem (Theorem \ref{main thm}) by combining the results we obtained in the previous sections.
\begin{proof}[Proof of Theorem \ref{main thm}]
The differentiability of $E_\rel(\epsilon)$ follows from Theorem \ref{existence thm} and \ref{trace thm}.  As shown in Section \ref{The relative trace-formula and the Casimir energy}, we have $\Xi(\lambda) = \log \det(Q_\lambda \tilde Q_\lambda^{-1})$ and Theorem \ref{Erel thm}. Recall that equation \eqref{Erel eqn}, says
\begin{equation*}
E_\rel = \frac{1}{2} \Tr\, H_{\rel} = \frac{1}{2 \pi} \int_{m}^\infty \frac{\omega}{ \sqrt{\omega^2 - m^2}} \Xi(\rmi \omega) \der \omega.
\end{equation*}
By substituting $\Xi(\lambda)$, one obtains equation \eqref{main thm eqn2} of our main theorem. Equation \eqref{main thm eqn1} follows immediately from Theorem \ref{Equivalent thm} and \ref{Equivalent thm 1}. Since the flow $\Phi_\epsilon$ described right above Theorem \ref{main thm} is exactly a boundary translation flow (see Definition \ref{def of BT flow}), we know from the end of Section \ref{zetaregen} that $E^j_\reg$ is constant if the individual objects are translated. Hence, the fact that $E_\rel(\epsilon) - E_{\reg}(\epsilon)$ is constant near $\epsilon=0$ for $m>0$ follows.
\end{proof}

\appendix

\section{Bounds on the free resolvent}
\label{Bounds on the free resolvent}
In this appendix, we will give some estimates on the kernels of $\phi G_{\free,\lambda}\phi$ and $\phi G_{\free,\lambda}\chi$, which are denoted by $K_\lambda(x,y)$ and $\tilde{K}_\lambda(x,y)$ respectively. They are given by
\begin{equation*}
\left\{
\begin{aligned}
K_\lambda(x,y)=\frac{\rmi}{4}\phi(x)\phi(y)\left(\frac{\lambda}{2\pi |x-y|}\right)^{\frac{d-2}{2}}H^{(1)}_{\frac{d-2}{2}}(\lambda|x-y|),\\
\tilde{K}_\lambda(x,y)=\frac{\rmi}{4}\phi(x)\chi(y)\left(\frac{\lambda}{2\pi |x-y|}\right)^{\frac{d-2}{2}}H^{(1)}_{\frac{d-2}{2}}(\lambda|x-y|),
\end{aligned}
\right.
\end{equation*}
where $H^{(1)}_\nu$ is the Hankel function. Moreover, we assume $\phi$ has compact support whereas the support of $\chi$ is unbounded with $\dist(\supp \phi,\supp \chi)=\delta>0$. Now we have
\begin{align*}
\calK_{\mu,\nu,p,d}(x;\lambda)&=\int_{\R^d}\left|\phi(x)\phi(y)\left(\frac{\lambda}{2\pi |x-y|}\right)^{\mu}H^{(1)}_{\nu}(\lambda|x-y|)\right|^p\dd y
\\
&\le C \int_{\supp(\phi)}\left|\left(\frac{\lambda}{2\pi |x-y|}\right)^{\mu}H^{(1)}_{\nu}(\lambda|x-y|) \right|^p \dd y
\\
&\le C \int_{\mathbb{S}^{d-1}}\int_0^{r_\phi}\left|\left(\frac{\lambda}{2\pi r}\right)^{\mu}H^{(1)}_{\nu}(\lambda r) \right|^p r^{d-1} \dd r \dd \omega
\\
&\le C \int_0^{r_\phi}\left|\left(\frac{\lambda}{r}\right)^{\mu}H^{(1)}_{\nu}(\lambda r) \right|^p r^{d-1} \dd r .
\end{align*}
Similarly,
$$
\tilde{\calK}_{\mu,\nu,p,d}(x;\lambda)\le C \int_\delta^{\infty}\left|\left(\frac{\lambda}{r}\right)^{\mu}H^{(1)}_{\nu}(\lambda r) \right|^p r^{d-1} \dd r .
$$
Recall that Hankel functions have the following asymptotic
\begin{align}
&\text{For $|z|<r_0$}, & &|H^{(1)}_\nu(z)|\le C_{r_0,\nu} \left\{
\begin{aligned}
&|z|^{-\nu}&\text{if $\nu>0$},\\
&|\log (z)|&\text{if $\nu=0$}.
\end{aligned}
\right.
\label{Hankelsmall}\\
&\text{For $|z|\ge r_0$}, & &|H^{(1)}_\nu(z)|\le C_{r_0,\nu} |z|^{-\frac{1}{2}}e^{-\Im z} .\label{Hankellarge}
\end{align}
Therefore, for small $\lambda$ such that $r_\phi|\lambda| < r_0$, we have
\begin{equation*}
\int_0^{r_\phi}\left|\left(\frac{\lambda}{r}\right)^{\mu}H^{(1)}_{\nu}(\lambda r) \right|^p r^{d-1} \dd r 
\le \left\{
\begin{aligned}
&C \left|\lambda^\mu \log(\lambda)\right|^p \int_0^{r_\phi} r^{d-\mu p-1}|\log r|^p\dd r & &\text{for $\nu=0$} ,\\
&C |\lambda|^{(\mu-\nu)p}\int_0^{r_\phi}r^{d-1-(\mu+\nu)p} \dd r & &\text{for $\nu \ne 0$} .
\end{aligned}
\right.
\end{equation*}
For large $\lambda$ such that $r_\phi|\lambda| \ge r_0$, we have
\begin{align*}
&\int_0^{r_\phi}\left|\left(\frac{\lambda}{r}\right)^{\mu}H^{(1)}_{\nu}(\lambda r) \right|^p r^{d-1} \dd r  
\\
=&|\lambda|^{2\mu p-d}\int_0^{r_\phi|\lambda|}\left|H^{(1)}_{\nu}\left(\frac{\lambda}{|\lambda|}r\right) \right|^p r^{d-\mu p-1} \dd r  
\\
=&|\lambda|^{2\mu p-d}\left(\int_0^{r_0}\left|H^{(1)}_{\nu}\left(\frac{\lambda}{|\lambda|}r\right) \right|^p r^{d-\mu p-1} \dd r+\int_{r_0}^{r_\phi|\lambda|}\left|H^{(1)}_{\nu}\left(\frac{\lambda}{|\lambda|}r\right) \right|^p r^{d-\mu p-1} \dd r\right) .
\end{align*}
The above terms can be bounded by using the asymptotic of Hankel functions in equations \eqref{Hankelsmall} and  \eqref{Hankellarge}.
\begin{equation*}
\left\{
\begin{aligned}
&|\lambda|^{2\mu p-d}\left(\int_0^{r_0}\left|\log\left(\frac{\lambda}{|\lambda|}r\right) \right|^p r^{d-\mu p-1} \dd r  +\int_{r_0}^{r_\phi|\lambda|} r^{-\frac{1}{2}}e^{-r\frac{\Im \lambda}{|\lambda|}} r^{d-\mu p-1}\dd r \right)& &\text{for $\nu=0$},
\\
&|\lambda|^{2\mu p-d}\left(\int_0^{r_0}r^{d-(\mu+\nu) p-1} \dd r +\int_{r_0}^{r_\phi|\lambda|} r^{-\frac{1}{2}}e^{-r\frac{\Im \lambda}{|\lambda|}} r^{d-\mu p-1}\dd r \right) & &\text{for $\nu\ne 0$}.
\end{aligned}
\right.
\end{equation*}
Replacing $r_\phi|\lambda|$ by $\infty$ in the integral, we obtain
$$
\calK_{\mu,\nu,p,d}(x;\lambda)  \le C_{r_0,\mu,\nu,p,d}|\lambda|^{2\mu p-d} \quad \text{for large $|\lambda|$}.
$$
Combining with the asymptotic for small $|\lambda|$, we have
\begin{equation}
\label{Kmunupd}
\calK_{\mu,\nu,p,d}(x;\lambda)
\le \left\{
\begin{aligned}
&C_{r_0,\mu,p,d}\, \rho_{p,\mu p;2\mu p -d}(\Im \lambda) & &\text{for $\nu=0$} ,\\
&C_{r_0,\mu,\nu,p,d}\, \rho_{0,(\mu-\nu) p;2\mu p -d}(\Im \lambda) & &\text{for $\nu \ne 0$} .
\end{aligned}
\right.
\end{equation}

For $\tilde{\calK}_{\mu,\nu,p,d}$, we have
\begin{equation*}
\tilde{\calK}_{\mu,\nu,p,d}(x;\lambda)\le C \int_\delta^{\infty}\left|\left(\frac{\lambda}{r}\right)^{\mu}H^{(1)}_{\nu}(\lambda r) \right|^p r^{d-1} \dd r .
\end{equation*}
For small $\lambda$ such that $\delta |\lambda| < r_0$, we also have
\begin{align*}
&\int_\delta^{\infty}\left|\left(\frac{\lambda}{r}\right)^{\mu}H^{(1)}_{\nu}(\lambda r) \right|^p r^{d-1} \dd r 
\\
=&|\lambda|^{2\mu p-d}\int_{\delta|\lambda|}^{\infty}\left|H^{(1)}_{\nu}\left(\frac{\lambda}{|\lambda|}r\right) \right|^p r^{d-\mu p-1} \dd r  
\\
=&|\lambda|^{2\mu p-d}\left(\int_{\delta|\lambda|}^{r_0} \left|H^{(1)}_{\nu}\left(\frac{\lambda}{|\lambda|}r\right) \right|^p r^{d-\mu p-1} \dd r + \int_{r_0}^{\infty}\left|H^{(1)}_{\nu}\left(\frac{\lambda}{|\lambda|}r\right) \right|^p r^{d-\mu p-1} \dd r \right).
\end{align*}
This can be bounded by
\begin{equation*}
\left\{
\begin{aligned}
&|\lambda|^{2\mu p-d}\left(\int_{\delta|\lambda|}^{r_0}\left|\log\left(\frac{\lambda}{|\lambda|}r\right) \right|^p r^{d-\mu p-1} \dd r  +\int_{r_0}^{\infty} r^{-\frac{1}{2}}e^{-r\frac{\Im \lambda}{|\lambda|}} r^{d-\mu p-1}\dd r \right)& &\text{for $\nu=0$},
\\
&|\lambda|^{2\mu p-d}\left(\int_{\delta|\lambda|}^{r_0}r^{d-(\mu+\nu) p-1} \dd r +\int_{r_0}^{\infty} r^{-\frac{1}{2}}e^{-r\frac{\Im \lambda}{|\lambda|}} r^{d-\mu p-1}\dd r \right) & &\text{for $\nu\ne 0$}.
\end{aligned}
\right.
\end{equation*}
Replacing $\delta|\lambda|$ by $0$, one has
$$
\tilde{\calK}_{\mu,\nu,p}(x;\lambda)\le C_{\mu,\nu,p,d}|\lambda|^{2\mu p-d} \quad \text{for small $|\lambda|$}.
$$
For large $\lambda$ such that $\delta |\lambda| \ge r_0$, we have
\begin{align*}
\int_\delta^{\infty}\left|\left(\frac{\lambda}{r}\right)^{\mu}H^{(1)}_{\nu}(\lambda r) \right|^p r^{d-1} \dd r 
\le& C \int_\delta^{\infty}\left|\left(\frac{\lambda}{r}\right)^{\mu}|\lambda r|^{-\frac{1}{2}}e^{-r\Im \lambda} \right|^p r^{d-1} \dd r 
\\
\le& C |\lambda|^{\mu p-\frac{p}{2}}  \int_\delta^{\infty} r^{d-\mu p-\frac{p}{2}-1}e^{-r\Im \lambda} \dd r 
\\
\le& C |\lambda|^{\mu p-\frac{p}{2}}  e^{-\delta \Im \lambda}\int_0^{\infty} (r+\delta)^{d-\mu p-\frac{p}{2}-1}e^{-r\Im \lambda} \dd r 
\\
\le& C_{\delta,r_0,\mu,p,d}\, e^{-\delta \Im \lambda}.
\end{align*}
This implies
\begin{equation}
\label{tildeKmunupd}
\tilde{\calK}_{\mu,\nu,p,d}(x;\lambda) \le C_{r_0,\mu,\nu,p,d}\, \rho_{0,2\mu p -d;0}(\Im \lambda)\, e^{-\delta \Im \lambda}.
\end{equation}
By the Schur test and equation \eqref{Kmunupd}, we have 
\begin{equation*}
\|\phi G_{\free,\lambda}\phi\|_{L^2\to L^2}\le \sup_{x\in\supp(\phi)}\calK_{\frac{d-2}{2},\frac{d-2}{2},1,d} (x;\lambda)\le C_{r_0,\phi}
\left\{
\begin{aligned}
&\rho_{1,0;-2}(\Im \lambda) & &\text{for $d=2$},\\
&\rho_{0,0;-2}(\Im \lambda) & &\text{for $d\ne 2$}.
\end{aligned}
\right.
\end{equation*}
Using the recurrence relations of Hankel functions $(H^{(1)}_{\nu})'(z)=-H^{(1)}_{\nu+1}(z)+\frac{\nu}{z}H^{(1)}_{\nu}(z)$ and similar calculations, one has
\begin{equation*}
\|\nabla_i \circ (\phi  G_{\free,\lambda}\phi)\|_{L^2\to L^2}\le C_{r_0,\phi}
\left\{
\begin{aligned}
&\rho_{1,0;-1}(\Im \lambda) & &\text{for $d=2$},\\
&\rho_{0,0;-1}(\Im \lambda) & &\text{for $d\ne 2$}.
\end{aligned}
\right.
\end{equation*}
This is the same bound for $\|\phi G_{\free,\lambda}\phi\|_{L^2\to H^1}$. To get the bound for $\|\phi G_{\free,\lambda}\phi\|_{L^2\to H^2}$, we recall that  $(-\Delta-\lambda^2)G_{\free,\lambda}=I$, hence
\begin{align*}
\|\phi G_{\free,\lambda}\phi\|_{L^2\to H^2}&\le C \|\Delta \phi G_{\free,\lambda}\phi\|_{L^2\to L^2}+C\|\phi G_{\free,\lambda}\phi\|_{L^2\to L2}\\
&\le C(1+|\lambda|^2)\|\phi G_{\free,\lambda}\phi\|_{L^2\to L^2}+C\|\tilde{\phi} G_{\free,\lambda}\tilde{\phi}\|_{L^2\to H^1},
\end{align*}
where $\tilde{\phi}=1$ near $\supp \phi$ and $\operatorname{diam}(\supp \tilde{\phi})$ is bounded. Combining with the estimates on $\|\phi G_{\free,\lambda}\phi\|_{L^2\to H^1}$ and $\|\phi G_{\free,\lambda}\phi\|_{L^2\to L^2}$, we have
\begin{equation*}
\|\phi G_{\free,\lambda}\phi\|_{L^2\to H^2}\le C_{r_0,\phi}
\left\{
\begin{aligned}
&\rho_{1,0;0}(\Im \lambda) & &\text{for $d=2$},\\
&1 & &\text{for $d\ne 2$}.
\end{aligned}
\right.
\end{equation*}
For the operator $\chi G_{\free,\lambda}\phi$, we obtain from equation \eqref{tildeKmunupd} that
\begin{align*}
\|\chi G_{\free,\lambda}\phi\|_{HS(H^{s}(\partial \calO)\to L^2(\R^d))}^2 &\le \sup_{x\in\supp(\phi)}\tilde{\calK}_{\frac{d-2}{2},\frac{d-2}{2},2,d} (x;\lambda) \\ &\le C_{r_0,\phi,\chi}\, \rho_{0,d-4;0}(\Im \lambda)e^{-\delta \Im \lambda},
\end{align*}
which gives the verification of the estimate \eqref{S prop 2 eqn 1}.

\section{The method illustrated for the 1-D Casimir effect}
\label{1-D relative setting}
In this appendix, we illustrate Theorem \ref{Equivalent thm} in its simplest form i.e. for the case of the $1$-dimensional Casimir effect with $m=0$. This will also illustrate the advantages of the relative framework.
Let $a_1<b_1<a_2<b_2$, where $\calO_1=(a_1,b_1)$ and $\calO_2=(a_2,b_2)$ are the obstacles. Then we have
\begin{equation*}
G_\calO=
\left\{
\begin{aligned}
&G_{(-\infty,a_1)}& x,y<a_1\\
&G_{(a_1,b_1)}& a_1<x,y<b_1\\
&G_{(b_1,a_2)}& b_1<x,y<a_2\\
&G_{(a_2,b_2)}& a_2<x,y<b_2\\
&G_{(b_2,+\infty)}& x,y>b_2
\end{aligned}
\right. ,
\end{equation*}

\begin{equation*}
G_{\calO_1}=
\left\{
\begin{aligned}
&G_{(-\infty,a_1)}& x,y<a_1\\
&G_{(a_1,b_1)}& a_1<x,y<b_1\\
&G_{(b_1,+\infty)}& x,y>b_1
\end{aligned}
\right. ,
\end{equation*}

\begin{equation*}
G_{\calO_2}=
\left\{
\begin{aligned}
&G_{(-\infty,a_2)}& x,y<a_2\\
&G_{(a_2,b_2)}& a_2<x,y<b_2\\
&G_{(b_2,+\infty)}& x,y>b_2
\end{aligned}
\right. .
\end{equation*}

Then $G_\rel=G_\calO-G_{\calO_1}-G_{\calO_2}+G_\free$ is given by

\begin{equation}
\label{1D Grel}
G_\rel=
\left\{
\begin{aligned}
&G_\free-G_{(-\infty,a_2)}& x,y<b_1\\
&G_{(b_1,a_2)}-G_{(b_1,+\infty)}-G_{(-\infty,a_2)}+G_\free& b_1<x,y<a_2\\
&G_\free-G_{(b_1,+\infty)}& x,y>a_2
\end{aligned}
\right. .
\end{equation}
In particular, 
\begin{equation*}
G_\free(x,y;k^2)=-\frac{e^{\rmi k|x-y|}}{2\rmi k}=
\left\{
\begin{aligned}
&-\frac{e^{-\rmi k(x-y)}}{2\rmi k}& x<y\\
&-\frac{e^{-\rmi k(y-x)}}{2\rmi k}& x>y
\end{aligned}
\right. ,
\end{equation*}

\begin{equation*}
G_{(b,+\infty)}(x,y;k^2)=-\frac{e^{\rmi k|x-y|}-e^{\rmi k|x+y-2b|}}{2\rmi k}=
\left\{
\begin{aligned}
&\frac{e^{\rmi k(y-b)}\sin(k(x-b))}{k}& x<y\\
&\frac{e^{\rmi k(x-b)}\sin(k(y-b))}{k}& x>y
\end{aligned}
\right. ,
\end{equation*}

\begin{equation*}
G_{(-\infty,a)}(x,y;k^2)=-\frac{e^{\rmi k|x-y|}-e^{\rmi k|x+y-2a|}}{2\rmi k}=
\left\{
\begin{aligned}
&\frac{e^{-\rmi k(x-a)}\sin(k(a-y))}{k}& x<y\\
&\frac{e^{-\rmi k(y-a)}\sin(k(a-x))}{k}& x>y
\end{aligned}
\right. ,
\end{equation*}

\begin{multline*}
G_{(a,b)}(x,y;k^2)=\frac{\cos(k(x+y-b-a))-\cos(k(b-a-|x-y|))}{2k\sin (k(b-a))}
\\
=
\left\{
\begin{aligned}
&\frac{\sin(k(x-a))\sin(k(b-y))}{k\sin(k(b-a))}& x<y\\
&\frac{\sin(k(y-a))\sin(k(b-x))}{k\sin(k(b-a))}& x>y
\end{aligned}
\right. .
\end{multline*}

For $x,y>b$, we have
\begin{equation*}
[G_\free-G_{(b,+\infty)}](x,y;k^2)=-\frac{e^{\rmi k|x+y-2b|}}{2\rmi k} ,
\end{equation*}
which implies
\begin{align*}
[\breve{H}_\free-\breve{H}_{(b,+\infty)}](x,y)
=&\frac{\rmi}{\pi}\int_{\tilde{\Gamma}}k \sqrt{k^2}[G_\free-G_{(b,+\infty)}](x,y;k^2) \dd k
\\
=&\frac{\rmi}{\pi}\int_{\tilde{\Gamma}}k \sqrt{k^2}(-\frac{e^{\rmi k|x+y-2b|}}{2\rmi k})\dd k
\\
=&-\frac{\rmi}{\pi}\int_{-\tilde{\Gamma}}k \sqrt{k^2}(-\frac{e^{\rmi k|x+y-2b|}}{2\rmi k})\dd k
\\
=&-\frac{2\rmi}{\pi}\int_0^{\infty}(\rmi k)^2(-\frac{e^{\rmi(\rmi k)|x+y-2b|}}{2\rmi(\rmi k)})\dd \rmi k
\\
=&-\frac{1}{\pi(x+y-2b)^2} .
\end{align*}


The same calculation yields for $x,y<b$. That is 
\begin{equation*}
[\breve{H}_\free-\breve{H}_{(b,+\infty)}](x,y)=[\breve{H}_\free-\breve{H}_{(-\infty,b)}](x,y)=-\frac{1}{\pi(x+y-2b)^2} .
\end{equation*}
When restricting to the diagonal, we have
\begin{equation}
\label{1D Delrel left right}
[\breve{H}_\free-\breve{H}_{(b,+\infty)}](x,x)=[\breve{H}_\free-\breve{H}_{(-\infty,b)}](x,x)=-\frac{1}{4\pi(x-b)^2} .
\end{equation}

Now for $a<x,y<b$, we have
\begin{equation*}
[G_{(a,b)}-G_\free](x,y;k^2)=\frac{\cos(k(b-a-|x-y|))-\cos(k(x+y-b-a))}{2k\sin (k(b-a))}+\frac{e^{\rmi k|x-y|}}{2\rmi k} ,
\end{equation*}
which implies
\begin{align*}
[\breve{H}_{(a,b)}-\breve{H}_\free](x,y)=&\frac{\rmi}{\pi}\int_{\tilde{\Gamma}}k \sqrt{k^2}[G_{(a,b)}-G_\free](x,y;k^2) \dd k
\\
=&-\frac{\rmi}{\pi}\int_{-\tilde{\Gamma}}k \sqrt{k^2}[G_{(a,b)}-G_\free](x,y;k^2) \dd k
\\
=&-\frac{2\rmi}{\pi}\int_0^{\infty}(\rmi k)^2[G_{(a,b)}-G_\free](x,y;(ik)^2) \dd \rmi k
\\
=&-\frac{2}{\pi}\int_0^{\infty}k^2[G_{(a,b)}-G_\free](x,y;(ik)^2) \dd k.
\end{align*}

Note that 

\begin{equation*}
\int_0^{\infty}\frac{k \cosh(ak)}{\sinh(bk)}\dd k=\frac{\pi^2}{4b^2}\sec^2\left(\frac{a\pi}{2b}\right) \quad \text{for} \quad a<b
\end{equation*}
and
\begin{equation*}
[G_{(a,b)}-G_\free](x,y;(\rmi k)^2)=\frac{\cosh(k(b-a-|x-y|))-\cosh(k(x+y-b-a))}{2k\sinh (k(b-a))}-\frac{e^{-k|x-y|}}{2k}
\end{equation*}
implies
\begin{align*}
&[\breve{H}_{(a,b)}-\breve{H}_\free](x,y)
\\
=&-\frac{2}{\pi}\int_0^{\infty} k^2[G_{(a,b)}-G_\free](x,y;(\rmi k)^2) \dd k
\\
=&-\frac{\pi}{4(b-a)^2}\left[
\csc^2\left(\frac{|x-y|\pi}{2(b-a)}\right)-\csc^2\left(\frac{(x+y-2b)\pi}{2(b-a)}\right)
\right]+\frac{1}{\pi(x-y)^2} .
\end{align*}
When restricting to the diagonal, we have
\begin{equation}
\label{1D Delrel mid}
[\breve{H}_{(a,b)}-\breve{H}_\free](x,x)=-\frac{\pi}{12(b-a)^2}+\frac{\pi}{4(b-a)^2}\csc^2\left(\frac{(x-b)\pi}{b-a}\right) .
\end{equation}
Equation \eqref{1D Grel} gives
\begin{equation*}
\frac{1}{2}H_\rel (x)=
\left\{
\begin{aligned}
&\frac{1}{2}[\breve{H}_\free-\breve{H}_{(-\infty,a_2)}](x,x)& x<b_1\\
&\frac{1}{2}[\breve{H}_{(b_1,a_2)} -\breve{H}_{(b_1,\infty)} -\breve{H}_{(-\infty,a_2)} + \breve{H}_\free](x,x)& b_1<x<a_2\\
&\frac{1}{2}[\breve{H}_\free-\breve{H}_{(-\infty,b_1)}](x,x)& x>a_2
\end{aligned}
\right. .
\end{equation*}
From equations \eqref{1D Delrel left right} and \eqref{1D Delrel mid}, we have
\begin{multline*}
\frac{1}{2}H_\rel (x)=
\\
\left\{
\begin{aligned}
&-\frac{1}{8\pi(x-a_2)^2}& x<b_1\\
&\text{\small{$\frac{\pi}{8(a_2-b_1)^2}\csc^2\Big(\frac{(x-a_2)\pi}{a_2-b_1}\Big)-\frac{\pi}{24(a_2-b_1)^2}-\frac{1}{8\pi(x-a_2)^2}-\frac{1}{8\pi(x-b_1)^2}$}}& b_1<x<a_2\\
&-\frac{1}{8\pi(x-b_1)^2}& x>a_2
\end{aligned}
\right. .
\end{multline*}
This equation shows that $H_\rel (x)$ is continuous, which is consistent with the claim in the proof of Theorem \ref{variation of energy thm}. Integrating over $\R$, we have
\begin{equation}
\label{1D Tr Hrel 0}
\frac{1}{2}\Tr_{\R}(H_\rel)=-\frac{1}{8\pi(a_2-b_1)}+\frac{6-\pi^2}{24\pi(a_2-b_1)}-\frac{1}{8\pi(a_2-b_1)}=-\frac{\pi}{24(a_2-b_1)} .
\end{equation}
Similarly, one has the renormalised counterpart of $H_\rel (x)$, which is given by $(H_\calO)_{\text{ren}}=(H_{\calO} -H_\free)|_\diag$. Note that this only corresponds to the first term in $T_{00}$ of \eqref{stress energy tensor}, i.e. $ \frac{1}{2}(H-H_\free)|_\diag$. It is given by
\begin{equation*}
\frac{1}{2}H_{\text{ren}} (x)=
\left\{
\begin{aligned}
&\frac{1}{8\pi(x-a_1)^2}& x<a_1\\
&\text{\small{$\frac{\pi}{8(b_1-a_1)^2}\csc^2\Big(\frac{(x-b_1)\pi}{b_1-a_1}\Big)-\frac{\pi}{24(a_2-b_1)^2}$}}& a_1<x<b_1\\
&\text{\small{$\frac{\pi}{8(a_2-b_1)^2}\csc^2\Big(\frac{(x-a_2)\pi}{a_2-b_1}\Big)-\frac{\pi}{24(a_2-b_1)^2}$}}& b_1<x<a_2\\
&\text{\small{$\frac{\pi}{8(b_2-a_2)^2}\csc^2\Big(\frac{(x-b_2)\pi}{b_2-a_2}\Big)-\frac{\pi}{24(b_2-a_2)^2}$}}& a_2<x<b_2\\
&\frac{1}{8\pi(x-b_2)^2}& x>b_2
\end{aligned}
\right. .
\end{equation*}
It is easy to see that $H_{\text{ren}}$ is not integrable. Therefore, some regularisation schemes would be needed at this point. One way is by heat-kernel regularisation (see, for instance, \cite{fulling2007vacuum}). However, this only resolves the non-integrability problem of the first term of $T_{00}$. We also need to integrate the term $\frac{1}{8}\Delta[(H^{-1}-H_\free^{-1})|_\diag]$ in equation \eqref{stress energy tensor} over $\R$. This is also ill-defined, as it is not integrable. We will see that these problems disappear when we work in the relative setting.

Restricting equation \eqref{1D Grel} to the diagonal and then taking the action of Laplacian, we have
\begin{multline*}
\label{1D DelGrel}
\Delta(G_\rel(x,x;(\rmi k)^2))=
\\
\left\{
\begin{aligned}
&2ke^{-2k(a_2-x)}& x<b_1\\
&-2k\frac{\cosh(k(2x-a_2-b_1))}{\sinh(k(a_2-b_1))}+2k[e^{-2k(a_2-x)}+e^{-2k(x-b_1)}]& b_1<x<a_2\\
&2ke^{-2k(x-b_1)}& x>a_2
\end{aligned}
\right. .
\end{multline*}
Integrating spectral variable $k$ along $\tilde{\Gamma}$ and then over the space variable $x$, we have 
\begin{equation}
\label{1Dsecondterm}
\int_\R \Delta(H_\rel^{-1}|_\diag) \dd x=0,
\end{equation}
hence
\begin{equation}
\label{1D Tr Hrel 1}
E_\rel=\frac{1}{2}\Tr_{\R}(H_\rel)+\int_\R\frac{1}{8}\Delta(H_\rel^{-1}|_\diag) \dd x=-\frac{\pi}{24(a_2-b_1)} .
\end{equation}
Note that using heat-kernel regularisation, one would also obtain $E_\rel$, see \cite{fulling2007vacuum}. Equations \eqref{1D Tr Hrel 0} and \eqref{1D Tr Hrel 1} agree with Theorem \ref{Erel thm}. Note that equation \eqref{1Dsecondterm} also shows that
\begin{equation*}
\int_\R \Delta[(H_\calO^{-1}-H_\free^{-1})|_\diag] \dd x \text{``$=$''} \int_\R \Delta[(H_{\calO_1}^{-1}-H_\free^{-1})|_\diag] \dd x+\int_\R \Delta[(H_{\calO_2}^{-1}-H_\free^{-1})|_\diag] \dd x,
\end{equation*}
where all the three terms are ill-defined as they are not integrable. For instance, $\Delta[(H_{\calO_1}^{-1}-H_\free^{-1})|_\diag](x)$ has $\frac{1}{(a_1-x)^2}$ singularity when approaching $x=a_1$. This justifies Remark \ref{divremark}.

Now let $X$ be a smooth vector field that generates a movement of (right) obstacle 2 to the right with a constant speed $v$. Moreover, $X$ is zero around (left) obstacle 1. In other words, we move the obstacle 2 to the right by $v\epsilon$ and keep the obstacle 1 stationary. Now the variation of the relative energy is given by $\delta_X E=v\cdot \partial_{a_2} E$. The left hand side of equation in Theorem \ref{variation of energy thm eqn main} becomes
\begin{equation}
\label{1D Erel}
\delta_X E_\rel=\frac{v\pi}{24(a_2-b_1)^2}.
\end{equation}
Now the identity \eqref{Tr delta Hrel} used in the proof of Theorem \ref{Equivalent thm} says
\begin{equation*}
\frac{1}{2}\Tr[\delta_X H_\rel]=-\frac{v}{4}\sum_{p=1}^{N}\int_{\partial \calO_p} [\partial_\nu\partial_\nu'(H_\calO^{-1}-H_{\calO_p}^{-1})]|_\diag  \dd x .
\end{equation*}
It becomes
\begin{equation}
\label{1D Tr Hrel 2}
\begin{aligned}
\frac{1}{2}\Tr[ \delta_X H_\rel]=&-\frac{v}{4}\left[ [\partial_\nu\partial_\nu'(H_\calO^{-1}-H_{\calO_2}^{-1})](a_2,a_2)-[\partial_\nu\partial_\nu'(H_\calO^{-1}-H_{\calO_2}^{-1})](b_2,b_2) \right]
\\
=&-\frac{v}{4}[\partial_\nu\partial_\nu'(H_\calO^{-1}-H_{\calO_2}^{-1})](a_2,a_2).
\end{aligned}
\end{equation}
Note that
\begin{equation*}
\partial_x\partial_y\left(G_{(a,b)}-G_{(-\infty,b)}\right)(b,b;k^2)=-\rmi k-k\cot(k(b-a)) ,
\end{equation*}
therefore
\begin{equation}
\label{1D Tr Hrel 3}
\begin{aligned}
&[\partial_\nu\partial_\nu'(H_\calO^{-1}-H_{\calO_2}^{-1})](a_2,a_2)
\\
=&\frac{\rmi}{\pi}\int_{\tilde{\Gamma}}\frac{k}{\sqrt{k^2}}\partial_x\partial_y[G_{(b_1,a_2)}-G_{(-\infty,a_2)}](a_2,a_2;k^2) \dd k
\\
=&-\frac{\rmi}{\pi}\int_{-\tilde{\Gamma}}\frac{k}{\sqrt{k^2}}\partial_x\partial_y[G_{(b_1,a_2)}-G_{(-\infty,a_2)}](a_2,a_2;k^2) \dd k
\\
=&-\frac{2\rmi}{\pi}\int_0^{\infty}\partial_x\partial_y[G_{(b_1,a_2)}-G_{(-\infty,a_2)}](a_2,a_2;(\rmi k)^2) \rmi \dd k
\\
=&\frac{2}{\pi}\int_0^{\infty}(k-k\coth(k(a_2-b_1))) \dd k
\\
=&-\frac{\pi}{6(a_2-b_1)^2}.
\end{aligned}
\end{equation}
Combining equations \eqref{1D Tr Hrel 1}, \eqref{1D Tr Hrel 2} and \eqref{1D Tr Hrel 3}, we have verified the identity \eqref{Tr delta Hrel} in one dimensional cases. Moreover, equations \eqref{1D Erel}, \eqref{1D Tr Hrel 2} and \eqref{1D Tr Hrel 3} are consistent with Theorem \ref{variation of energy thm} and Theorem \ref{Equivalent thm}.


\begin{thebibliography}{10}

\bibitem{balian78}
R.~Balian and B.~Duplantier.
\newblock Electromagnetic waves near perfect conductors II, Casimir effect.
\newblock {\em Annals of Physics}, 112, 165--208, 1978.



\bibitem{EGJK2017}
G.~Bimonte, T.~Emig, M.~Kardar and M.~Kr\"{u}ger.
\newblock Nonequilibrium Fluctuational Quantum Electrodynamics: Heat Radiation, Heat Transfer, and Force.
\newblock {\em Annual Review of Condensed Matter Physics}, 8:119--143, 2017.



\bibitem{BransonGilkey1990}
T.~P.~Branson and P.~B.~Gilkey.
\newblock The asymptotics of the Laplacian on a manifold with boundary.
\newblock {\em Comm. Partial Differential Equations}, 15: 245--272, 1990.


\bibitem{BrownMaclay}
L.~S.~Brown and G.~J.~Maclay.
\newblock Vacuum Stress between Conducting Plates: An Image Solution.
\newblock {\em Phys. Rev.}, 184(5):1272, 1969.



\bibitem{borisov1988relative}
N.~V. Borisov, W.~M{\"u}ller, and R.~Schrader.
\newblock Relative index theorems and supersymmetric scattering theory.
\newblock {\em Comm. in mathematical physics}, 114(3):475--513, 1988.

\bibitem{Bressi02}
G.~Bressi, G.~Carugno, R.~Onofrio and G.~Ruoso.
\newblock Measurement of the Casimir force between parallel metallic surfaces.
\newblock {\em Phys. Rev. Lett. } 88, 041804, 2002.


\bibitem{bordag2001}
M. Bordag, G.~L. Klimchitskaya, U. Mohideen, and V.~M. Mostepanenko.
\newblock New developments in the Casimir effect.
\newblock {\em Physics Reports}, 353(1-3), 1--205, 2001.


\bibitem{bordag2009advances}
M. Bordag, G.~L. Klimchitskaya, U. Mohideen, and V.~M. Mostepanenko.
\newblock {\em Advances in the {C}asimir effect}.
\newblock Oxford University Press, 2009.


\bibitem{Casimir}
H.~B.~G.~Casimir.
\newblock On the Attraction Between Two Perfectly Conducting Plates.
\newblock {\em Indag. Math.}, (10):261--263, 1948.

\bibitem{Candelas1982}
P.~Candelas.
\newblock Vacuum energy in the presence of dielectric and conducting surfaces.
\newblock {\em Annals of Physics}, 143(2):241--295, 1982.


\bibitem{chan2001quantum}
HB.~Chan, V.A.~Aksyuk, R.N.~Kleiman, D.J.~Bishop, and F. Capasso.
\newblock Quantum mechanical actuation of microelectromechanical systems by the
{C}asimir force.
\newblock {\em Science}, 291(5510):1941--1944, 2001.


\bibitem{ChandlerWilde}
S.~N.~Chandler-Wilde, I.~G.~Graham, S.~Langdon and E.~A.~Spence.
\newblock Numerical-asymptotic boundary integral methods in high-frequency acoustic scattering.
\newblock {\em Acta Numerica}, 21:89--305, 2012.



\bibitem{dappiaggi2014casimir}
C. Dappiaggi, G. Nosari, and N. Pinamonti.
\newblock The {C}asimir effect from the point of view of algebraic quantum field
theory.
\newblock {\em arXiv preprint arXiv:1412.1409}, 2014.


\bibitem{dzyaloshinskii1961general}
I.~E. Dzyaloshinskii, E.M.~Lifshitz, and L.~P. Pitaevskii.
\newblock General theory of {V}an der {W}aals' forces.
\newblock {\em Physics-Uspekhi}, 4(2):153--176, 1961.


\bibitem{DC1979}
D.~Deutsch and P.~Candelas.
\newblock Boundary effects in quantum field theory.
\newblock {\em Phys. Rev. D}, 20(12):3063--3080, 1979.


\bibitem{DunfordSchwartz}
N.~Dunford and J.~T.~Schwartz.
\newblock Linear Operators, Part I: General Theory. 
\newblock Wiley-Interscience, New York, 1988. 


\bibitem{DZ2019}
S.~Dyatlov and M.~Zworski.
\newblock Mathematical  theory  of  scattering  resonances.
\newblock {\em http://math.mit.edu/~dyatlov/res/res\_final.pdf}, 2019.


\bibitem{elizalde1989expressions}
E.~Elizalde and A.~Romeo.
\newblock Expressions for the zeta-function regularized {C}asimir energy.
\newblock {\em Journal of mathematical physics}, 30(5):1133--1139, 1989.


\bibitem{elizalde1990heat}
E.~Elizalde and A.~Romeo.
\newblock Heat-kernel approach to the zeta-function regularization of the
{C}asimir energy for domains with curved boundaries.
\newblock {\em International Journal of Modern Physics A}, 5(09):1653--1669,
1990.


\bibitem{Ederth00}
T. ~Ederth. 
\newblock Template-stripped gold surfaces with 0.4 nm rms roughness suitable for force measurements: application to the {C}asimir force in the 20-100 nm range.
\newblock {\em Phys. Rev. A} 62 062104, 2000.






\bibitem{EGJK2007}
T.~Emig, N.~Graham, R.~L.~Jaffe, and M.~Kardar.
\newblock Casimir Forces between Arbitrary Compact Objects.
\newblock {\em Phys. Rev. Lett.}, 99(17):170403, 2007.




\bibitem{EGJK2008}
T.~Emig, N.~Graham, R.~L.~Jaffe, and M.~Kardar.
\newblock Casimir forces between compact objects: The scalar case.
\newblock {\em Phys. Rev. Lett.}, 77(2):025005, 2008.


\bibitem{emig2008casimir}
T.~Emig and R.L.~Jaffe.
\newblock Casimir forces between arbitrary compact objects.
\newblock {\em J. Phys. A: Math. Theor.},
41(16):164001, 2008.


\bibitem{EGJK2006}
T.~Emig, R.~L.~Jaffe, M.~Kardar and A.~Scardicchi.
\newblock Casimir interaction between a plate and a cylinder.
\newblock {\em Phys. Rev. Lett.}, 96(8):080403, 2006.



\bibitem{fulling2007vacuum}
S.~A. Fulling.
\newblock Vacuum energy as spectral geometry.
\newblock {\em Sym. Integrab. Geom.: Meth. Appl}, 3:094, 2007.


\bibitem{Greiner1971}
P.~Greiner.
\newblock An Asymptotic Expansion for the Heat Equation.
\newblock {\em Arch. Rat. Mech. Anal.}, 41: 163--218, 1971.

\bibitem{GarabedianSchiffer}
P.~R.Garabedian and M.~Schiffer.
\newblock Convexity of domain functionals.
\newblock In {\em Journal d'Analyse Math\'{e}matique}, 2:281--368, 1952.  


\bibitem{Gies07}
H.~Gies, K.~Langfeld, L.~Moyaerts.
\newblock Casimir effect on the worldline.
\newblock{\em JHEP, 06, 2003.}


\bibitem{Gilkey}
P.~B.~Gilkey.
\newblock Invariance Theory, the Heat Equation, and the Atiyah-Singer Index Theorem.
\newblock {\em Studies in Advanced Mathematics}, CRC Press, Second edition, 1994.


\bibitem{GilkeySmith1983}
P.~B.~Gilkey and L.~Smith.
\newblock The Eta Invariant for a Class of Elliptic Boundary Value Problems.
\newblock {\em Comm. on Pure and Applied Math.}, 36: 85--131, 1983.


\bibitem{Hadamard}
J. Hadamard.
\newblock {\em M\'{e}moire Sur Le Probl\`{e}me D'analyse Relatif \`{A} L'\'{e}quilibre Des Plaques \'{E}lastiques Encastr\'{e}es }.
\newblock M\'{e}moires des Savants Etrangers, 1908.


\bibitem{Henry2005}
D~Henry.
\newblock Perturbation of the Boundary in Boundary-Value Problems of Partial Differential Equations.
\newblock London Mathematical Society Lecture Note Series (318),  Cambridge University Press, 2005. 


\bibitem{HezariZelditch}
H.~Hezari and S.~Zelditch.
\newblock $C^{\infty}$ Spectral rigidity of the ellipse.
\newblock {\em Analysis \& PDE}, 5(5):1105--1132, 2012.


\bibitem{RTF}
F.~Hanisch, A.~Strohmaier and A.~Waters.
\newblock A relative trace formula for obstacle scattering.
\newblock {\em https://arxiv.org/pdf/2002.07291}, 2020.


\bibitem{JensenKato}
A.~Jensen and T.~Kato.
\newblock Asymptotic behavior of the scattering phase for exterior domains.
\newblock {\em Comm. Partial Differential Equations}, 3(12): 1165--1195, 1978.



\bibitem{johnson2011numerical}
S.~G. Johnson.
\newblock Numerical methods for computing {C}asimir interactions.
\newblock In {\em Casimir physics}, pages 175--218, Springer, 2011.


\bibitem{kay1979casimir}
B.~S. Kay.
\newblock {C}asimir effect in quantum field theory.
\newblock {\em Phys. Rev. D}, 20(12):3052, 1979.


\bibitem{kirsten2002}
K. Kirsten.
\newblock Spectral functions in mathematics and physics.   
\newblock Chapman \& Hall/CRC, Boca Raton, FL, 2002.


\bibitem{Kac1951}
M.~Kac.
\newblock On some connections between probability theory and differential and integral equations.
\newblock Proc. Second Berkeley Symposium on Mathematical Statistics and Probability, 189--215. University of California Press, Berkeley and Los Angeles, 1951.


\bibitem{Kato1978}
T.~Kato.
\newblock Monotonicity theorems in scattering theory.   
\newblock Hadronic Journal, 1: 134--154, 1978.



\bibitem{kenneth06}
O. Kenneth and I. Klich.
\newblock{Opposites Attract: A Theorem about the {C}asimir Force.}
\newblock{\em Phys. Rev. Lett.} 97: 060401, 2006.


\bibitem{kenneth08}
O. Kenneth and I. Klich.
\newblock{Casimir forces in a T operator approach.}
\newblock{\em Phys. Rev. B} 78: 014103, 2008.


\bibitem{kirstenlee18}
K. Kirsten and Yoonweon Lee.
\newblock Gluing Formula for Casimir Energies.
\newblock{\em Symmetry} 10 (1), 31, 2018.


\bibitem{lamoreaux1997demonstration}
S.~K. Lamoreaux.
\newblock Demonstration of the {C}asimir force in the $0.6$ to 6 $\mu m$  range.
\newblock {\em Phys. Rev. Lett.}, 78(1):5, 1997.


\bibitem{MR3584189}
L.~Li and A.~Strohmaier.
\newblock Heat kernel estimates for general boundary problems.
\newblock{\em J. Spectr. Theory}, 6 (4):903--919, 2016.


\bibitem{Ozawa1982}
S.~Ozawa.
\newblock Hadamard's variation of the Green kernels of heat equations and their traces I.
\newblock {\em J. Math. Soc. Japan}, 34(3):455--473, 1982.


\bibitem{mohideen1998precision}
U. Mohideen and A. Roy.
\newblock Precision measurement of the {C}asimir force from $0.1$ to 0.9 $\mu m$.
\newblock {\em Phys. Rev. Lett.}, 81(21):4549, 1998.


\bibitem{McLean2000}
W. McLean.
\newblock {\em Strongly elliptic systems and boundary integral equations}.
\newblock Cambridge University Press, Cambridge, 2000.


\bibitem{milton08}
K.~A.~Milton and J.~ Wagner.
\newblock Multiple scattering methods in Casimir calculations.
\newblock{\em J. Phys. A: Math. Theor.}, 41:155402,  2008.


\bibitem{Peetre}
J.~Peetre.
\newblock On Hadamard’s variational formula.
\newblock In {\em Journal of Differential Equations}, 36(3):335--346, 1980.  



\bibitem{EGJK2009}
S.~J.~Rahi, T.~Emig, N.~Graham, R.~L.~Jaffe, and M.~Kardar.
\newblock Scattering theory approach to electrodynamic Casimir forces.
\newblock {\em Phys. Rev. D}, 80(8):085021, 2009.



\bibitem{renne71}
M.~J.~Renne.
\newblock Microscopic theory of retarded van der {W}aals forces between macroscopic dielectric bodies,
\newblock{\em Physica}, 56: 125--137, 1971. 



\bibitem{roy1999improved}
A. Roy, C.-Y. Lin, and U.~Mohideen.
\newblock Improved precision measurement of the {C}asimir force.
\newblock {\em Phys. Rev. D}, 60(11):111101, 1999.


\bibitem{SuzukiTsuchiya2016}
T.~Suzuki, and T.~Tsuchiya.
\newblock First and second Hadamard variational formulae of the Green function for general domain perturbations.
\newblock {\em J. Math. Soc. Japan}, 68(4):1389--1419, 2016.


\bibitem{reid2009efficient}
M.T.~H. Reid, A.~W. Rodriguez, J. White, and S.~G. Johnson.
\newblock Efficient computation of {C}asimir interactions between arbitrary $3d$ objects.
\newblock {\em Phys. Rev. Lett.}, 103(4):040401, 2009.

\bibitem{rodriguez2011casimir}
A.~W. Rodriguez, F. Capasso, and S.~G. Johnson.
\newblock The {C}asimir effect in microstructured geometries.
\newblock {\em Nature photonics}, 5(4):211--221, 2011.


\bibitem{sparnaay1958measurements}
M.~J. Sparnaay.
\newblock Measurements of attractive forces between flat plates.
\newblock {\em Physica}, 24(6):751--764, 1958.


\bibitem{scharf1992casimir}
G.~Scharf and W.F.~Wreszinski.
\newblock On the {C}asimir effect without cutoff.
\newblock {\em Foundations of Physics Letters}, 5(5):479--487, 1992.


\bibitem{ShubinPseudos}
M.~A. Shubin.
\newblock {\em Pseudodifferential operators and spectral theory}.
\newblock Springer-Verlag, Berlin, second edition, 2001.
\newblock Translated from the 1978 Russian original by Stig I. Andersson.

\bibitem{MR4162947}
A.~Strohmaier and A.~Waters.
\newblock Geometric and obstacle scattering at low energy.
\newblock{\em Comm. Partial Differential Equations}, 45:1451--1511, 2020.
		
\bibitem{SVW02} A. Strohmaier and R. Verch and M. Wollenberg, Microlocal analysis of quantum fields on curved space-times: Analytic wave front sets and {Reeh-Schlieder} theorems, Journal of Mathematical Physics, {\bf 43} (2002), no. 11, 5514-5530, 2002.


\bibitem{Taylorbook2}
M.~E. Taylor.
\newblock {\em Partial differential equations {II}. {Q}ualitative studies of
	linear equations}, volume 116 of {\em Applied Mathematical Sciences}.
\newblock Springer, New York, second edition, 2011.

\bibitem{PseudoTaylor}
M.~E. Taylor.
\newblock {\em Pseudodifferential operators}, volume 34, {\em Princeton Mathematical Series}.
\newblock Princeton University Press, Princeton, N.J., 1981.



\end{thebibliography}
\end{document}